\documentclass[11pt]{article}
\usepackage[margin=.8in,left=.8in]{geometry}
\usepackage{amsmath}
\usepackage{amsfonts}
\usepackage{amssymb}
\usepackage{amsthm}
\usepackage{mathtools}
\usepackage{subcaption}
\usepackage{graphicx}
\usepackage{color}
\usepackage{xcolor}
\usepackage{xfrac}
\usepackage{stmaryrd}
\usepackage[utf8]{inputenc}
\usepackage{rotating}
\usepackage{hyperref}
\usepackage[all]{xy}

\usepackage{lmodern} % for  \bf and \it at the same time
\usepackage{ulem}

\usepackage{tikz}
\usetikzlibrary{decorations.pathmorphing}
\usepackage{tikz-cd}
\usetikzlibrary{calc}
\usetikzlibrary{matrix}
\usetikzlibrary{positioning,arrows,shapes}
\usetikzlibrary{intersections,shapes.arrows, calc}
\usetikzlibrary{patterns}
\usetikzlibrary{mindmap} % LATEX and plain TEX
\usetikzlibrary[mindmap] % ConTEXt

\usepackage{lipsum}
\usepackage{adjustbox}

\definecolor{plum}{rgb}{0.36078, 0.20784, 0.4}
\definecolor{chameleon}{rgb}{0.30588, 0.60392, 0.023529}
\definecolor{cornflower}{rgb}{0.12549, 0.29020, 0.52941}
\definecolor{scarlet}{rgb}{0.8, 0, 0}
\definecolor{brick}{rgb}{0.64314, 0, 0}
\definecolor{sunrise}{rgb}{0.80784, 0.36078, 0}
\definecolor{lightblue}{rgb}{0.15,0.35,0.75}
\definecolor{carolina}{RGB}{153, 186, 221}
\definecolor{darkblue}{rgb}{0.05,0.25,0.65}

\usepackage{multirow}

\usepackage{stmaryrd}    % for \sslash

\usepackage{enumerate} %Roman enumerate

\usepackage{helvet}   %vertical spacing in tabular

%
%

%Small coprod
\usepackage{mathptmx}
\usepackage{amsmath}
\usepackage{graphicx}

\usepackage{floatflt}  %Floating tables
\usepackage{array}     % specifying size of tables
\newcolumntype{L}[1]{>{\raggedright\let\newline\\\arraybackslash\hspace{0pt}}m{#1}}
\newcolumntype{C}[1]{>{\centering\let\newline\\\arraybackslash\hspace{0pt}}m{#1}}
\newcolumntype{R}[1]{>{\raggedleft\let\newline\\\arraybackslash\hspace{0pt}}m{#1}}

\setcounter{tocdepth}{2}

%\usepackage{diagbox}  %diagonal in table

   %for overset and underset at the same time

%Reference https://www.physicsforums.com/threads/latex-superscript-subscript-together.644114/

%For raising subscripts
\makeatletter
\newcommand{\raisemath}[1]{\mathpalette{\raisem@th{#1}}}
\newcommand{\raisem@th}[3]{\raisebox{#1}{$#2#3$}}
\makeatother

\usepackage{tabularx}   %For sizing tabular entries

%%%%%%%%%%%%%%%%%

\usepackage[new]{old-arrows}   %For \longhookrightarrow

%%%%%%%%%%%%%%%%%
%%% COLOURS

%\definecolor{amber}{RGB}{255,126,0}
%\definecolor{celblue}{RGB}{69,158,214}

%\hypersetup{citecolor=orange}
%\hypersetup{linkcolor=purple}

%%%%%%%%%%%%%%%%%
%%% CUSTOM XY
\newdir{> }{{}*!/10pt/@{>}}

%Group action
%\usepackage{mathabx,epsfig}
%\def\acts{\mathrel{\reflectbox{$\righttoleftarrow$}}}

%%%%%%%%%%%%%%%%%

%FOR HOMOTOPY QUOTIENT
%\newcommand{\dslash}{\hspace{-1mm}\sslash \hspace{-1mm}}

%For chi at the same vertical level

 % inner command, used by \rchi

% \tensor and \multiscript
\makeatletter
\newif\if@sup
\newtoks\@sups
\def\append@sup#1{\edef\act{\noexpand\@sups={\the\@sups #1}}\act}%
\def\reset@sup{\@supfalse\@sups={}}%
\def\mk@scripts#1#2{\if #2/ \if@sup ^{\the\@sups}\fi \else%
  \ifx #1_ \if@sup ^{\the\@sups}\reset@sup \fi {}_{#2}%
  \else \append@sup#2 \@suptrue \fi%
  \expandafter\mk@scripts\fi}
\def\tensor#1#2{\reset@sup#1\mk@scripts#2_/}
\def\multiscripts#1#2#3{\reset@sup{}\mk@scripts#1_/#2%
  \reset@sup\mk@scripts#3_/}
\makeatother

% \slash
\makeatletter
\newbox\slashbox \setbox\slashbox=\hbox{$/$}
\def\itex@pslash#1{\setbox\@tempboxa=\hbox{$#1$}
  \@tempdima=0.5\wd\slashbox \advance\@tempdima 0.5\wd\@tempboxa
  \copy\slashbox \kern-\@tempdima \box\@tempboxa}
\def\slash{\protect\itex@pslash}
\makeatother

% math-mode versions of \rlap, etc
% from Alexander Perlis, "A complement to \smash, \llap, and lap"
%   http://math.arizona.edu/~aprl/publications/mathclap/
\def\clap#1{\hbox to 0pt{\hss#1\hss}}

% Renames \sqrt as \oldsqrt and redefine root to result in \sqrt[#1]{#2}
\let\oldroot\root
\def\root#1#2{\oldroot #1 \of{#2}}
\renewcommand{\sqrt}[2][]{\oldroot #1 \of{#2}}

% Manually declare the txfonts symbolsC font
\DeclareSymbolFont{symbolsC}{U}{txsyc}{m}{n}
\SetSymbolFont{symbolsC}{bold}{U}{txsyc}{bx}{n}
\DeclareFontSubstitution{U}{txsyc}{m}{n}

% Manually declare the stmaryrd font
\DeclareSymbolFont{stmry}{U}{stmry}{m}{n}
\SetSymbolFont{stmry}{bold}{U}{stmry}{b}{n}

% Manually declare the MnSymbolE font
\DeclareFontFamily{OMX}{MnSymbolE}{}
\DeclareSymbolFont{mnomx}{OMX}{MnSymbolE}{m}{n}
\SetSymbolFont{mnomx}{bold}{OMX}{MnSymbolE}{b}{n}
\DeclareFontShape{OMX}{MnSymbolE}{m}{n}{
    <-6>  MnSymbolE5
   <6-7>  MnSymbolE6
   <7-8>  MnSymbolE7
   <8-9>  MnSymbolE8
   <9-10> MnSymbolE9
  <10-12> MnSymbolE10
  <12->   MnSymbolE12}{}

% Declare specific arrows from txfonts without loading the full package
%\makeatletter
%\def\re@DeclareMathSymbol#1#2#3#4{%
%    \let#1=\undefined
%    \DeclareMathSymbol{#1}{#2}{#3}{#4}}
%\re@DeclareMathSymbol{\neArrow}{\mathrel}{symbolsC}{116}
%\re@DeclareMathSymbol{\neArr}{\mathrel}{symbolsC}{116}
%\re@DeclareMathSymbol{\seArrow}{\mathrel}{symbolsC}{117}
%\re@DeclareMathSymbol{\seArr}{\mathrel}{symbolsC}{117}
%\re@DeclareMathSymbol{\nwArrow}{\mathrel}{symbolsC}{118}
%\re@DeclareMathSymbol{\nwArr}{\mathrel}{symbolsC}{118}
%\re@DeclareMathSymbol{\swArrow}{\mathrel}{symbolsC}{119}
%\re@DeclareMathSymbol{\swArr}{\mathrel}{symbolsC}{119}
%\re@DeclareMathSymbol{\nequiv}{\mathrel}{symbolsC}{46}
%\re@DeclareMathSymbol{\Perp}{\mathrel}{symbolsC}{121}
%\re@DeclareMathSymbol{\Vbar}{\mathrel}{symbolsC}{121}
%\re@DeclareMathSymbol{\sslash}{\mathrel}{stmry}{12}
%\re@DeclareMathSymbol{\bigsqcap}{\mathop}{stmry}{"64}
%\re@DeclareMathSymbol{\biginterleave}{\mathop}{stmry}{"6}
%\re@DeclareMathSymbol{\invamp}{\mathrel}{symbolsC}{77}
%\re@DeclareMathSymbol{\parr}{\mathrel}{symbolsC}{77}
%\makeatother

% \llangle, \rrangle, \lmoustache and \rmoustache from MnSymbolE
\makeatletter
\def\Decl@Mn@Delim#1#2#3#4{%
  \if\relax\noexpand#1%
    \let#1\undefined
  \fi
  \DeclareMathDelimiter{#1}{#2}{#3}{#4}{#3}{#4}}
\def\Decl@Mn@Open#1#2#3{\Decl@Mn@Delim{#1}{\mathopen}{#2}{#3}}
\def\Decl@Mn@Close#1#2#3{\Decl@Mn@Delim{#1}{\mathclose}{#2}{#3}}
\Decl@Mn@Open{\llangle}{mnomx}{'164}
\Decl@Mn@Close{\rrangle}{mnomx}{'171}
\Decl@Mn@Open{\lmoustache}{mnomx}{'245}
\Decl@Mn@Close{\rmoustache}{mnomx}{'244}
\makeatother

% Widecheck
\makeatletter
\DeclareRobustCommand\widecheck[1]{{\mathpalette\@widecheck{#1}}}
\def\@widecheck#1#2{%
    \setbox\z@\hbox{\m@th$#1#2$}%
    \setbox\tw@\hbox{\m@th$#1%
       \widehat{%
          \vrule\@width\z@\@height\ht\z@
          \vrule\@height\z@\@width\wd\z@}$}%
    \dp\tw@-\ht\z@
    \@tempdima\ht\z@ \advance\@tempdima2\ht\tw@ \divide\@tempdima\thr@@
    \setbox\tw@\hbox{%
       \raise\@tempdima\hbox{\scalebox{1}[-1]{\lower\@tempdima\box
\tw@}}}%
    {\ooalign{\box\tw@ \cr \box\z@}}}
\makeatother

% \mathraisebox{voffset}[height][depth]{something}
%\makeatletter
%\NewDocumentCommand\mathraisebox{moom}{%
%\IfNoValueTF{#2}{\def\@temp##1##2{\raisebox{#1}{$\m@th##1##2$}}}{%
%\IfNoValueTF{#3}{\def\@temp##1##2{\raisebox{#1}[#2]{$\m@th##1##2$}}%
%}{\def\@temp##1##2{\raisebox{#1}[#2][#3]{$\m@th##1##2$}}}}%
%\mathpalette\@temp{#4}}
%\makeatletter

% udots (taken from yhmath)
\makeatletter
\def\udots{\mathinner{\mkern2mu\raise\p@\hbox{.}
\mkern2mu\raise4\p@\hbox{.}\mkern1mu
\raise7\p@\vbox{\kern7\p@\hbox{.}}\mkern1mu}}
\makeatother

%% Fix array

%% \itexnum is a noop

%% Renaming existing commands

%\newcommand{\curvearrowbotright}{\curvearrowright}

%\newcommand{\omicron}{o}

%\newcommand{\widebar}{\overline}

\newcommand{\sset}{s\mathcal{S}{\mathrm{et}}}
\newcommand{\set}{{\mathcal{S}\mathrm{et}}}

\newcommand{\PSh}{{\mathcal{P}\mathrm{Sh}}}
\def\sPSh{\PSh_\Delta}

\newcommand{\cartsp}{{\sf Cart}}

%\newcommand{\chp}{\ensuremath{\mathscr{C}\mathrm{h}^{+}}}
%\newcommand{\chh}{\ensuremath{\mathscr{C}\mathrm{h}}}
%\newcommand{\Top}{\ensuremath{\mathscr{T}\mathrm{op}}}
%\newcommand{\man}{\ensuremath{\mathscr{M}\mathrm{an}}}
%\newcommand{\modu}{\ensuremath{\mathscr{M}\mathrm{od}}}
%\newcommand{\sab}{\ensuremath{s\mathscr{A}\mathrm{b}}}
%\newcommand{\ab}{\ensuremath{\mathscr{A}\mathrm{b}}}
%\newcommand{\psh}{\ensuremath{\mathscr{P}\mathrm{sh}}}
%\newcommand{\proj}{\ensuremath{\mathrm{proj}}}
%\newcommand{\loc}{\ensuremath{\mathrm{loc}}}
%\newcommand{\quillen}{\ensuremath{\mathrm{Quillen}}}
%\newcommand{\C}{\ensuremath{\mathscr{C}}}
%\newcommand{\D}{\ensuremath{\mathscr{D}}}
%\newcommand{\E}{\ensuremath{\mathscr{E}}}
%\newcommand{\vect}{\ensuremath{\mathscr{V}\mathrm{ect}}}
%\newcommand{\syncartsp}{\mathscr{C}\mathrm{art}\mathscr{S}\mathrm{p}_{synth}}

%pullback
%\newcommand{\pullbackcorner}[1][dr]{\save*!/#1+1.2pc/#1:(1,-1)@^{|-}\restore}
%\newcommand{\pushoutcorner}[1][dr]{\save*!/#1-1.2pc/#1:(-1,1)@^{|-}\restore}

 % classifying space
 % etale classifying space
 % delooping

 % smooth functions
\def\hocolim{\mathop{\rm hocolim}}

\def\1{{\bf 1}}

\def\<{\langle}
\def\>{\rangle}

\newcommand{\C}{\ensuremath{\mathbb{C}}}
\newcommand{\HH}{\ensuremath{\mathbb{H}}}
\newcommand{\into}{\hookrightarrow}

\newcommand{\R}{\ensuremath{\mathbb R}}
\newcommand{\Z}{\ensuremath{\mathbb Z}}
\newcommand{\Q}{\ensuremath{\mathbb Q}}

\renewcommand{\(}{\begin{equation}}
\renewcommand{\)}{\end{equation}}
\newcommand{\bea}{\begin{eqnarray*}}
\newcommand{\eea}{\end{eqnarray*}}

\usepackage{cleveref}

\crefformat{section}{\S#2#1#3} % see manual of cleveref, section 8.2.1
\crefformat{subsection}{\S#2#1#3}
\crefformat{subsubsection}{\S#2#1#3}

 %Theorem Environments
\theoremstyle{italics}
\newtheorem{theorem}{Theorem}[section]
\newtheorem{lemma}[theorem]{Lemma}
\newtheorem{prop}[theorem]{Proposition}
\newtheorem{cor}[theorem]{Corollary}

\theoremstyle{definition}
\newtheorem{defn}[theorem]{Definition}

\newtheorem{example}[theorem]{Example}
\newtheorem{examples}[theorem]{Examples}

\newtheorem{remark}[theorem]{Remark}
\newtheorem{note[theorem]}{Note}

\usepackage{amsfonts}
\usepackage{colortbl}

%-------------------------------------------------

\begin{document}

\title{Differential cohomotopy versus  differential cohomology for M-theory
\\
 and differential lifts of Postnikov towers}

\author{Daniel Grady, \; Hisham Sati
}
%\address{Mathematics, Division of Science, New York University Abu Dhabi, UAE.}

\maketitle

\begin{abstract}
  We compare the description of the M-theory form fields via cohomotopy versus that via 
  integral cohomology. The conditions for lifting the latter to the former are identified
  using obstruction theory in the form of Postnikov towers, where torsion plays a central role. 
  A subset of these conditions is shown to correspond compatibly to existing consistency
  conditions, while the rest are new and point to further consistency requirements for M-theory. 
    Bringing in the geometry  leads to  a differential refinement  
    of the Postnikov tower, which should be of independent interest.  
  This provides another confirmation  that cohomotopy is the proper generalized cohomology 
  theory to describe these fields.
\end{abstract}

\vspace{-.6cm}

\tableofcontents

%%%%%%%%%%%%%%%% 
\section{Introduction}
%%%%%%%%%%%%%

One of the main problems that would shed some light on M-theory is the precise 
nature of the C-field in the theory. Earlier literature viewed the C-field 
 as a cocycle in cohomology, or a higher gauge field, with some extra structure 
(see \cite{DFM}\cite{AJ}\cite{tcu}\cite{SSS3}\cite{E8}\cite{FSS1}).
More recently, homotopy theory has been used to describe the dynamics of the 
C-field in M-theory, leading to the proposal in \cite{S-top} that it is quantized 
in \textit{cohomotopy}  cohomology theory $\pi^\bullet$ \cite{Borsuk36}\cite{Spanier49}. 
Later it was shown that  cohomotopy captures the fields 
in M-theory very nicely, for the dynamics in the 
\textit{rational} approximation
\cite{FSS13}\cite{FSS2}\cite{FSS3}\cite{FSS16b}\cite{FSS18}\cite{GaugeEnhancement} \cite{ADE};
see \cite{FSS19} for a review.

\medskip
At a first approximation, ignoring torsion, we have that rational cohomology is essentially equivalent 
to rational cohomotopy in the same degree. For the \textit{stable} case, 
for degree four corresponding to the field strength 4-form $G_4$, 
we have an isomorphism 
%this is the following statement 
\(
H^4(Y^{11}; \Q)  \cong \pi^4_s(Y^{11}) \otimes \Q\;.
\label{eq-rational} 
\)
 %e.g. Hilton 
In the \textit{unstable} case, schematically, we have 
\(
\text{Rational cohomotopy} \;\; \longleftrightarrow  \;\; \text{Rational cohomology} + 
\text{trivialization of the cup square}. 
\label{eq-rational} 
\)
In this case we do not have an isomorphism; for example for $Y^{11}=S^7\times \R^4$,
 we have $H^4(S^7\times \R^4;\Q)=0$, while $\pi^4(S^7\times \R^4)\otimes \Q\cong\Q$.
Hence, it might seem like there is nothing to be gained here by bringing in (co)homotopy 
theory. Nevertheless, somewhat surprisingly, placing the problem in homotopy theory, 
even rationally, has brought in interesting structures beyond just this (as indicated above). 
This interesting rational structure come from the unstable case. Rationally and stably 
$S^4_{\Q}$ is just the Eilenberg-MacLane space $K(\Q,4)$ so there is not much new to say. 
On the other hand, \textit{integrally} and \textit{stably} we do see new effects, which is what we highlight 
 here. 
 
 \medskip
More remarkably, going beyond the rational approximation,  the cancellation 
of the main anomalies of M-theory follows naturally from cohomotopy. It was 
shown in \cite{FSS-twist}\cite{FSS-level} that the C-field charge quantization 
in twisted cohomotopy implies various fundamental anomaly cancellation and 
quantization conditions, with similar effects for D-branes and orientifolds 
\cite{BSS}\cite{SS}. This led to the formulation of: 

\vspace{3mm}
\hypertarget{HypothesisH}{}

\vspace{-3mm} 
\begin{center}
\noindent  \fbox{{\bf Hypothesis H.} {\bf \it  The C-field is charge-quantized
in cohomotopy theory}, even non-rationally. }
\end{center} 

\vspace{-1mm}
\noindent Rational cohomotopy of spacetime $Y$ is given by homotopy classes
of maps to the rational 4-sphere, $[Y, S^4_\Q]$, while cohomotopy deals with 
maps to the  standard 4-sphere, equipped with the usual subspace topology,
$[Y, S^4]$. Since we are a priori given the former, 
we ask for a natural lift to the integral level, and whether this would indeed 
give us the latter. 
This amounts to giving the $\Z$-form for the Sullivan algebra associated with the 
rational homotopy type $S^4_\Q$, as  in \cite[p. 246]{FOT}.
Furthermore, we need to know whether or not the result of the rationalization 
is indeed a finite-dimensional space. This would give us a topological space of the same
rational homotopy type of $S^4_\Q$. 
 As we are ultimately interested in differential refinements, we need to 
have this as a finite-dimensional manifold, i.e., the smooth 4-sphere with its 
standard differentiable structure. 
These generally follow via the result of Sullivan
 on realizability; see \cite[Theorem A]{Sull}\cite{Sull71}\cite{SR}. 

%%%%%%%%%%%%%%%%%%%%
%\paragraph{De-rationalizing the 4-sphere.}  
%%%%%%%%%%%%%%%%%%%%%

\medskip
%Let us consider in more detail the transition from the rational 4-sphere to something non-rational, i.e.,
%integral. The most natural candidate is the actual 4-sphere
%itself. 
If we start with the rational 4-sphere $S^4_\Q$, then how can we lift it to an ``integral"
space? We need to `supply the missing' torsion information that was killed upon the reverse 
process of rationalization. We would get a space ${S}_\Z^4$. What is this?  There could be 
many spaces whose rationalizations coincide; in fact infinitely many, measured by the 
Mislin genus \cite{Mi}. This is the case even if the spaces coincide when localized at every 
prime $p$, not just the trivial prime corresponding to rationalization. 
The \textit{extended genus} \cite{Hi} of a space $X$ is the set of homotopy types $[Y]$ 
of nilpotent CW-spaces $Y$ which are locally homotopy equivalent to $X$ at each prime. 
The \textit{Mislin genus} \cite{Mi} of $X$ is defined to be the subset of the extended
genus consisting of those $[Y]$s of finite type. While the Mislin genus can be finite \cite{Wil},
the extended genus is always infinite for nontrivial homotopy types \cite{Mc1}\cite{Mc2}. 
The corresponding spaces might also not 
be finite-dimensional or of finite type; in fact most of them will not be (see \cite{Wil}).

\medskip
However, the actual 4-sphere $S^4$ stands out as not only the most natural but the 
finite-dimensional one. 
$$
\xymatrix{
X \ar@{}[r]|{\cdot \;\; \cdot \;\; \cdot} & S^4 \ar@{}[r]|{\cdot \;\; \cdot \;\; \cdot} & Y
\\
& S^4_\Q 
\ar@/^-.5pc/@{~>}[ul]
\ar@{~>}[u]
\ar@/^.5pc/@{~>}[ur]
 &
}
$$
Aiming for $S^4$ itself, one has to go through the fibration
 $
 S^4_{\tau} \to S^4 \to S^4_\Q
 $,
 where $S^4_\tau$ is the pure torsion part (see, e.g., \cite{MS}).  This is the source of the torsion 
 obstructions we will identify. 
Now that we have explained what is involved in moving beyond the rational approximation,
we also found that the 4-sphere itself is the most natural. Hence we will adopt this 
perspective henceforth and consider a lift of the form 
 \(
 \label{toplift}
 \xymatrix{
 &&& S^4 \ar[d]
 \\
 Y 
 \ar@{..>}[urrr]^-{
                \mbox{\!\!
        \tiny
        \color{blue}
        \begin{tabular}{c}
         Integral,
          \\
          torsion
            \end{tabular}
      \!}
              }
  \ar[rrr]|-{
                \mbox{\!\!
        \tiny
        \color{blue}
        \begin{tabular}{c}
         Rational,
          \\
         non-torsion
            \end{tabular}
      \!}
              } 
&&&
S^4_\Q
 }
\)
With $S^4$ as the proper lift of the rational homotopy type, 
 the natural question then becomes: before throwing in additional structures, such as 
 differential refinements, how different is the description via (twisted) 
 cohomotopy from the description via (twisted/shifted) integral cohomology? 
 To answer this, we would like to start with integral cohomology as describing the (shifted/twisted) 
 C-field and then transition to a description in terms of cohomotopy. By representability,
 this amounts to lifting 

 \vspace{-4mm} 
 \(
 \label{S4toKZ4}
 \xymatrix{
 &&& S^4 \ar[d]
 \\
 Y 
 \ar@{..>}[urrr]^-{
                \mbox{\!\!
        \tiny
        \color{blue}
        \begin{tabular}{c}
         Nonlinear
          \\
          prequantum
            \end{tabular}
      \!}
              }
  \ar[rrr]|-{
                \mbox{\!\!
        \tiny
        \color{blue}
        \begin{tabular}{c}
         Linear
          \\
          quantum
            \end{tabular}
      \!}
              } 
&&&
K(\Z, 4) \;.
 }
 \)
 The map from the 4-sphere to the Eilenberg-MacLane space $K(\Z, 4)$ 
 assembles, upon taking homotopy  classes, into the integral cohomology 
 $H^4(S^4; \Z)$ generated by a fundamental class.

 \medskip
It turns out that such maps to spheres are 
 quite involved, but they can be seen to arise via an infinite number of intermediate 
 maps,  nonetheless packaged nicely in terms of other Eilenberg-MacLane spaces. 
  The series of approximations of a space by Eilenberg-MacLane spaces,
 and starting with one, assemble into its \textit{Postnikov tower}. The successive liftings from 
 one level to the other is governed by obstruction theory. 
 The 4-sphere admits a Postnikov tower of principal fibrations by virtue of it being 
 simply connected (see \cite{MP}). Note that the Postnikov tower for odd spheres 
 is easier to deal with; see  \cite[Sec. 27.4]{FF}, while 
even spheres are much more involved.

\medskip
Even when we adopt the 4-sphere and start going through the Postnikov tower
and identify the obstructions, there remains a question of what happens beyond the stage
seen by spacetime, beyond which there are infinite number of layers of the sphere, since
the number of nontrivial homotopy groups is countably infinite. That is, we ask:
which spaces look like the 4-sphere through the eyes of the  4th Postnikov section functor?
This can be answered using the notion of a \textit{Postnikov genus}
\cite{MS}. Unlike the case of odd sphere, this is quite involved for even spheres, including 
the 4-sphere. 
%It turns out that there this set is uncountably infinite. 

 \medskip
Notwithstanding the above subtleties, overall what we have is a description of the form 
 \vspace{-1mm} 
 $$
\fbox{ 
$\text{C-field in (twisted)}\; \pi^4(Y^{11}) \; \Longleftrightarrow \; \text{C-field in (twisted)}\; H^4(Y^{11}; \Z) 
 + \text{nontrivial conditions}$. 
 }
 $$
 Explicitly then, one of the main goals of this paper is to unpack and describe these nontrivial conditions
arising from obstruction theory and highlight what they correspond to on the physics side. 

\vspace{-2mm} 
\paragraph{Differential refinement.}
Ultimately we would like refine the topological lift \eqref{toplift} to 
 a geometric lift at the level of smooth stacks of the form 
 \(
 \label{geomlift} 
 \xymatrix@C=4em{
 &&& \widehat{S^4} \ar[d]
 \\
 Y 
 \ar@{..>}[urrr]^-{
                \mbox{\!\!\!\!\!\!\!\!\!\!\!
        \tiny
        \color{blue}
        \begin{tabular}{c}
         Differential cohomotopy, 
          \\
          prequantum and geometric 
            \end{tabular}
      \!}
              }
  \ar[rrr]|-{
                \mbox{\!\!
        \tiny
        \color{blue}
        \begin{tabular}{c}
         Differential cocycle,
         \\
          quantum and geometric 
            \end{tabular}
      \!}
              } 
&&&
\mathbf{B}^3U(1)_\nabla\;.
 }
 \)
 where $\widehat{S^4}$ is the differential refinement of the 4-sphere 
 (see \cite{FSS2}\cite{FSS3}\cite{SS2} for complementary approaches)
 and $\mathbf{B}^3U(1)_\nabla$ is the smooth stack of 3-bundles with 
 connections (see \cite{Cech}\cite{SSS3}\cite{FSS14c}\cite{FSS-cup} \cite{Urs}).
This would require a differential refinement of the Postnikov tower which uses
refinement of cohomology operations, primary  \cite{GS2} and 
secondary  \cite{GS1}. 

\medskip
In between full non-abelian
cohomotopy and abelian ordinary cohomology sits
\textit{stable cohomotopy}, represented not by actual spheres,
but by their stabilization to the sphere spectrum. There is a description of the 
C-field in each one of these flavors (see \cite{FSS-twist}\cite{GaugeEnhancement}):

\begin{center}
  \begin{tabular}{|c||c|c|c|c|}
    \hline
    \begin{tabular}{c}
  \bf    Cohomology
      \\
  \bf    theory
    \end{tabular}
    &
    \begin{tabular}{c}
      Rational
      \\
      cohomology
    \end{tabular}
    &
    \begin{tabular}{c}
      Integral
      \\
      cohomology
    \end{tabular}
    &
    \begin{tabular}{c}
      Stable
      \\
      cohomotopy
    \end{tabular}
    &
    \begin{tabular}{c}
      Non-abelian
      \\
      cohomotopy
    \end{tabular}
    \\
    \hline
    \hline
  \rule{0pt}{2.5ex}  \bf Cocycle
    &
    $G_4$
    &
    $\widetilde G_4$
    &
    $\Sigma^\infty c$
    &
    $c$
    \\
   \hline
  \end{tabular}
\end{center}
 %\noindent ~~~~~~~~~~~~~~~~~~~~~~~~~~~~~~~~~{\bf \footnotesize Table 2 } -- 
%{\it \footnotesize Incremental approximations to full non-abelian Cohomotopy.}
\vspace{-1mm}
We will consider both stable and unstable forms of cohomotopy, with more emphasis on the latter. 
We will also not make a distinction in notation, and use $G_4$ uniformly. 

\medskip
All the conditions we will encounter are torsion in the topological case  (in the stable range, at least), 
with further contribution from refinement of integral classes in the differential case. 
Note that the M-theory fields have been considered from the point of view of Morava 
K-theory $K(n)$ \cite{SW}\cite{SY}, which sits somewhat in between rational cohomology
$H^*(Y^{11}; \Q)\cong K(0)(Y^{11})$ and mod $p$ cohomology 
$H^*(Y^{11}; \Z_p)\cong K(\infty)(Y^{11})$.

\medskip
The paper is organized as follows.
We consider the systematic comparison between the cohomology and cohomotopy
treatments of the C-field in \cref{Sec-EMvsS4}. First we consider 
$\Z_2$ coefficients  in \cref{prime2}, the obstructions classes for which we 
identify in \cref{Sec-ident}, and then consider $\Z_3$ and $\Z_5$ coefficients
in \cref{prime3}. Putting all together gives us the Postnikov tower with 
$\Z$ coefficients in \cref{Sec-Z}.  These stable considerations are then extended to 
 unstable 4-cohomotopy in \cref{Sec-unstable}. Then we describe physical manifestations  
 and corresponding examples in \cref{Sec-Ex}. 
 From the topological case we move to differential refinements in \cref{Sec-diffcoh},
 where we first describe differential cohomotopy in 
\cref{Sec-diffcoh},  then characterize the torsion obstructions in differential cohomology
in \cref{Sec-TorDiff}. This allows us to compare differential cohomotopy and  
differential cohomology in 
\cref{Sec-DiffCohvsDiffCoh}, which serves as a refinement of the topological description in 
\cref{Sec-EMvsS4}, and in which we also provide examples and main applications to M-theory.

%%%%%%%%%%%%%
\section{Cohomological interpretation of cohomotopy: $K(\Z, 4)$ vs. $S^4$}
\label{Sec-EMvsS4}
%%%%%%%%%%%%%%

We consider the comparison between degree four integral cohomology $H^4(Y; \Z)$, 
given by maps from $Y$ to the Eilenberg-MacLane space $K(\Z, 4)$, and degree four 
cohomotopy $\pi^4(Y)$, given by maps to the 4-sphere $S^4$, as in diagram \eqref{S4toKZ4}.
To that end, we will provide a description of cohomotopy via integral and mod $p$ 
cohomology, together with corresponding cohomology operations leading to conditions
on the fields. We will start with mod $p$ coefficients, for $p\in \{2, 3, 5\}$,
 and then assemble into integral coefficients.

%%%%%%%%%%%%%%%
\subsection{$\Z_2$ coefficients } 
\label{prime2}
%%%%%%%%%%%%%%%%

We will use obstruction theory, one dimension at a time, in the range of 
dimensions relevant for M-theory, and extensively applying 
the constructions and presentation of the Postnikov tower in \cite{MT}. 
Since degree $n$ cohomotopy of spaces of dimension less than $n$ is trivial 
by $n$-connectedness of the target sphere (see \cite{Wh}),
we will start with dimension four.

\begin{remark}[Notation on the tower]
  We will be constructing a tower of fibrations $F_n\to X_n\to X_{n-1}$, called {\it stages}. 
  We will use the following notation and conventions for operations on cohomology classes.
  \vspace{-2mm} 
  \begin{itemize}
    \setlength\itemsep{-4pt}
  \item The operation $i^*$ will always denote pullback along the fiber inclusion $F_n\into X_n$. 
  We leave the stage implicit in the notation, however the argument of $i^*$ will make this explicit.
  \item The operation $p^*$ will always denote pullback to the total space, again leaving the fibration implicit.
    \item The operation $\tau$ will always denote the transgression from the base space to the fiber.
  \item Whenever a power operation acts on a class in cohomology which does not have the correct 
  coefficients, this should always be understood as the power operation applied to the appropriate mod p reduction.
    \end{itemize}
  \end{remark}

%%%%%%%%%%%%%%
\paragraph{\underline{Dimension 4:}}
%%%%%%%%%%%%%%%%%%
The space $K(\Z, 4)$ has the same cohomology and homotopy groups as $S^4$ up 
to dimension 4, as $H^4(S^4; \Z) \cong \Z \cong H^4(K(\Z, 4); \Z)$, which in fact
holds with any coefficients.  This means that if $Y$ has dimension 4 then the two descriptions 
agree. In fact, the Hopf degree theorem (see \cite[IX (5.8)]{Kosinski93}) in our case
 states that the 4th cohomotopy classes  
 $[Y \overset{f}{\to} S^4]\in \pi^4(Y)$ of $Y$ are in bijection with 
the \textit{degree} $\deg(f) \in \Z$ of the representing functions, hence that there is a bijection
$$
\xymatrix{{\rm deg}: 
\pi^4(Y):= [Y, S^4] \ar[rr]^-{S^4 \to K(\Z, 4)}_-\simeq && H^4(Y; \Z):= [Y, K(\Z, 4)] \cong \Z
}
$$
from the 4th cohomotopy to the 4th integral cohomology.
This map ${\rm deg}$
is given by ${\rm deg}(f)=f^*([S^4]^*)$, with $[S^4]\in H_4(Y; \Z)$ the fundamental 
homology class of $S^4$ and $[S^4]^*$ is its dual.
%(see, e.g., \cite[Prop. 2.2]{OS}).
We can form factorization $Y \to S^4 \overset{k}{\to} S^4$ as maps from 
 $Y$ to $S^4$ of degree $k \geq 1$. 
Then $[k]=k \iota_4$ in $\pi_4(S^4)\cong \Z$, where $\iota_4=[1]$
is the class of the identity map. In this (and the next) dimension $f^*: \pi^4(S^4) \to \pi^4(Y)$ 
is a homomorphism and we have $\pi^4(S^4) \cong \pi_4(S^4)$ as 
groups, then $[k \circ f]=f^*[k]=kf^*(\iota_4)= k [f]$ in $\pi^4(Y)$. 
See, e.g., \cite{Mil}\cite{OR}. 
 
\medskip
This then captures the essence of cohomotopy for 
spacetimes with only a 4-dimensional manifold piece $X^4$ being topologically 
 nontrivial.

%%%%%%%%%%%%%%
\paragraph{\underline{Dimension 5:}}
%%%%%%%%%%%%%%%%%%

At this first stage we consider the cohomology group in degree five, $H^5(K(\Z, 4); \Z_2)=0$, 
so there is no obstruction in dimension five, which means that if $Y$ is a 5-dimensional 
spacetime, then every degree 4 class lifts on a five manifold, but not uniquely.
The two descriptions still agree in the sense that there are no obstructions coming 
from cohomotopy beyond what we have in cohomology.

\medskip
However, here an interesting effect occurs, analogous to the 
4-dimensional case. 
 For $Y$ of dimension at most five we can use the results of Pontrjagin and Steenrod
 (see, e.g., \cite[Theorem 16.9]{Ba}).
 For $u\in H^4(S^4; \Z)$ a generator, the \textit{degree} map 
 $
 \deg: [Y, S^4] \to H^4(Y; \Z)
 $
 with $\deg(F)=F^*(u)$ is surjective with inverse image 
 \(
  \label{deg5}
 \deg^{-1}(u)\cong H^5(Y; \Z_2)/{\rm Sq}^2_\Z H^3(Y; \Z)\;.
\)
This places conditions on the cohomology of $Y$ and will be relevant in 
the second obstruction in \cref{Sec-ident} and in Remark \ref{Rem-pure5}
and Remark \ref{Rem-purelyco}.

%%%%%%%%%%%%%%
\paragraph{\underline{Dimension 6:}}
%%%%%%%%%%%%%%%%%%

At this stage we enter the stable range.  The cohomology group in degree six, 
$H^6(K(\Z, 4); \Z_2)\cong \Z_2$, generated by ${\rm Sq}^2 \iota_4$, where
$\iota_4 \in H^4(K(\Z, 4); \Z)\cong \Z$ is the fundamental class  
acted upon by the Steenrod square ${\rm Sq}^2\rho_2: K(\Z, 4) \to K(\Z_2, 6)$,
where $\rho_2$ is mod 2 reduction. 
The first stage  of the Postnikov tower is the pullback along ${\rm Sq}^2\rho_2$ of the 
path-loop fibration of the codomain, i.e.,  
$$
\xymatrix{
K(\Z_2, 5)=\Omega K(\Z_2, 6) \ar[r]& X_1 \ar[d] && PK(\Z_2, 6) \ar[d]
\\
& K(\Z, 4) \ar[rr]^{{\rm Sq}^2\rho_2} && K(\Z_2, 6)
}
$$
with $X_1$ being a better approximation to $S^4$ than $K(\Z, 4)$ is.  Indeed, by definition 
the class ${\rm Sq}^2\rho_2 \in H^6(K(\Z, 4); \Z_2)$ has been killed on $X_1$. 
The fundamental class $\iota_5$ of the fiber $K(\Z_2, 5)$ transgresses to 
${\rm Sq}^2\rho_2 \iota_4$.  The cohomology of $X_1$ can be calculated via the 
Serre long exact sequence 
$$
\xymatrix{
H^*(K(\Z_2, 5); \Z_2) \ar[dr]_-{\tau} & H^*(X_1; \Z_2)  \ar[l]_-{\;\;\; i^*}\\
& H^*(K(\Z, 4); \Z_2)\;. \ar[u]_{p^*}
}
$$
The transgression is given by $\tau(\iota_5)={\rm Sq}^2 \rho_2 \iota_4$, which gives in particular that 
$H^6(X_1;\Z_2)=0$ (using the Adem relation ${\rm Sq}^1{\rm Sq}^2={\rm Sq}^3$).

\medskip
Here we have another interesting effect, which is the last dimension 
in our case where cohomotopy is a \textit{group} as opposed to just a set. 
 The set $[Y, S^4]$ has a natural group structure if $Y$ has dimension at most 6.
 This follows from the fact that $S^4$ and the loop space of its suspension 
 $\Omega \Sigma S^4 \simeq \Omega S^5$ have the same
 homotopy 7-type, by the Freudenthal suspension theorem -- see 
 \cite[Chapter 14]{MT}\cite{Wh}.

%%%%%%%%%%%%%%
\paragraph{\underline{Dimension 7:}}
%%%%%%%%%%%%%%%%%%

The next step is to kill $H^7(X_1; \Z_2)$, obtaining a fiber sequence 
$F_2 \to X_2 \to X_1$, with $X_2$ having cohomology in dimension 7, so 
the same homotopy groups as $S^4$ in dimension 6. 
$$
\xymatrix{
K(\Z_2, 6)=\Omega K(\Z_2, 7) \ar[r]& X_2 \ar[d] && PK(\Z_2, 7) \ar[d]
\\
& X_1 \ar[rr]^-{\alpha_7} && K(\Z_2, 7)\;.
}
$$
The transgression vanishes on ${\rm Sq}^2\iota_5\in H^7(K(\Z_2,5);\Z_2)$ and one gets 
$H^7(X_1; \Z_2) \cong \Z_2$ generated by a class $\alpha_7$ such that 
$$
i^*(\alpha_7)={\rm Sq}^2 \iota_5\;,
$$
where $i^*: H^7(X_1; \Z_2) \longrightarrow H^7(K(\Z_2, 5); \Z_2)$ is the map in \eqref{serrespec}. 
Indeed, if $H^7(X_2; \Z_2)=0$ then $X_2$ is an improvement over $X_1$ 
as an approximation to  the 4-sphere $S^4$. 

%%%%%%%%%%%%%%
\paragraph{\underline{Dimensions 8, 9, 10:}}
%%%%%%%%%%%%%%%%%%
At this level, we consider $H^8(X_1; \Z_2)$. There is a class 
$p^*({\rm Sq}^4 \iota_4)$ and also a class $\beta_8$ such that 
$i^*(\beta_8)= {\rm Sq}^3 \iota_5$, which has an indeterminacy 
since $i^*(p^*({\rm Sq}^4 \iota_4))=0$, but the identification 
does not depend on this choice.  
We need to kill the class $p^*({\rm Sq}^4 \iota_4) \in H^8(X_2; \Z_2)$.
This requires working out the Bockstein relations at this level, which is 
done in \cite[Ch. 12]{MT}. The procedure is to map $X_2$ into $K(\Z_8, 8)$
by a map corresponding to a class which reduces to $p^*({\rm Sq}^4 \iota_4)$ (mod 2). This leads to the fibration 
\begin{equation}\label{x3fibration}
\xymatrix@R=1.5em{
K(\Z_8, 7)=\Omega K(\Z_8, 8) \ar[r]& X_3 \ar[d] && PK(\Z_8, 8) \ar[d]
\\
& X_2 \ar[rr]^-{``{\rm Sq}^4 \iota_4"}  && K(\Z_8, 8)\;,
}
\end{equation}
%
%
%\medskip
%Transgression gives $H^*(X_3; \Z_2)$ as in the table ... 
%
where by the quotations $``-"$ we mean that the map is a lift of the mod 2 class $p^*({\rm Sq}^4\iota_4)$. 
Note that this  process kills not only $H^8(X_2;\Z_2)$, but also $H^9(X_2;\Z_2)$ and $H^{10}(X_2;\Z_2)$. 

%$H^8$ but also $H^9$ and $H^{10}$. 

%%%%%%%%%%%%%%
\paragraph{\underline{Dimension 11:}}
%%%%%%%%%%%%%%%%%%

Next we must kill a class $P_{11} \in H^{11}(X_3; \Z_2)$. The class satisfies $i^*P_{11}={\rm Sq}^4 \iota_7$ (see \cite[page 121]{MT}). It is 
the obstruction for the next fibration 
$$
\xymatrix@R=1.5em{
K(\Z_2, 10)=\Omega K(\Z_2, 11) \ar[r]& X_4 \ar[d] && PK(\Z_2, 11) \ar[d]
\\
& X_3 \ar[rr]^-{P_{11}}  && K(\Z_2, 11)\;.
}
$$

%%%%%%%%%%%%%%
\vspace{-3mm} 
\paragraph{\underline{Dimension 12:}}
%%%%%%%%%%%%%%%%%%

The next step is to kill $H^{12}(X_4; \Z_2)$ by using a map 
$$
\xymatrix{
X_4 \ar[rr]^-{``{\rm Sq}^8 \iota_4"} && K(\Z_{16}, 12)\;,
}
$$
where this class has mod 2 reduction equal to ${\rm Sq}^8\iota_4$. But of course, 
${\rm Sq}^8\iota_4=0$ is automatic, from which we can infer at least that $"{\rm Sq}^8\iota_4"$ 
is a multiple of 2 times the generator. Due to the dimension of spacetime, the obstruction at this 
level is irrelevant for the lifting of $G_4$. However, we will see a twelve manifold appear 
when analyzing the Chern-Simons term in the M-theory action and then this condition will become relevant.

%%%%%%%%%%%%%%%%%%%%%%%%
\subsection{Identifying the obstruction classes}
\label{Sec-ident} 
%%%%%%%%%%%%%%%%%%%%%%%%

We identify the relevant mod 2 classes above via transgressions. Note that in some cases (i.e. $\Z_8$ and $\Z_{16}$, 
we need to pass to lifts of the corresponding mod 2 classes (see \cite[Ch. 12]{MT}): 

\vspace{-3mm} 
\begin{itemize}
\item {\it The transgressions: Universally} We consider 
$\tau: H^j({\rm fiber}; \Z_2) \to H^{j+1}({\rm base}; \Z_2)$. 

\begin{itemize}
\vspace{-2mm} 
\item \fbox{$j=4$} $\iota_4 \in H^4(K(\Z, 4); \Z_2)$.

\item \fbox{$j=5$} We have the class $\iota_5 \in H^5(K(\Z_2, 5); \Z_2)$. Here, under transgression 
\newline $H^5(K(\Z_2, 5); \Z_2) \xrightarrow{\tau} H^6(K(\Z, 4); \Z_2)$, we have $\tau(\iota_5)={\rm Sq}^2 \iota_4$.

%\item \fbox{$j=6$} We start with the class  ${\rm Sq}^1\iota_5 \in H^6(K(\Z_2, 5); \Z_2)$. 
%Under transgression $H^6(K(\Z_2, 5); \Z_2) \xrightarrow{\tau} H^7(K(\Z, 4); \Z_2)$, we 
%have $\tau({\rm Sq}^1\iota_5)={\rm Sq}^3\iota_4$.
%
%\item \fbox{$j=7$} We start with the class  ${\rm Sq}^2\iota_5 \in H^7(K(\Z_2, 5); \Z_2)$. 
%Under transgression $H^7(K(\Z_2, 5); \Z_2) \xrightarrow{\tau} H^8(K(\Z, 4); \Z_2)$, we 
%have $\tau({\rm Sq}^2\iota_5)={\rm Sq}^2\tau(\iota_5)={\rm Sq}^2 {\rm Sq}^2 \iota_4=
%{\rm Sq}^3 {\rm Sq}^1 (\iota_4)= {\rm Sq}^3(0)=0$.

\end{itemize}

\vspace{-4mm} 
\item {\it The transgressions: From $(S^4)_1$.} We consider 
$\tau: H^j({\rm fiber}; \Z_2) \to H^{j+1}((S^4)_1; \Z_2)$. 

\begin{itemize}
\vspace{-2mm} 
\item \fbox{$j=8$} We have a class 
${\rm Sq}^4\iota_4 \in H^8((S^4)_1; \Z_2)$ that survives.

\end{itemize} 

\vspace{-4mm} 
\item {\it The transgressions: From $(S^4)_2$.} We consider 
$\tau: H^j({\rm fiber}; \Z_2) \to H^{j+1}((S^4)_2; \Z_2)$. 

\begin{itemize}
\vspace{-2mm} 
\item \fbox{$j=8$} We have a class 
${\rm Sq}^4\iota_4 \in H^8((S^4)_2; \Z_2)$ that survives.

\end{itemize} 

\vspace{-4mm} 
\item {\it The transgressions: From $(S^4)_3$.} We consider 
$\tau: H^j({\rm fiber}; \Z_2) \to H^{j+1}((S^4)_3; \Z_2)$. 

\begin{itemize}
\vspace{-2mm} 
\item \fbox{$j=10$} We have the class $\iota_{10} \in H^{10}(K(\Z_2, 10); \Z_2)$. 
Under transgression \newline $H^{10}(K(\Z_2, 10); \Z_2) \xrightarrow{\tau} H^{11}((S^4)_3; \Z_2)$, 
we have $\tau(\iota_{10})=P_{11}(\iota_4)$.
\end{itemize} 

\vspace{-4mm} 
\item{\it The transgressions: From $(S^4)_4$.} We consider 
$\tau: H^j({\rm fiber}; \Z_2) \to H^{j+1}((S^4)_4; \Z_2)$. 

\begin{itemize}
\vspace{-2mm} 
\item \fbox{$j=11$} We have the class $\iota_{11} \in H^{11}(K(\Z_{16}, 11); \Z_2)$. 
Under the transgression \newline $H^{11}(K(\Z_{16}, 11); \Z_2) \xrightarrow{\tau} H^{12}((S^4)_4; \Z_2)$, 
we have $\tau(\iota_{11})={\rm Sq}^8\iota_4=0$.
\end{itemize} 
\end{itemize}

\begin{lemma}[2-primary Postnikov tower of $S^4$] 
Overall, we have the Postnikov tower
$$
\xymatrix@C=4em{
%K(\Z_{16}, 11) \ar[r]& X_5 \ar[d] & 
&&&&& S^4 \ar@{}[d]^{\!\!\!\!\! \vdots} &
\\
&&&&K(\Z_2, 10) \ar[r] & (S^4)_4 \ar[r]^-{``{\rm Sq}^8 \iota_4"=0}_-{\color{blue}\text{\tiny holds}}  
  \ar[d] & K(\Z_{16}, 12) 
\\
&&&&K(\Z_8, 7) \ar[r] & (S^4)_3 \ar[r]^-{P_{11}}_-{\color{blue}\text{\tiny fourth obstruction}} 
  \ar[d] & K(\Z_{2}, 11) 
\\
&&&&K(\Z_2, 6) \ar[r] & (S^4)_2 \ar[r]^-{``{\rm Sq}^4 \iota_4"}_-{\color{blue}\text{\tiny third obstruction}} 
  \ar[d] & K(\Z_{8}, 8) 
\\
&&&&K(\Z_2, 5) \ar[r] & (S^4)_1 \ar[r]^-{\alpha_7}_-{\color{blue}\text{\tiny second obstruction}} 
  \ar[d] & K(\Z_{2}, 7) 
\\
Y \ar[rrrrr]|-{
                \mbox{\!\!
        \tiny \bf 
        \color{blue}
        \begin{tabular}{c}
         integral
          \\
           Cohomology
            \end{tabular}
      \!}
              }
\ar@{..>}[urrrrr]|{
                \mbox{\!\!
        \tiny 
        \color{blue}
        \begin{tabular}{c}
         primary
          \\
           lifting
            \end{tabular}
      \!}
              }
\ar@/^1pc/@{..>}[uurrrrr]|-{
                \mbox{\!\!
        \tiny
        \color{blue}
        \begin{tabular}{c}
         secondary
          \\
          lifting
            \end{tabular}
      \!}
              }
\ar@/^2.3pc/@{..>}[uuurrrrr]|-{
                \mbox{\!\!
        \tiny
        \color{blue}
        \begin{tabular}{c}
         third
          \\
           lifting
            \end{tabular}
      \!}
              }
\ar@/^3.5pc/@{..>}[uuuurrrrr]|-{
                \mbox{\!\!
        \tiny
        \color{blue}
        \begin{tabular}{c}
         fourth
          \\
           lifting
            \end{tabular}
      \!}
              }
\ar@/^5pc/@{..>}[uuuuurrrrr]|-{
                \mbox{\!\!
        \tiny \bf 
        \color{blue}
        \begin{tabular}{c}
         degree 4
          \\
           Cohomotopy
            \end{tabular}
      \!}
              }
&&&&& (S^4)_0=K(\Z, 4) \ar[r]^-{{\rm Sq}^2\iota_4}_-{\color{blue}\text{\tiny first obstruction}} 
 & K(\Z_{2}, 6) 
}
$$
\end{lemma}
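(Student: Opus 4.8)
The plan is to stack the five principal fibrations produced in the dimension-by-dimension analyses of dimensions $4$ through $12$ above, in order of increasing degree, and to check that the resulting diagram is the $2$-primary Postnikov tower of $S^4$ through the range at issue. I would begin from the standard existence statement (\cite{MP}): since $S^4$ is $3$-connected with $\pi_4 S^4 \cong \Z$, its Postnikov system can be realized by \emph{principal} fibrations, with bottom stage $(S^4)_0 = K(\Z,4)$, and with each successive stage the pullback of a path--loop fibration
$$
\xymatrix@C=3em{
K(\pi, n) \ar[r] & (S^4)_k \ar[d] & PK(\pi, n+1) \ar[d] \\
& (S^4)_{k-1} \ar[r]^-{k_{n+1}} & K(\pi, n+1) ,
}
$$
where $\pi = \pi_n S^4$ is the first not-yet-installed homotopy group and $k_{n+1} \in H^{n+1}\!\big((S^4)_{k-1};\, \pi_n S^4\big)$ is the corresponding $k$-invariant. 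From the dimension-$6$ step onward we are in the stable range, so the groups that enter are the $2$-primary parts of the stable stems $\pi^s_{n-4}$, namely $\Z_2,\Z_2,\Z_8,0,0,\Z_2,\Z_{16}$ for $n = 5,6,7,8,9,10,11$; this already fixes the Eilenberg--MacLane fibers $K(\Z_2,5),K(\Z_2,6),K(\Z_8,7),K(\Z_2,10),K(\Z_{16},11)$ appearing in the displayed tower, and accounts for the absence of any stage attached to degrees $8$ and $9$.

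Next I would march through the stages, each already carried out above, verifying at each step that the class named there is the only nonzero candidate for the $k$-invariant (up to the indeterminacy noted) and that the Serre spectral sequence of the fibration brings $H^*(-;\Z_2)$ back into agreement with $H^*(S^4;\Z_2)$ in the next band of degrees. Thus: at the first stage $H^5(K(\Z,4);\Z_2)=0$ while $H^6(K(\Z,4);\Z_2)\cong\Z_2$ is generated by ${\rm Sq}^2\iota_4$, so $k_6 = {\rm Sq}^2\rho_2$, and the transgression $\tau(\iota_5) = {\rm Sq}^2\iota_4$ together with the Adem relation ${\rm Sq}^1{\rm Sq}^2 = {\rm Sq}^3$ forces $H^6((S^4)_1;\Z_2) = 0$. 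At the second stage $H^7((S^4)_1;\Z_2)\cong\Z_2$ is generated by a class $\alpha_7$ with $i^*\alpha_7 = {\rm Sq}^2\iota_5$, giving $k_7 = \alpha_7$. At the third stage one passes to the integral--Bockstein bookkeeping of \cite[Ch.~12]{MT}: the class $p^*({\rm Sq}^4\iota_4) \in H^8((S^4)_2;\Z_2)$ admits a lift to $\Z_8$ coefficients, and the fibration with fiber $K(\Z_8,7)$ annihilates $H^8$ and --- the subtle point --- also $H^9$ and $H^{10}$. At the fourth stage the relation $i^*P_{11} = {\rm Sq}^4\iota_7$ pins down $k_{11} = P_{11}$ (\cite[page~121]{MT}); and at the top stage shown the $k$-invariant is a lift to $\Z_{16}$ coefficients of ${\rm Sq}^8\iota_4$, which vanishes automatically because ${\rm Sq}^8$ kills classes of degree $4$, so this $k$-invariant is $2$-divisible in $H^{12}(-;\Z_{16})$. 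Composing the projections of these fibrations yields the vertical tower; the obstructions to lifting a map $Y \to K(\Z,4)$ successively up the tower are exactly the pulled-back classes labelled ``first'' through ``fourth obstruction,'' and the full lift is a map $Y \to S^4$, i.e.\ a cohomotopy class in $\pi^4(Y)$. Matching each transgression against the list compiled in \cref{Sec-ident} closes the argument.

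The main obstacle is the third stage, i.e.\ the step for dimensions $8$, $9$, $10$. One must show that $p^*({\rm Sq}^4\iota_4)$ lifts to an integral class of order exactly $8$ --- not $2$ or $4$ --- which forces one to track the higher Bocksteins on $H^*(K(\Z,4);\Z_2)$ and on the lower fibers, and one must check that the ancillary class $\beta_8$ with $i^*\beta_8 = {\rm Sq}^3\iota_5$, which is only defined modulo the image of $p^*$ since $i^*p^*({\rm Sq}^4\iota_4) = 0$, does not enter the construction. This input is precisely what \cite[Ch.~12]{MT} provides; the role of the present lemma is to organize it, together with the transgression data of \cref{Sec-ident}, into the stated tower. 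Throughout we work $2$-locally, so the diagram is to be read as the $2$-primary tower; the odd-primary stages are built in \cref{prime3} and spliced in with the present one in \cref{Sec-Z}.
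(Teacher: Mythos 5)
Your proposal is correct and follows essentially the same route as the paper: the lemma is simply the assembly of the principal fibrations constructed stage-by-stage in \cref{prime2} (with the $k$-invariants ${\rm Sq}^2\rho_2$, $\alpha_7$, the $\Z_8$-lift $``{\rm Sq}^4\iota_4"$ from \cite[Ch.~12]{MT}, $P_{11}$, and the vanishing $``{\rm Sq}^8\iota_4"$), matched against the transgression data of \cref{Sec-ident}. You correctly isolate the one genuinely delicate point — the order-$8$ Bockstein lift at the third stage and the irrelevance of the ambiguity in $\beta_8$ — and defer it to the same source the paper does.
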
 
This shows that schematically, as in the Introduction,  
$$
``\text{Cohomotopy in degree 4} = \text{Integral 4-cohomology} + \text{four obstructions}".
$$
The main point is to identify the four obstructions above with conditions arising 
from M-theory and provide interpretations  for them. This is done after pulling back to 
spacetime $Y$, where the fundamental class $\iota_4$ pulls back to the field 
$$
G_4 - \tfrac{1}{2}\lambda =:\widetilde{G}_4= f^*\iota_4 
$$
 where $\lambda=\tfrac{1}{2}p_1$ is the first Spin characteristic class of the 
 lifted tangent bundle to spacetime. Here we are starting with the shift of $G_4$, which is quantized. 
 However, this shift itself can be implemented using twisted cohomotopy, as discussed extensively in \cite{FSS-twist}.
 
\paragraph{(i) The first obstruction.}\label{stg1p2}
The first obstruction is 
\footnote{By ``$\overset{!}{=}$", we mean that we require the equality, but also that this 
does not necessarily automatically hold.} 
$$
{\rm Sq}^2\widetilde{G}_4 \overset{!}{=}0 \in H^6(Y; \Z_2)\;.
$$
Indeed, it was shown in \cite{FSS-twist} that this follows from anomaly cancellation 
in M-theory. Note that it is stronger than the obstruction given by the 3rd Steenrod 
square ${\rm Sq}^3$, as the  former is a condition for 
KO-theory while the latter is for K-theory.

\paragraph{(ii) The second obstruction.}\label{stg2p2}
Overall, the second obstruction is
$$f^*(\alpha_7)\overset{!}{=} 0\in H^7(Y;\Z_2)$$
where $\alpha_7$ is a secondary operation, restricting fiberwise to ${\rm Sq}^2\iota_5$. In particular, this means that if $G_4$ \textit{vanishes} in cohomology, we have the more relatable condition
$$
f^*(i^* \alpha_7)= f^*({\rm Sq}^2\iota_5) =
{\rm Sq}^2f^*(\iota_5)\overset{!}{=} 0
\in H^7(Y; \Z_2)\;.
$$
In this case, we impose the condition 
${\rm Sq}^2f^*(\iota_5)=0$. 
At this stage we note that in the current formulation 
there are no fields of degree five in M-theory. Hence we 
will instead impose the natural condition
$$
f^*(\iota_5)=: G_5=0 \in H^7(Y; \Z_2)\;.
$$
 Note that rationally we could consider the possibility that $G_5$ as being 
$*_{11}C_6$, the Hodge dual of the potential for $G_7$. However, there is 
no natural degree four potential, except if we view $G_4$ as such, but this
is closed, hence such a candidate $G_5$ would vanish even as  a form. 
Another possibility is that in the presence of M5-branes, the Bianchi 
identity $dG_4=0$ is violated by a delta function supported on the 
M5-brane worldvolume. The latter can be viewed as giving rise to 
a degree five class, but it does not satisfy the right condition, which is 
torsion. We will get back to this in Remark \ref{Rem-pure5}
and Remark \ref{Rem-purelyco}.

\paragraph{(iii) The third obstruction.}
The third obstruction is 
$$
f^*(``{\rm Sq}^4\iota_4")\overset{!}{=}0 \in H^8(Y; \Z_8)\;.
$$
Note that, by construction, this implies also that (upon mod 2 reduction)
$$
f^*({\rm Sq}^4\iota_4)=
{\rm Sq}^4 f^*(\iota_4)=
{\rm Sq}^4 \widetilde{G}_4 = \widetilde{G}_4\cup \widetilde{G}_4 =0 \in H^8(Y; \Z_2)\;.
$$
This captures the anomaly given by trivializing the cup product of $G_4$,  at least mod 2. 
This is the affect in the stable setting, but (as we will see) the unstable obstruction implies 
the stronger condition that the cup product vanish even integrally.  
Recall that rationally we have the equation of motion 
$$
d*G^{\rm form}_4=\tfrac{1}{2}G^{\rm form}_4 \wedge G^{\rm form}_4\;. 
$$ 
At the level of cohomology classes (with torsion), we have that the cup product of 
the class $G_4$ of $G^{\rm form}_4$ with itself is zero. This of course 
implies immediately that the mod 2 reduction also vanish 
$$
{\rm Sq}^4\widetilde{G}_4=\widetilde{G}_4 \cup \widetilde{G}_4=0\;. 
$$
What about the coefficients being $\Z_8$ rather than $\Z_2$? 
We first argue that $\Z_8$ coefficients are somewhat natural to appear in this context. 
We consider the fields reduced modulo 4,
for instance the shift in the field is given by $\tfrac{1}{2}\lambda$ , where 
$\lambda=\tfrac{1}{2}p_1$ the Spin characteristic class arising from the 
first Pontrjagin class $p_1$ being even in the cohomology of BSpin. If we 
start with an oriented --  rather than a Spin -- setting, then we are considering 
modding out $p_1$ by 4. Hence it makes sense to consider a corresponding class 
$x_4 \in H^4(Y^{11}; \Z_4)$, given by mod 4 reduction, or in the lift to the bounding manifold $Z^{12}$. 
There is an operation $\mathfrak{P}_2:H^4(Y^{11};\Z_4)\to H^8(Y^{11};\Z_8)$ 
called the \textit{Pontrjagin square} operation. It has the property that 
\begin{align}
\mathfrak{P}_2\rho_2(x_4)&=x_4^2  \nonumber
\\
\rho_4\mathfrak{P}_2(x_4)&=x_4^2 \nonumber
\end{align}
  where $\rho_2$ and $\rho_4$ are the mod 2 and 4 reductions, respectively. 
  Then in fact
$$
\rho_2\mathfrak{P}_2(x_4)=\rho_2(x_4)^2={\rm Sq}^4\rho_2(x_4)\;.
$$
This works for any degree 4 class, not just the reduction of $p_1$. Hence the operation
\footnote{We are tempted to identify this operation explicitly as  ``${\rm Sq}^4\iota_4$", 
but unfortunately the Pontrjagin square is an unstable operation, while  ``${\rm Sq}^4\iota_4$" 
is stable. Thus at best there is a stable operation which reduces to the Pontrjagin square in the 
given degree.} 
$\mathfrak{P}_2\rho_4\iota_4$ indeed gives a mod 8 lift of ${\rm Sq}^4\iota_4$.

\paragraph{{\bf (iv)} The fourth obstruction.}
The fourth obstruction is 
$$
f^*(P_{11})\overset{!}{=} 0 \;,
$$
where $P_{11}$ is a class which fiberwise restricts to ${\rm Sq}^4\iota_7$. 
The Universal Coefficient Theorem gives
$$
\xymatrix{
0 \ar[r] & {\rm Ext}_{\Z_2}^1 (H_{11}(Y; \Z_2), \Z_2) 
\ar[r] & H^{11}(Y; \Z_2) \ar[r] & {\rm Hom}(H_{11}(Y; \Z_2), \Z_2) 
\ar[r] & 0
}.
$$
Since $\Z_2$ is a field, the Ext term vanishes and we have the 
isomorphism  $H^{11}(Y; \Z_2)\cong H_{11}(Y; \Z_2)$.
If $Y$ is non-orientable, then this group is trivial, so that 
$P_{11}$ has no effect. However, if $Y$ is 
orientable, then $H_{11}(Y; \Z_2) \cong \Z_2$, so that  we
get a detectable effect for M-theory on orientable spacetimes. 

\begin{remark}[Obstructions as $n$-ary constraints]
The 2-primary Postnikov resolution of the sphere 
can also be organized into primary, secondary, etc. obstructions, in the sense of 
cohomology operations  (see \cite{LT}). 
For our case of the 4-sphere, up to degree eight, it looks as follows (necessarily mixing dimensions):
$$
\xymatrix@C=4em{
K(\Z_2, 7) \ar[r]^-{j_3} &  (S^4)_3 \ar[d] &
\\
K(\Z_2, 6) \times K(\Z_2, 7) \ar[r]^-{j_2} &  (S^4)_2 
\ar[d]  \ar[r]^-{\beta_4}_-{\tiny \color{blue} \rm Tertiary} & K(\Z_2, 8)
\\
K(\Z_2, 5) \times K(\Z_2, 7) \ar[r]^-{j_1} &  (S^4)_1 
\ar[d]  \ar[r]^-{(\alpha_3, \alpha_4)}_-{\tiny \color{blue} \rm Secondary} 
& K(\Z_2, 7) \times K(\Z_2, 8)
\\
&  (S^4)_0=K(\Z, 4)   
\ar[r]^-{({\rm Sq}^2, {\rm Sq}^4)}_-{\tiny \color{blue} \rm Primary} & 
K(\Z_2, 6) \times K(\Z_2, 8)
}
$$
\item {\bf (i) Primary obstruction}:
The Steenrod squares are primary cohomology operations,

\item  {\bf (ii) Secondary obstruction}:
The classes  $(\alpha_3, \alpha_4)$ 
represent secondary cohomology operations, 
\begin{align*}
j_1^* \alpha_3 &= {\rm Sq}^2 \iota_5 \otimes 1\;,
\\
j_1^* \alpha_4 &= {\rm Sq}^2{\rm Sq}^1 \iota_5 \otimes 1 + 1 \otimes {\rm Sq}^1 \iota_7\;.
\end{align*}

\item  {\bf (iii) Tertiary obstruction}:
The class $\beta_4$ represents a tertiary 
cohomology operations
$$
j_2^* \beta_4 = {\rm Sq}^2\iota_6 \otimes 1 + 1 \otimes {\rm Sq}^1 \iota_7\;.
$$

%The 4-stem $(S^4)_4$ and 5-stem $(S^4)_5$ trivial. So for $Y$ of dimension $\leq ??$
%$$
%\pi^4(Y)/({\rm odd\;torsion}) \simeq [Y, (S^4)_3]\;.
%$$
%From here on they assume $n>5$ for $S^n$ -- why????
\end{remark}

%%%%%%%%%%%%%%%
\subsection{$\Z_3$ and $\Z_5$ coefficients} 
\label{prime3}
%%%%%%%%%%%%%%%%

The main contribution to the Postnikov tower of $S^4$ is from the prime $p=2$ as we
saw above. However, the primes 3 and 5 also contribute, albeit to a lesser extent. 
The structure of homotopy groups of spheres give some immediate consequences for 
the Postnikov tower at different primes 3 and 5. In particular, the Serre spectral 
sequence implies that at the second stage we have an isomorphism 
$$
H^*((S^4)_2,\Z_3)\cong H^*((S^4)_1,\Z_3)\cong H^*(K(\Z,4);\Z_3)\;,
$$
and similarly at the prime 5 we have an isomorphism 
$$
H^*((S^4)_4;\Z_5)\cong H^*(K(\Z,4);\Z_5)\;.
$$
For $p$ odd, the structure of mod $p$ cohomology rings of Eilenberg-MacLane spaces was
determined by Cartan \cite{Ca} and Serre \cite{Se}
 in terms of admissible monomials of
 Steenrod reduced powers and the Bockstein (see also \cite[Lecture 30]{FF}).
This has been recast by Tamanoi \cite[Sec. 5.2]{Ta} using the  Milnor basis of the dual Steenrod algebra, giving  
explicit polynomial generators  for $H^*(K(\Z, n); \Z_p)$. Using this identification, we have 
$$
H^8(K(\Z,4);\Z_3)\cong \Z_3\langle \mathcal{P}_3^1\rho_3\iota_4\rangle, ~~~~ H^{12}(K(\Z,4);\Z_3)\cong \Z_3\langle (\rho_3\iota_4)^3\rangle, ~~~~ H^{12}(K(\Z,4);\Z_5)\cong \Z_5\langle \mathcal{P}_5^1\rho_5\iota_4\rangle\;,
$$
where $\mathcal{P}_p^n:H^i(-; \Z_p) \to H^{i+2n(p-1)}(-; \Z_p)$ is the Steenrod reduced power operation 
of degree $n$ at the prime $p$.
Hence we will get conditions on the vanishing of the pullback of the classes 
$$
 \mathcal{P}_3^1\rho_3\iota_4, ~~~~ (\rho_3\iota_4)^3, ~~~~  \mathcal{P}_5^1\rho_5\iota_4\;.
$$

\begin{remark}[Interpretation] 
Mod 3 reductions are shown to play a prominent role in topological considerations in M-theory 
\cite{KSpin}, where  similar conditions, including $ \mathcal{P}_3^1\rho_3G_4=0$, 
have been highlighted in the context of Spin K-theory
\end{remark}

%%%%%%%%%%%%%%%
\subsection{$\Z$ coefficients} 
\label{Sec-Z}
%%%%%%%%%%%%%%%%

Using our discussion in \cref{prime2} 
%%and \cref{prime3} 
we can assemble the tower integrally in
 the desired range. To do this, we observe that by killing all cohomology classes in $H^{n+*}(K(\Z,4+n);\Z_p)$, 
 for fixed $n\gg 11$ and for each prime $p$, we can utilize \cite[Theorem 4, Ch. 10]{MT} to construct 
 a space $(S^4)_4$ for which there exists a map $f:(S^4)_4\to S^4$ inducing an isomorphism on 
 each $p$-component of $\pi_{n+i}$. Since the homotopy groups of $S^4$ in this range are all 
 finitely generated and torsion, this will imply that $f$ is actually 12-connected. 
 Overall, we have the following:

\begin{lemma}[Integral Postnikov tower for $S^4$]
\label{Lem-IntPost}
The tower takes the form
$$
\hspace{-1mm} 
\xymatrix@C=1.7em{
%K(\Z_{16}, 11) \ar[r]& X_5 \ar[d] & 
&&&&& S^4 \ar@{}[d]^{\!\!\!\!\! \vdots} &
\\
&&&&
K(\Z_2, 10) \ar[r] & (S^4)_4 
\ar[rr]^-{\!\!\!\!``{\rm Sq}^8 \iota_4",\iota_4^3,\mathcal{P}^1_5\iota_4=0}_-{\color{blue}\text{\tiny holds}}  
  \ar[d] && K(\Z_{240}, \! 12)\!=\!K(\Z_{16}, \! 12) \! \times \! 
  K(\Z_{5}, \! 12)\!\times \! K(\Z_{3},\! 12) 
\\
&&&&K(\Z_{24}, 7) \ar[r] & (S^4)_3 \ar[rr]^-{P_{11}}_-{\color{blue}\text{\tiny fourth obstruction}} 
  \ar[d] && K(\Z_{2}, 11) 
\\
&&&&K(\Z_2, 6) \ar[r] & (S^4)_2 \ar[rr]^-{(``{\rm Sq}^4\iota_4",\mathcal{P}_3^1\iota_4)}_-{\color{blue}\text{\tiny third obstruction}} 
  \ar[d] && K(\Z_{24}, 8)=K(\Z_{8},8)\times K(\Z_3,8)
\\
&&&&K(\Z_2, 5) \ar[r] & (S^4)_1 \ar[rr]^-{\alpha_7}_-{\color{blue}\text{\tiny second obstruction}} 
  \ar[d] && K(\Z_{2}, 7) 
\\
Y \ar[rrrrr]|-{
                \mbox{\!\!
        \tiny \bf 
        \color{blue}
        \begin{tabular}{c}
         integral
          \\
           Cohomology
            \end{tabular}
      \!}
              }
\ar@{..>}[urrrrr]|{
                \mbox{\!\!
        \tiny 
        \color{blue}
        \begin{tabular}{c}
         first
          \\
           lifting
            \end{tabular}
      \!}
              }
\ar@/^1pc/@{..>}[uurrrrr]|-{
                \mbox{\!\!
        \tiny
        \color{blue}
        \begin{tabular}{c}
         second
          \\
          lifting
            \end{tabular}
      \!}
              }
\ar@/^2.3pc/@{..>}[uuurrrrr]|-{
                \mbox{\!\!
        \tiny
        \color{blue}
        \begin{tabular}{c}
         third
          \\
           lifting
            \end{tabular}
      \!}
              }
\ar@/^3.5pc/@{..>}[uuuurrrrr]|-{
                \mbox{\!\!
        \tiny
        \color{blue}
        \begin{tabular}{c}
         fourth
          \\
           lifting
            \end{tabular}
      \!}
              }
\ar@/^5pc/@{..>}[uuuuurrrrr]|-{
                \mbox{\!\!
        \tiny \bf 
        \color{blue}
        \begin{tabular}{c}
         degree 4
          \\
           cohomotopy
            \end{tabular}
      \!}
              }
&&&&& (S^4)_0=K(\Z, 4) \ar[rr]^-{{\rm Sq}^2\iota_4}_-{\color{blue}\text{\tiny first obstruction}} 
 && K(\Z_{2}, 6) 
}
$$
Note that at the top level the three conditions vanish necessarily on $Y^{11}$, for dimension reasons. 
\end{lemma}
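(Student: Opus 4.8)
The plan is to assemble the integral tower prime by prime and then glue, using the 2-primary tower from \cref{prime2} (the preceding Lemma) as the backbone and inserting the odd-primary contributions identified in \cref{prime3}. First I would invoke the standard fact (see \cite[Theorem 4, Ch. 10]{MT}) that a Postnikov tower can be built one homotopy group at a time, and that localized at a prime $p$ the $k$-invariants are detected by $H^{*}(-;\Z_p)$ together with Bocksteins; since $\pi_{n+i}(S^4)$ is finitely generated and entirely torsion in the range $4 < n+i \leq 12$ (from the known homotopy groups of $S^4$: $\pi_5=\Z_2$, $\pi_6=\Z_2$, $\pi_7=\Z\oplus\Z_{12}$ stabilizing appropriately, $\pi_8=\Z_2^2$, $\dots$, up through $\pi_{12}$, all torsion above the bottom cell), each stage $(S^4)_k\to (S^4)_{k-1}$ is a principal fibration with fiber an Eilenberg--MacLane space on a finite abelian group, and the primary decomposition of that group splits the stage into its $p$-primary pieces.

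Next I would treat the three primes separately. At $p=2$ the tower is exactly the one produced in the preceding Lemma: fibers $K(\Z_2,5), K(\Z_2,6), K(\Z_8,7), K(\Z_2,10)$ with $k$-invariants ${\rm Sq}^2\iota_4$, $\alpha_7$, ``${\rm Sq}^4\iota_4$'', $P_{11}$, and the top ``${\rm Sq}^8\iota_4$'' which vanishes. At $p=3$, using the Cartan--Serre computation of $H^*(K(\Z,4);\Z_3)$ as recast by Tamanoi \cite[Sec. 5.2]{Ta}, the relevant low-degree classes are $\mathcal P_3^1\rho_3\iota_4\in H^8$ and $(\rho_3\iota_4)^3\in H^{12}$; the Serre spectral sequence gives $H^*((S^4)_2;\Z_3)\cong H^*(K(\Z,4);\Z_3)$, so the first 3-primary $k$-invariant appears at the stage producing $(S^4)_3$ (contributing the $\Z_3$ summand in degree $8$, and the integral Bockstein bumps $\Z_2\oplus\Z_3$ in degree $7$ to $\Z_{24}$ as shown), and the next at the top (the $\iota_4^3$ summand in $K(\Z_{240},12)$). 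At $p=5$, Tamanoi's generators give $\mathcal P_5^1\rho_5\iota_4\in H^{12}(K(\Z,4);\Z_5)$ as the only relevant class, whence $H^*((S^4)_4;\Z_5)\cong H^*(K(\Z,4);\Z_5)$ in the range and the sole 5-primary $k$-invariant sits in the top map $K(\Z_{240},12)=K(\Z_{16},12)\times K(\Z_5,12)\times K(\Z_3,12)$, where $240=16\cdot 5\cdot 3$.

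Then I would glue: by the arithmetic fracture square (a nilpotent finite complex is the homotopy pullback of its $p$-localizations over the rationalization, which here is trivial above degree $4$), the $p$-local towers assemble into a single integral tower with fibers $K(A_k, m_k)$ where $A_k$ is the direct sum of the $p$-primary pieces, and $k$-invariants the corresponding products of the prime-local ones. Finally, to identify this tower with the genuine Postnikov tower of $S^4$ through dimension $12$, I would note that the constructed space $(S^4)_4$ admits a map $f:(S^4)_4\to S^4$ inducing an isomorphism on each $p$-primary component of $\pi_{n+i}$ for all $i\leq 12-n$ and all $p\in\{2,3,5\}$ (the only primes with torsion in this range), hence an isomorphism on $\pi_j$ for $j\leq 12$; since source and target are simply connected, $f$ is $12$-connected, which is exactly the claimed identification. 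The last sentence of the statement is then immediate: on an $11$-manifold $Y$, every class of degree $12$ vanishes, so pulling back the three top $k$-invariants to $Y$ gives zero automatically.

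The main obstacle I anticipate is the bookkeeping at the odd primes: one must verify that \emph{no other} admissible monomials in Steenrod reduced powers and Bocksteins produce classes of $H^*(K(\Z,4);\Z_p)$ in degrees $5$ through $12$ beyond those listed, and that the odd-primary $k$-invariants genuinely transgress as claimed rather than being hit by differentials in the Serre spectral sequence at the intermediate stages — this is where the Cartan--Serre/Tamanoi structure theory is doing the real work, and where a careless count would either miss a summand or introduce a spurious one. The $p=2$ part is already handled by the preceding Lemma, and the fracture-square gluing is formal once the prime-local data is pinned down.
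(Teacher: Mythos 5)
Your proposal is correct and follows essentially the same route as the paper: the $2$-primary tower from the preceding lemma, the odd-primary classes $\mathcal{P}^1_3\rho_3\iota_4$, $(\rho_3\iota_4)^3$, $\mathcal{P}^1_5\rho_5\iota_4$ from Cartan--Serre/Tamanoi, assembly via \cite[Theorem 4, Ch.~10]{MT} (your fracture-square remark is just an alternative phrasing of that same gluing step), and the observation that finitely generated torsion homotopy in this stable range makes $f:(S^4)_4\to S^4$ twelve-connected. One minor slip: the degree-$7$ fiber is $\Z_8\oplus\Z_3\cong\Z_{24}$ (the $\Z_8$ coming from the Bockstein analysis at $p=2$), not ``$\Z_2\oplus\Z_3$ bumped to $\Z_{24}$''.
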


\paragraph{Low-dimensional obstructions.} The obstructions in degree $6$ and $7$ are 
identified with the obstructions at the prime 2, given in \cref{stg1p2}.

\vspace{-4mm} 

\paragraph{The tertiary obstruction.} Here we have again the prime 2 obstructions, identified in \cref{stg1p2}, but also
 a new condition which occurs at the prime 3. Namely, we have the condition 
$$
\mathcal{P}^1_3(\widetilde{G}_4)=0\in H^8(Y;\Z_3)\;.
$$
As indicated above, this is compatible with \cite{KSpin}, where a similar condition was 
proposed using Spin K-theory. 

\vspace{-4mm} 

\paragraph{The quaternary obstructions.} This is identified as the obstruction class $P_{11}$ 
at the prime 2, as in \cref{stg1p2}.

\vspace{-4mm} 
\paragraph{The quinary obstructions.} These obstructions necessarily vanish on $Y^{11}$. 
However we will consider a closed 12-manifold $Z^{12}$ in analyzing the congruences of the 
Chern-Simons term in the M-theory action. In this case, the three conditions 
$$
``{\rm Sq}^8\iota_4\overset{!}{=} 0, \qquad
 \iota_4^3\overset{!}{=}0, 
 \qquad 
  \mathcal{P}^1_5\iota_4\overset{!}{=}0 
$$
are nontrivial. We will see that the second obstruction gives exactly the mod 3 congruences in the M-theory 
action discussed in \cite{flux}.

%From the identification of homotopy groups of spheres in low degrees, it is easy to write down the form of the Postnikov tower integrally and unstably up to the 11-th stage. The tower takes the rather daunting form
%$$
%\xymatrix@C=3em{
%K(\Z/15,11)\ar[r] & (S^4)_7\ar[d]
%\\
%K(\Z/24\times \Z/3,10)\ar[r] & (S^4)_6\ar[d]\ar[r] & K(\Z/15,12)
%\\
%K((\Z/2)^2,9)\ar[r] & (S^4)_5\ar[d]\ar[r] &K(\Z/24\times \Z/3,11)
%\\
%K((\Z/2)^2,8)\ar[r] & (S^4)_4\ar[d]\ar[r] & K((\Z/2)^2,10)
%\\
%  K(\Z\times \Z/12,7)\ar[r] &(S^4)_3 \ar[d] \ar[r] & K((\Z/2)^2,9)
%\\
%   K(\Z/2,6)\ar[r] & (S^4)_2 \ar[d]\ar[r]  &   K(\Z\times \Z/12,8)
%\\
%  K(\Z/2,5)\ar[r] & (S^4)_1 \ar[d]\ar[r]  &  K(\Z/2,7)
%   \\
%  & (S^4)_0=K(\Z, 4)  \ar[r] & K(\Z/2,6)
%}
%$$
%Stably, the tower simplifies considerably. Moreover, the identifications obtained at the primes 2 and 3 allow us to explicitly identify the k-invariants. 
%$$
%\xymatrix@C=3em{
%&\vdots\ar[d] &
%\\
%K(\Z/2,10)\ar[r] & (S^4)_4\ar[d]\ar[r] & K(\Z/240,12)
%\\
%  K(\Z/24,7)\ar[r] &(S^4)_3 \ar[d] \ar[r] & K(\Z/2,11)
%\\
%   K(\Z/2,6)\ar[r] & (S^4)_2 \ar[d]\ar[r]  &   K(\Z/24,8)
%\\
%  K(\Z/2,5)\ar[r] & (S^4)_1 \ar[d]\ar[r]^-{\alpha_7}  &  K(\Z/2,7)
%   \\
%  & (S^4)_0=K(\Z, 4)  \ar[r]^{Sq^2\iota_4} & K(\Z/2,6)
%}
%$$
%

\medskip
Summarizing, we have the following.
\begin{prop}[Cohomotopy vs. integral cohomology] \label{cohvint}
Let $Y^{11}$ be an 11-dimensional (smooth) manifold. 
Then a class $c\in H^4(Y^{11};\Z)$ lifts to a class $\tilde{c}\in \pi^4(Y^{11})$ 
if and only if the following conditions are satisfied.
\begin{enumerate}[{\bf (i)}]
\vspace{-2mm} 
\item $Sq^2(c)\equiv 0 \mod 2$, ~~$\mathcal{P}^1_3(c)\equiv 0 \mod 3$.
\vspace{-2mm} 
\item There is a lift $c^{\prime}:Y^{11}\to (S^4)_1$ of $c$ such that $\alpha_7(c^{\prime})
\equiv 0 \mod 2$.
\vspace{-2mm} 
\item There is a further lift $c^{\prime\prime}:Y^{11}\to (S^4)_2$ such that $\beta_8(c^{\prime\prime})
\equiv 0 \mod 8$. In particular, upon mod 2-reduction, we have $Sq^4(c)=c^2\equiv 0 \mod 2$.
\vspace{-2mm} 
\item There is a further lift $c^{\prime\prime\prime}$ of $c^{\prime\prime}$ 
such that $P_{11}(c^{\prime\prime\prime})\equiv 0 \mod 2$. In particular upon mod 2 reduction, we have the 
tautological relation $Sq^8(c)\equiv 0 \mod 2$.
\end{enumerate}
\end{prop}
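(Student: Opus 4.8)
The plan is to read off the statement directly from the Integral Postnikov tower of \cref{Lem-IntPost} by translating "lifting $c$ through the tower" into "the successive obstruction classes vanish". Recall that $c \in H^4(Y^{11};\Z)$ is the same datum as a map $c\colon Y^{11}\to K(\Z,4)=(S^4)_0$, and by the universal lifting property for a tower of principal fibrations, $c$ lifts to $[Y^{11},S^4]$ if and only if it lifts successively through each stage $(S^4)_n\to (S^4)_{n-1}$, the obstruction at stage $n$ being the homotopy class of the composite of the chosen lift $Y^{11}\to (S^4)_{n-1}$ with the classifying map $(S^4)_{n-1}\to K(\pi,\,*)$ of that stage. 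Since $\dim Y^{11}=11$, only stages $n=0,1,2,3$ can contribute obstructions in the relevant range (the stage-$4$ classes live in degree $12$ and so vanish on $Y^{11}$ for dimensional reasons — this is the final remark of \cref{Lem-IntPost} and is why there is no "(v)" in the Proposition). So the lift exists iff the four obstruction classes, read off the right-hand column of the tower, can be made to vanish by suitable choices of intermediate lifts. This is exactly items \textbf{(i)}–\textbf{(iv)}.

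Concretely, I would argue stage by stage. Stage $0$: the classifying map $(S^4)_0\to K(\Z_2,6)\times\cdots$ is $({\rm Sq}^2\rho_2\iota_4,\ \mathcal P^1_3\rho_3\iota_4)$ once one assembles the prime-$2$ tower of \cref{prime2} with the prime-$3$ contribution of \cref{prime3}; pulling back along $c$ gives the two primary conditions ${\rm Sq}^2 c\equiv 0\ (2)$ and $\mathcal P^1_3 c\equiv 0\ (3)$ of \textbf{(i)}, and $c$ lifts to $(S^4)_1$ precisely when these hold. Stage $1$: a choice of lift $c'\colon Y^{11}\to (S^4)_1$ exists, and the next obstruction is $\alpha_7(c')\in H^7(Y^{11};\Z_2)$ (the secondary operation whose fibre restriction is ${\rm Sq}^2\iota_5$, as identified in \cref{Sec-ident}); this is \textbf{(ii)}. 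Stage $2$: with a further lift $c''\colon Y^{11}\to(S^4)_2$, the obstruction is the class $``{\rm Sq}^4\iota_4"$ with $\Z_8$ (and $\Z_3$) coefficients, i.e. $\beta_8(c'')\equiv 0\ (8)$, whose mod-$2$ reduction is ${\rm Sq}^4 c=c^2$ by the discussion of the Pontrjagin square; this is \textbf{(iii)}. Stage $3$: a further lift $c'''$ carries the obstruction $P_{11}(c''')\in H^{11}(Y^{11};\Z_2)$, with $i^*P_{11}={\rm Sq}^4\iota_7$, whose mod-$2$ reduction is the tautological ${\rm Sq}^8 c\equiv 0$; this is \textbf{(iv)}. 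The "In particular" clauses in \textbf{(iii)} and \textbf{(iv)} are then immediate by applying mod-$2$ reduction to the respective obstruction classes and using $i^*(``{\rm Sq}^4\iota_4")={\rm Sq}^4\iota_5$ resp.\ $i^*P_{11}={\rm Sq}^4\iota_7$ together with ${\rm Sq}^4 c = c\cup c$ (Cartan formula, $\deg c=4$) and ${\rm Sq}^8 c=0$ ($\deg c=4<8$).

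The one genuine technical point — the step I expect to be the real obstacle — is justifying that the four-stage truncation \emph{suffices}, i.e.\ that once $c$ has been lifted through $(S^4)_3$ (equivalently $(S^4)_4$, since the stage-$4$ obstructions vanish on $Y^{11}$), the resulting map $Y^{11}\to (S^4)_4$ actually lifts all the way to $S^4$. This is where the connectivity claim preceding \cref{Lem-IntPost} is used: for $n\gg 11$ one builds $(S^4)_4$ so that $f\colon (S^4)_4\to S^4$ induces an isomorphism on each $p$-primary component of $\pi_{\le 11+i}$ for $p\in\{2,3,5\}$, and since all higher homotopy groups of $S^4$ in this range are finite with only these primes dividing their orders, $f$ is $12$-connected; hence any map from the $11$-manifold $Y^{11}$ to $(S^4)_4$ lifts uniquely along $f$ to $S^4$. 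Assembling this with the stage-by-stage obstruction analysis gives the equivalence, completing the proof.
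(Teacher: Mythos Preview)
Your proposal is correct and follows essentially the same route as the paper, which treats the Proposition as a direct summary of the preceding Postnikov-tower construction (introduced with ``Summarizing, we have the following'') rather than giving a separate proof; you have simply made the implicit obstruction-theoretic argument explicit, including the key $12$-connectivity step. One small slip: in justifying the ``In particular'' of \textbf{(iii)} you write $i^*(``{\rm Sq}^4\iota_4")={\rm Sq}^4\iota_5$, but the relevant fact is rather that the mod~$2$ reduction of $``{\rm Sq}^4\iota_4"$ equals $p^*({\rm Sq}^4\iota_4)$ (pullback from the base, not fiber restriction), from which ${\rm Sq}^4 c=c\cup c$ follows directly; also note that in the paper's integral tower the $\mathcal P_3^1$ condition is placed at stage~$2$ (where it is still primary since the mod~$3$ cohomology is unchanged through the first two stages), not stage~$0$ as you have it, though this reorganization is harmless.
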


Much of the information in the above proposition is 2-torsion. We now directly apply this to our field.  

\begin{prop}
[Cohomotopy vs. cohomology for the C-field] 
\label{cor-C}
 Consider the M-theory (shifted) C-field $\widetilde{G}_4$ as an 
integral cohomology class in degree four. Then if $\widetilde{G}_4$ lifts to 
a cohomotopy class ${\cal G}_4\in \pi^4(Y^{11})$ the following obstructions necessarily vanish 
\begin{enumerate}[{\bf (i)}] 
\vspace{-2mm}
\item ${\rm Sq}^2 \widetilde{G}_4=0 \in H^6(Y^{11}; \Z_2)$. 
\vspace{-2mm}
\item $\mathcal{P}^1_3(\widetilde{G}_4)=0 \in H^8(Y^{11}; \Z_3)$.
\vspace{-2mm}
\item ${\rm Sq}^4 \widetilde{G}_4=\widetilde{G}_4\cup\widetilde{G}_4=0  \in H^{8}(Y^{11}; \Z_{2})$.
\vspace{-2mm}
\item 
If $G^{\rm form}_4=0$ and $dC^{\rm form}_3=0$, so $C_3$ can be lifted to an integral 
class $\widetilde{C}_3$, then we also have 
${\rm Sq}^3{\rm Sq}^1\widetilde{C}_3=0\in H^7(Y^{11};\Z_2)$.
\vspace{-2mm}
\item 
If $dG^{\rm form}_7=G^{\rm form}_4\wedge G^{\rm form}_4=0$ and $G^{\rm form}_7$ 
can be lifted to an integral class $\widetilde{G}_7$, then we also have the condition 
 ${\rm Sq}^4\widetilde{G}_7=0\in H^{11}(Y^{11};\Z_2)$.
\end{enumerate}
%When $Y$ is of dimension less than or equal to 12 then the above are the only obstructions
%at the prime 2. 
\end{prop}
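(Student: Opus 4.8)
\noindent\emph{Proof plan.} The proof is a direct application of Proposition~\ref{cohvint} to the degree-four class $c=\widetilde{G}_4=f^*\iota_4\in H^4(Y^{11};\Z)$. A lift ${\cal G}_4\in\pi^4(Y^{11})$ is by definition a factorization of the classifying map $\widetilde{G}_4\colon Y^{11}\to K(\Z,4)$ through $S^4\to K(\Z,4)$, equivalently a compatible system of lifts through every stage $(S^4)_n$ of the tower of Lemma~\ref{Lem-IntPost}, so its existence forces each successive obstruction to vanish. Items (i) and (ii) are then exactly the primary obstructions ${\rm Sq}^2(c)\equiv 0$ and $\mathcal{P}^1_3(c)\equiv 0$ of Proposition~\ref{cohvint}(i), and item (iii) is the mod $2$ reduction recorded in Proposition~\ref{cohvint}(iii), together with the identity ${\rm Sq}^4 x=x\cup x$ on a degree-four class $x=\widetilde{G}_4$. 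The only real work is in (iv) and (v), where the secondary and quaternary obstructions of Proposition~\ref{cohvint}(ii),(iv) must be translated into conditions on the potentials.

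For (iv) I would start from the secondary obstruction $f^*(\alpha_7)=0$, using $i^*\alpha_7={\rm Sq}^2\iota_5$ for the fiber inclusion of $(S^4)_1$. In the regime where $\widetilde{G}_4$ is cohomologically trivial (so in particular $G^{\rm form}_4=0$), the map $f$ factors through the fiber $K(\Z_2,5)$ of $(S^4)_1\to K(\Z,4)$, whence $f^*(\alpha_7)=f^*(i^*\alpha_7)={\rm Sq}^2\,f^*(\iota_5)={\rm Sq}^2 G_5$ with $G_5:=f^*(\iota_5)\in H^5(Y^{11};\Z_2)$, exactly as in the discussion after the second obstruction in \cref{Sec-ident}. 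When in addition $dC^{\rm form}_3=0$ and $C_3$ admits an integral lift $\widetilde{C}_3\in H^3(Y^{11};\Z)$, the a priori degree-five datum $G_5$ is carried by this potential, $G_5={\rm Sq}^2\rho_2\widetilde{C}_3$, and the Adem relation ${\rm Sq}^2{\rm Sq}^2={\rm Sq}^3{\rm Sq}^1$ turns $f^*(\alpha_7)=0$ into ${\rm Sq}^3{\rm Sq}^1\widetilde{C}_3=0$.

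Item (v) goes the same way from the quaternary obstruction $f^*(P_{11})=0$, using $i^*P_{11}={\rm Sq}^4\iota_7$ for the fiber inclusion of $(S^4)_3$. Once the first three obstructions vanish the lift through $(S^4)_3$ exists, and in the regime where it is carried by the degree-seven field --- i.e.\ $f$ factors through the fiber $K(\Z_8,7)$ --- one gets $f^*(P_{11})={\rm Sq}^4\,f^*(\iota_7)={\rm Sq}^4 G_7$ with $G_7:=f^*(\iota_7)$. The hypothesis $dG^{\rm form}_7=G^{\rm form}_4\wedge G^{\rm form}_4=0$ makes $G^{\rm form}_7$ closed, so $G_7$ lifts to an integral class $\widetilde{G}_7\in H^7(Y^{11};\Z)$ with $\rho_2\widetilde{G}_7=G_7$, and $f^*(P_{11})=0$ reads ${\rm Sq}^4\widetilde{G}_7=0\in H^{11}(Y^{11};\Z_2)$.

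The main obstacle is precisely the two geometric identifications used above: that in the degenerate regimes in question the cohomotopy-lift data at stages one and three is carried exactly by the M-theory potentials, so that $G_5={\rm Sq}^2\rho_2\widetilde{C}_3$ and $G_7=\rho_2\widetilde{G}_7$. Establishing these requires the M-theory field-content input (no independent degree-five field; $G_7$ the dual degree-seven field strength) together with control of the lift-choice indeterminacy --- the set of lifts through $(S^4)_1$ is a torsor over $H^5(Y^{11};\Z_2)$ on which $f^*(\alpha_7)$ changes by ${\rm Sq}^2(-)$, and similarly at $(S^4)_3$ --- so that the displayed identities, and hence the conditions (iv) and (v), hold after pulling back to $Y^{11}$. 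Everything else reduces to bookkeeping with the tower of Lemma~\ref{Lem-IntPost} and to Adem relations.
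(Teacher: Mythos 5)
Your proposal is correct and follows essentially the same route as the paper: items (i)--(iii) are read off from Proposition~\ref{cohvint}, and (iv), (v) come from restricting $\alpha_7$ and $P_{11}$ to the fibers and identifying the fiber data with the potentials via $i^*\alpha_7={\rm Sq}^2\iota_5$, the Adem relation ${\rm Sq}^2{\rm Sq}^2={\rm Sq}^3{\rm Sq}^1$, and $i^*P_{11}={\rm Sq}^4\iota_7$. The one place where the paper is sharper is the identification $G_5={\rm Sq}^2\widetilde{C}_3$, which you attribute to M-theory field content plus control of lift indeterminacy, whereas the paper derives it purely homotopy-theoretically by looping the first $k$-invariant ${\rm Sq}^2\iota_4$ to ${\rm Sq}^2\iota_3$, so that the trivialization $\widetilde{C}_3$ canonically induces the lift ${\rm Sq}^2\widetilde{C}_3$ to the fiber $K(\Z_2,5)$ (also note the integral tower's fiber at stage three is $K(\Z_{24},7)$, not $K(\Z_8,7)$, though this does not affect the mod~2 conclusion).
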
 
\begin{proof}
The first three conditions are immediate consequences of Proposition \ref{cohvint}. By stability 
of Steenrod squares, applying the based loop functor to the mapping 
$$
{\rm Sq}^2\iota_4:K(\Z,4)\longrightarrow K(\Z_2,6)
$$
gives ${\rm Sq}^2\iota_3:K(\Z,3)\to K(\Z_2,5)$. Trivializing $G_4=0$ by $\widetilde{C}_3$, 
we get a choice of lift of $\widetilde{G}_4=0$ to the fiber $K(\Z_2,5)$, given by ${\rm Sq}^2\widetilde{C}_3$. 
Then in this case the obstruction class is 
$$
i^*\alpha_7({\rm Sq}^2\widetilde{C}_3 )=
{\rm Sq}^2{\rm Sq}^2\widetilde{C}_3 ={\rm Sq}^3{\rm Sq}^1\widetilde{C}_3\;,
$$
where we have used the Adem relation ${\rm Sq}^2{\rm Sq}^2={\rm Sq}^3{\rm Sq}^1$. 
For the last identification, we identify $\widetilde{G}_7$, after reducing mod 24, with a lift to the fiber $K(\Z_{24},7)$ in the fibration $K(\Z_{24},7)\to (S^4)_3\to (S^4)_2$. Since $P_{11}$ restricts fiberwise to ${\rm Sq}^4\iota_7$, we have the obstruction class 
$$
i^*P_{11}(\widetilde{G_7})={\rm Sq}^4\widetilde{G}_7\;.
$$
 Hence the result follows.   
 \hfill \end{proof}

Proposition \ref{cohvint} also has some immediate 
striking consequences. In particular, Proposition \ref{cor-C} implies that even if $G_4=0$, there are 
still obstructions to lifting the $C$-field to a cohomotopy class. Thus, quantization in cohomotopy 
seems to uncover extremely subtle quantization conditions on the $C$-field. 

\begin{remark}
[Cohomotopy first contribution to the $C$-field]
\label{Rem-pure5}
We highlight that even if $\widetilde{G}_4=0$, there are still obstructions to lifting the $C$-field 
to a class in cohomotopy. In particular, we have a mysterious degree 5 class $\eta\in H^5(Y^{11};\Z_2)$ 
which transgresses to ${\rm Sq}^2\widetilde{G}_4$. By construction of the transgression, this class can 
be interpreted concretely as follows. 
Fix a map ${\rm Sq}^2:K(\Z,4)\to K(\Z_2,6)$ representing the Steenrod square ${\rm Sq}^2$. 
Since $\widetilde{G}_4$ vanishes in integral cohomology, we have a global trivialization 
$\delta \widetilde{C}_3=\widetilde{G}_4$ of $\widetilde{G}_4$ as an integral cochain, 
which gives rise to a trivialization ${\rm Sq}^2\widetilde{C}_3$ of ${\rm Sq}^2\widetilde{G}_4$ 
in $\Z_2$-cohomology, by naturality. Let us fix another trivialization 
$\delta\epsilon={\rm Sq}^2\widetilde{G}_4$ in cochains with $\Z_2$-coefficients. 
Setting $\eta:={\rm Sq}^2\widetilde{G}_4$, we have  
$$
\delta(\eta)=\delta ({\rm Sq}^2\widetilde{C}_3-\epsilon)=
{\rm Sq}^2\widetilde{G}_4- {\rm Sq}^2\widetilde{G}_4=0\;,
$$
so that $\eta$
indeed represents a degree 5 cocycle in $\Z_2$ cohomology, which may be generally nonvanishing.
Note that there is a degree five class associated with the  C-field, namely the fifth integral Steifel-Whitney 
class $W_5$ (see \cite{DFM}), but it is different from this class. 
\end{remark}

\begin{remark}
[Congruences for the M-theory action via cohomotopy]  
\label{congM-pat}
Another interesting effect occurs when considering the Chern-Simons term in the M-theory action
$$
\tfrac{1}{6}\int_{Y^{11}}C_3^{\rm form}\wedge G_4^{\rm form} \wedge G_4^{\rm form}\;.
$$ 
As usual, since $C_3$ may not be globally defined in general, one may consider $Y^{11}$ as the boundary 
of a 12-manifold $Z^{12}$ and analyzes the globally well defined term 
\(
\label{csintwit}
\tfrac{1}{6}\int_{Z^{12}}G_4^{\rm form}\wedge G_4^{\rm form}\wedge G_4^{\rm form}
\)
on $Z^{12}$. To show that the integral is independent of the choice of $Z^{12}$, one considers 
another ${Z}^{\prime 12}$ with boundary $Y^{11}$ and integrates over the closed 12-manifold 
$Q= Z^{\prime 12}\sqcup Z^{12}$. However, as remarked in \cite{Wi}, the usual quantization 
law of $G_4$ does not give rise to a well defined Chern-Simons action, as \eqref{csintwit} might 
fail to be integral by a factor of $6$. 

Note that our obstruction theory works just as well for a closed 12-manifold $Z^{12}$. 
In this case, the obstruction at the top stage of the tower gives the condition
$$
\widetilde{G}_4^3\equiv 0\mod 3\;.
$$
This is in addition to condition {\bf (iii)} in Prop. \ref{cor-C}, which states that 
$\widetilde{G}_4^2={\rm Sq}^4(\widetilde{G}_4)\equiv 0 \mod 2$. 
These two conditions together imply \footnote{Note that divisibility by 2 is not immediate, but can be deduced using the same argument in \cite[p. 12]{flux}, still without reference to an $E_8$-theory.} the divisibility by $6$ condition on $G_4^3$. 
 The crucial distinction is that 
 our congruences are obtained without reference to $E_8$-gauge theory. An alternative formulation of the 
 congruence via a proposed higher form of index theory is given in \cite{Sa1}\cite{Sa2}. 
\end{remark}

%%%%%%%%%%%%%%%%%%%%%%%%%%%%%
\subsection{Obstructions for unstable 4-cohomotopy}
\label{Sec-unstable} 
%%%%%%%%%%%%%%%%%%%%%%%%%%%%%

So far, our work has been limited to the stable range of cohomotopy in degree 4. In part, this is due to the fact 
that the obstruction theory in the unstable case is considerably more complicated. Moreover, working out the 
$k$-invariants in the Postnikov tower unstably does not yield information which can be directly compared 
with existing literature: there are many secondary and tertiary obstructions, which arise as classes defined 
modulo some ambiguity, but are not familiar primary obstructions or Massey products. Nevertheless, we 
highlight the following. 

\begin{remark}[Quaternionic Hopf fibration]
\label{Rem-QuatHopfFib}
One exception occurs in degree $8$, where we have a $k$-invariant 
coming from the quaternionic Hopf fibration, and takes 
the form 
$$
\xymatrix{
k:(S^4)_2\ar[r] & K(\Z_{12},8)\times K(\Z,8)\;.
}
$$
Mapping out of $Y^{11}$, we identify the projection to the factor $K(\Z,8)$ as $\phi_2^*(\widetilde{G}_4^2)$, 
where $\phi_2 :(S^4)_2\to K(\Z,4)$ is the map at the second stage. Killing the $k$-invariant at this stage 
corresponds to a choice of trivialization $\delta \widetilde{C}_7=\widetilde{G}_4^2$. From a complimentary point of view, this case is discussed in detail
in \cite{FSS-twist}\cite{FSS-level}. 
\end{remark} 

The following statement follows directly from the identification of the homotopy groups 
of $S^4$ in the relevant degrees and assembling them into the tower
one degree at a time. 
See \cite{To} for a tabulation corresponding to the stages.

\begin{lemma}[Unstable Postnikov tower of $S^4$] \label{unstpost}
Overall, the Postnikov tower takes the following form

$$
\xymatrix@C=2em@R=1.5em{
 & S^4\ar@{}[d]^{\!\!\!\!\! \vdots} 
\\
K(\Z_{15},11)\ar[r] & (S^4)_7\ar[d]
\\
K(\Z_{24}\times \Z_3,10)\ar[r] & (S^4)_6\ar[d]\ar[r] & K(\Z_{15},12)
\\
K(\Z_2 \times \Z_2,9)\ar[r] & (S^4)_5\ar[d]\ar[r] &K(\Z_{24}\times \Z_3,11)
\\
K(\Z_2 \times \Z_2,8)\ar[r] & (S^4)_4\ar[d]\ar[r] & K(\Z_2 \times \Z_2,10)
\\
  K(\Z_{12},7)\times K(\Z,7)\ar[r] &(S^4)_3 \ar[d] \ar[r] & K(\Z_2 \times \Z_2,9)
\\
   K(\Z_2,6)\ar[r] & (S^4)_2 \ar[d]\ar[r]^-{(-,\; \iota_4^2)}  &   K(\Z_{12},8)\times K(\Z,8)
\\
  K(\Z_2,5)\ar[r] & (S^4)_1 \ar[d]\ar[r]^-{\alpha_7}  &  K(\Z_2,7)
   \\
     & (S^4)_0=K(\Z,4) \ar[r]^-{{\rm Sq}^2\iota_4} & K(\Z/2,6)
}
$$
where we have identified the first few obstructions. \end{lemma}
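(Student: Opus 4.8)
The plan is to run the standard Postnikov‐tower construction for the simply connected space $S^4$ (so the tower is automatically one of \emph{principal} fibrations, see \cite{MP}) and then read off the fibers and the low $k$-invariants. First I would fix the indexing: since $\pi_i(S^4)=0$ for $i<4$ and $\pi_4(S^4)=\Z$, the $4$th Postnikov section of $S^4$ is $K(\Z,4)=:(S^4)_0$, and more generally $(S^4)_n$ denotes the $(n{+}4)$th Postnikov section, fitting into a principal fibration $K(\pi_{n+4}(S^4),n{+}4)\to (S^4)_n\to (S^4)_{n-1}$ classified by a $k$-invariant $k_{n+5}\in H^{n+5}\!\big((S^4)_{n-1};\pi_{n+4}(S^4)\big)$. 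This is the only structural input, and it is purely formal.

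Next I would input the homotopy groups of $S^4$ in the relevant range from Toda's tables \cite{To}: $\pi_5=\Z_2$, $\pi_6=\Z_2$, $\pi_7=\Z\oplus\Z_{12}$, $\pi_8=\Z_2\oplus\Z_2$, $\pi_9=\Z_2\oplus\Z_2$, $\pi_{10}=\Z_{24}\oplus\Z_3$, $\pi_{11}=\Z_{15}$. Since $K(A\oplus B,m)\simeq K(A,m)\times K(B,m)$, the fiber at stage $n$ is exactly the product of Eilenberg--MacLane spaces recorded in the left column of the diagram, which establishes the shape of the tower. It then remains only to justify the $k$-invariants that the lemma actually names. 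The first lies in $H^6(K(\Z,4);\Z_2)\cong\Z_2\langle {\rm Sq}^2\iota_4\rangle$ and must be nonzero (otherwise $(S^4)_1\simeq K(\Z,4)\times K(\Z_2,5)$, contradicting $\widetilde H^5(S^4)=0$ together with $\pi_5(S^4)\neq 0$), hence equals ${\rm Sq}^2\iota_4$; this is the first obstruction of \cref{prime2}. For the second, the Serre‐spectral‐sequence computation of $H^*((S^4)_1;\Z_2)$ already carried out in \cref{prime2} gives $H^7((S^4)_1;\Z_2)\cong\Z_2$, so the $k$-invariant is forced to be its generator $\alpha_7$, restricting fiberwise to ${\rm Sq}^2\iota_5$. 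At the third stage the fiber is $K(\Z_{12},7)\times K(\Z,7)$, and the component of the $k$-invariant in $H^8((S^4)_2;\Z)$ is the cup square $\iota_4^2=p^*(\iota_4\cup\iota_4)$: this is dictated by the quaternionic Hopf fibration $S^7\to S^4$, whose class generates the free summand of $\pi_7(S^4)$ and has Hopf invariant one, i.e.\ is detected exactly by the cup square (the unstable counterpart of the identification of ``${\rm Sq}^4\iota_4$'' with $\widetilde G_4\cup\widetilde G_4$ in \cref{Sec-ident}), as already observed in Remark \ref{Rem-QuatHopfFib}. The remaining, purely torsion $k$-invariants in degrees $\geq 8$ are left implicit.

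Finally, the main obstacle. The enumeration of fibers is immediate from \cite{To}, and the three named $k$-invariants are pinned down by the $\Z_2$-cohomology computations of \cref{prime2} together with the Hopf‐invariant‐one fact for $\nu$. What one \emph{cannot} do cheaply is identify the higher $k$-invariants: unstably these arise as secondary and tertiary operations carrying genuine indeterminacy, not expressible through familiar primary operations or Massey products, which is precisely why the statement claims only to have identified ``the first few obstructions.'' I would therefore present the proof with a clean three-part division: (a) Postnikov machinery $+$ \cite{To} for the fiber column; (b) \cref{prime2} plus the quaternionic Hopf fibration for the first three $k$-invariants; (c) an explicit acknowledgment that the higher $k$-invariants are not computed here, with a pointer to \cite{FSS-twist} for the degree-$8$ case in more detail.
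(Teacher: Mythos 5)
Your proposal is correct and follows essentially the same route as the paper, which justifies the lemma in one line by citing the identification of $\pi_{5}(S^4),\dots,\pi_{11}(S^4)$ from Toda's tables and the degree-by-degree assembly into a principal tower (with the named $k$-invariants supplied by \cref{prime2} and the quaternionic Hopf fibration, cf.\ Remark \ref{Rem-QuatHopfFib}). Your write-up is in fact more explicit than the paper's, particularly in arguing that the first two $k$-invariants are forced to be the nonzero generators by comparing $H^{*}((S^4)_n;\Z_2)$ with $H^{*}(S^4;\Z_2)$, and in honestly flagging that the higher, purely torsion $k$-invariants are left unidentified.
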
 
%\proof
%This follows directly from the identification of the homotopy groups 
%of $S^4$ in the relevant degrees and assembling them into the tower
%one degree at a time. 
%See \cite{To} for a tabulation.
%\endproof

The unlabeled obstructions in Lemma \ref{unstpost} are unknown
 to us and seem to be quite complicated. We hope to revisit this in the future.

\newpage 

%%%%%%%%%%%%
\subsection{Physical manifestations and examples} 
%Torsion in cohomotopy} 
\label{Sec-Ex} 
%%%%%%%%%%%%%%%

Most of the obstructions for lifting cohomology classes to cohomotopy are torsion obstructions, as 
we have seen.
Given that the fields $G_4$ and $G_7$ are classes which appear in cohomology with real coefficients, 
it is natural to wonder how torsion obstructions could impose constraints on these classes. For instance, 
we saw that there is an  obstruction $\alpha_7$ which acts on $\Z_2$-classes in degree 5. At first glance, 
this might seem awkward since no fields of degree 5 in M-theory -- see also Remark \ref{Rem-pure5}. 
In this section we offer further physical interpretations of this and similar obstructions.

\medskip
Many of the  anomaly cancellation conditions present in the 
M-theory literature require an integral lift of a real cohomology class.

\begin{remark}[The anomaly in the partition function]
 Quantization in cohomotopy yields the condition ${\rm Sq}^2(\widetilde{G}_4)=0$ for 
some integral lift of $G_4$. As highlighted in \cite{FSS-twist}, this immediately implies 
the vanishing of the DMW anomaly \cite{DMW} ${\rm Sq}^3(\widetilde{G}_4)=0$. 
From the obstruction theory for $S^4$, we have an exact sequence of pointed sets
(cf. relation \eqref{deg5})
\(\label{exseqdc}
\hspace{-5mm} 
\xymatrix{
0\ar[r]&  H^5(Y^{11};\Z_2)/{\rm Sq}^2H^3(Y^{11};\Z)\ar[r]^-{i_*} & 
[Y^{11},(S^4)_1]\ar[r]^-{p_*} & \big\{x\in H^4(Y^{11};\Z);  {\rm Sq}^2(x)=0\big\}\ar[r] & 0 
},
\)
where $i_*$ is induced by post-composition with the fiber inclusion $K(\Z_5,5)\to (S^4)_1$ and $p_*$ is induced by post-composition with the fibration $(S^4)_1\to K(\Z,4)$.
In \cite{DMW} it was shown that the phase of the partition function for the $C$-field on 
$X^{10}\times S^1$ is $\pm 1$, depending on the the vanishing of a function 
$f:H^4(X^{10};\Z)\to \Z_2$. Now $f$ is not linear, but obeys the relation 
\(\label{dmwsq2p}
f(a+b)=f(a)+f(b)+\int_{X}a\cup {\rm Sq}^2(b)
\)
and $f(a)=0$ when $a=0$. In their notation,
% of \cite{DMW}, 
$a$ is a choice of integral lift of the $G_4$. 
The discussion in \cite[Section 6.2]{DMW} notes that if $f$ were linear, then the contribution of $a$ 
to the partition function should vanish unless $f(a+c)=f(a)$ with $c$ torsion (i.e., $f$ should not actually 
depend on the choice of integral lift). However, the last term on the right of \eqref{dmwsq2p} prevent $f$ 
from being linear. To circumvent this issue, the authors consider the subset $L^{\prime}$ of all 
torsion $c\in H^4(Y^{11};\Z)$ such that \footnote{Actually the weaker condition 
${\rm Sq}^3(c)=0$ is considered, but the discussion works
 equally well if we pass to this smaller class.} ${\rm Sq}^2(c)=0$ and analyze the nonvanishing 
 conditions of the phase
\(\label{parphsdmw}
\sum_{c\in\,  L^{\prime}}(-1)^{f(a+c)}.
\)
What is interesting is that the condition that $c$ lift to cohomotopy already forces $c$ to be in 
$L^{\prime}$, by the exact sequence \eqref{exseqdc}. In fact, the calculation using the torsion 
pairing in \cite[p. 42]{DMW} also shows that \footnote{In \cite{DMW}, ${\rm Sq}^3$ is used, 
but the same discussion works with 
${\rm Sq}^2$ by letting $M$ be $H^6(X^{10};\Z_2)$ and using the cup product pairing 
$$\int_{X^{10}}(-)\cup (-):L^{\prime}\times M\longrightarrow \Z_2$$
directly instead of the induced torsion pairing.} 
the condition on $a$ becomes that (after possible 
modification by a torsion class) ${\rm Sq}^2(a)=0$.  It follows that the fields which contribute 
to the phase \eqref{parphsdmw} are just the field which lift to the first Postnikov stage in cohomotopy. 
\end{remark} 

\begin{remark}[Mod 2 invariant and geometric submanifolds] 
\label{rem1-PT}
There is another mod 2 invariant which can be defined using cohomotopy. Recall that by 
Pontrjagin-Thom theory, $[Y^{11},S^4]$ can be identified with framed bordism classes of 
7-dimensional submanifolds. Let $M$ be a 7-dimensional submanifold defined by a map 
$Y^{11}\to S^4$ and let $\phi:M\times \R^4\to \mathcal{N}$ be the framing of the normal bundle. 
Then a choice of volume form $\omega$ on $Y^{11}$ naturally gives rise to a volume form 
$\omega_{\phi}$ on $M$ by contracting out the four unit normal vector fields, defined via $\phi$. 
Moreover, if $\omega$ is integral on $Y^{11}$, then so is $\omega_{\phi}$. This gives an assignment
$$
\xymatrix{ 
[Y^{11},S^4]=\{([M],\phi), M\subset Y^{11}\}\ar[r]& \int_{M}\omega_{\phi} \mod 2\in \Z_2
}.
$$
This assignment is additive with respect to disjoint union, defines a group homomorphism,
and gives the parity of the volume of $M$. We will come back to this in Remark \ref{rem2-PT}.
\end{remark} 

\begin{remark}[Lifts of integral cohomology classes to K(O)-theory]
 As we saw above, in order to read the condition (see also \cite{FSS-twist})
\(
\label{sq2g4}
{\rm Sq}^2(G_4)=0
\)
properly, one needs to choose an integral lift $\widetilde{G}_4$ of $G_4$ and there is no canonical way 
to do this. For the analogous case of ${\rm Sq}^2(F_4)$, where $F_4$ is the Ramond-Ramond (RR) 
field from which $G_4$ is lifted from $X^{10}$ to $Y^{11}=X^{10} \times S^1$, this is interpreted 
as a condition on an integral lift of $F_4$ in order that it lift to K-theory \cite{DMW} (see 
\cite{GS5}\cite{GS-RR} for extensive discussions of such lifts). 
This indicates that the partition function of the RR fields is sensitive to the choice of integral lift 
of $F_4$ (in addition to other degrees as well). 
The condition at hand \eqref{sq2g4} provides an analogous sensitivity to the integral lift
$\widetilde{G}_4$ of $G_4$ as well as to lifting to KO-theory instead of K-theory.  
\end{remark}

%%%%%%%~%~%%%
%\section{Examples} 
%%%%%%%%%%%%%%%

\begin{remark}[Purely cohomotopic contribution] 
\label{Rem-purelyco}
We give an instance where cohomotopy gives a contribution even when the corresponding 
cohomology is trivial (complimenting Remark \ref{Rem-pure5}).  
The choice of generator of $H^4(S^4; \Z)$ defines a map $S^4 \to K(\Z, 4)$ and hence a homotopy
fibration sequence
$$
K(\Z, 3) \longrightarrow F \longrightarrow S^4 \longrightarrow K(\Z, 4)\;,
$$
with $F$ the homotopy fiber.
This gives an exact sequence of pointed sets 
$$
H^3(Y^{11}; \Z) \longrightarrow [Y^{11}, F] \longrightarrow
 [Y^{11}, S^4] \longrightarrow H^4(Y^{11}; \Z)\;.
$$
If $H^4(Y^{11}; \Z)=0=H^3(Y^{11}; \Z)$ then we get a bijection $[Y^{11}, F]=[Y^{11}, S^4]$.
We know that, by definition, $\pi_i(F)=0$ for $i\leq 4$, while $\pi_5(F)\cong \pi_5(S^4) \cong \Z_2$.
Hence, by cellular approximation, we get 
$$
[Y^{11}, F]\cong [Y^{11}, K(\Z_2, 5)]\cong H^5(Y^{11}; \Z_2)\;.
$$
Therefore, we get that degree 4 cohomotopy gives a contribution 
to cohomology in higher degree, in this case degree  five, 
$\big\vert[Y^{11}, S^4] \big\vert= \big\vert H^5(Y^{11}; \Z_2) \big\vert$. 
See also Remark \ref{Rem-pure5} for an interpretation. 
%\cite{LT}

%Is this a Stiefel-Whitney class? Use Karoubi-Weibel and our KO to detect. 
%If so then this is related to $W_5$ in \cite{DFM}.  
%{\color{red} This example is essentially the same as the one in remark 3.2. The degree 5 class is related 
%to ${\rm Sq}^2(G_4)$ be transgression. I'm sure you could pull out some relationships with Stiefel-Whitney 
%using Wu, but I don't think this is much related to $W_5$ (which is an integral class anyways). 
%When $W_5$ is zero, $w_4$ agrees with $\frac{1}{2}\lambda$ in cohomology with $U(1)$-coefficients 
%(this is  the discussion in \cite{DFM}), so I think $W_5$ should be more related to the twist. 
%Schematically,
%$$W_5=0 \Rightarrow j_2(w_4)={\rm exp}(\tfrac{1}{2}\lambda)\Rightarrow G_4=
%a-\tfrac{1}{2}\lambda ~~\text{not integral unless} ~~ w_4=0$$}
\end{remark}

%
%\paragraph{Group structure.} A space $X$ is called $n$-coconnected if the cohomology 
%group $H^q(X; \Z)=0$ for every $ q\geq n$. 
% In particular, $X$ is $n$-coconnected if $\mathrm{dim}X\leq n$.
% 
%\medskip
% If $X$ is $7$-coconnected cellular space, then $\pi^4(X)$ 
%forms an abelian group with the addition $\alpha+ \beta$ as its group operation
%and $0$ as its group-theoretic natural element. If $\alpha \in \pi^4(X)$ is represented
%by a map $\phi: X \to (S^4, s_0)$, then its negative, $-\alpha$, is represented 
%by the composition $r \phi$, where $r: (S^4, s_0)\to (S^4, s_0)$
%is an arbitrary map of degree $-1$. 
%{\color{red} I don't understand how we get a group structure here. Is that really the right definition for coconnected? Also, is this leading to examples?}
%{\color{green} The above bit is recycled from other files and was waiting to be looked at,
%and it appears in the proof of some of the examples. 
%Now that we see where we're going and we are not proving the examples, it is not relevant anyway.} 

\begin{examples}[Flux compactification spaces] \label{Flux-ex}
We consider the following examples, involving Anti-de Sitter space
${\rm AdS}_n$. This space is homotopically essentially trivial aside from the fundamental 
group. In order stay away from matters related to insisting the action of the fundamental 
group to be nice (e.g., nilpotent), we will assume simply-connectedness, which will 
ensure the homotopy techniques can be safely used. This then can be arranged by 
 taking the universal cover $\widetilde{\rm AdS}_n$ of ${\rm AdS}_n$. 
%This also follows from general arguments on the dimension. 

\begin{enumerate}[{\bf (i)}] 
\vspace{-2mm} 
\item \underline{$\widetilde{\rm AdS}_7 \times \R P^4$}: This example is important in considering M-theory on 
an orientifold \cite{Wi}\cite{Ho}.
The internal space $\R P^4$ is obtained by attaching a 4-cell 
to $\R P^3$ by the quotient projection $f_3: S^3 \to \R P^3$ which identifies the antipodal points. Collapsing 
the subspace $\R P^3 \subset \R P^4$ to a point yields a map $q_4: \R P^4 \to S^4$. This gives 
rise to an element $[q_4] \in \pi^4(\R P^4)$. Then, from \cite{We}, we have $\pi^4 (\R P^4)\cong\Z_2$ 
with generator $[q_4]$. Comparing with integral cohomology, $H^4(\R P^4; \Z)= 0$, indeed
shows that cohomotopy detects more.

\vspace{-2mm} 
\item \underline{$\widetilde{\rm AdS}_4 \times \C P^2 \times T^2$}: This example is important in 
supersymmetry without supersymmetry 
\cite{DLP} and T-duality \cite{BEM}. 
We will again take the covering space of the AdS factor. Furthermore, note that 
the $T^2$ factor does not contribute to cohomotopy due to dimension reasons. 
The complex projective space
$\C P^2$ is obtained by attaching a 4-cell to $\C P^1=S^2$ 
by the Hopf map $f_1: S^3 \to \C P^1$, which is also the Hopf map $\eta_2$ above. 
Collapsing $\C P^1=S^2 \subset \C P^2$ to a point yields a map $q_2: \C P^2 \to S^4$. 
Then, from \cite{We},  $\pi^4(\C P^2) \cong \Z$ with generator $[q_2]$. 
Comparing to cohomology, we have $H^4(\C P^2; \Z) \cong \Z$, so that in this case, the two 
coincide, so that no new information is supplied by cohomotopy. 

\vspace{-2mm} 
\item \underline{$\widetilde{\rm AdS}_7 \times \C P^2$}: The example is similar to the previous. 
Passing again to the simply connected cover of ${\rm AdS}_7$, the only nontrivial 
contribution again comes from $\pi^4(\C P^2)\cong \Z$, again with no extra contribution.  
%{\color{red} Added a bit here. Feel free to fill it out.}

\vspace{-2mm} 
\item \underline{$\widetilde{\rm AdS}_4 \times \R P^5 \times T^2$}: It follows from  \cite{We} that 
$\pi^4(\R P^5)$ is cyclic or order 4, i.e. either $\Z_4$ or $\Z_2 \times \Z_2$, with 
generator $[\eta_4 q_5]$ where $\eta_4: S^5 \to S^4$ 
is the 2-fold iteration of the Hopf map $\eta_2: S^3 \to S^2$, and $q_5$ is defined analogously to 
$q_4$ from above.  On the other hand, $H^4(\R P^5; \Z)\cong \Z_2$, so that 
there is further contribution from cohomotopy, either as an extra $\Z_2$ or as a $\Z_4$ vs. $\Z_2$.

%\attn{Witten's baryons} 

%\item ${\rm AdS}_4 \times \R P^6 \times S^1$: 
%The order is \cite{We} $|\pi^4(\R P^6)|=8$. 
%

\vspace{-2mm} 
\item \underline{$\widetilde{\rm AdS}_4 \times \C P^3 \times S^1$}: This example is also 
important in the phenomenon of 
supersymmetry without supersymmetry. Let $i_2: \C P^2 \hookrightarrow \C P^3$ denote 
the inclusion and $2: S^4 \to S^4$ a map of degree 2. Then, again invoking \cite{We},
$$
\pi^4(\C P^3) \cong \Z \oplus \Z_2
$$
where the generator of $\Z$ is $[\alpha_3]$ where $\alpha_3 i_2 \simeq 2q_2$ and the 
generator of $\Z_2$ is $[\eta_4 \eta_3 q_3]$. Comparing to cohomology, 
we have $H^4(\C P^3; \Z) \cong \Z$, so that there is an extra contribution of $\Z_2$ present in 
cohomotopy. 
\end{enumerate} 
\vspace{-3mm} 
We have seen that in several backgrounds there is an extra torsion contribution from cohomotopy
over integral cohomology. This is an interesting effect that deserves further investigation, to which
we hope to get back elsewhere.
\end{examples} 

 \medskip
\begin{examples}[Quaternionic and octonionic projective planes] 
\label{Ex-fluxcomp2}
Similarly for $\HH P^2$ 
and $\mathbb{O}P^2$, we have the following, again making use of some of the constructions in \cite{We}.
\vspace{-1mm} 
\item {\bf (i)}  \underline{For $\HH P^2$}: Consider the Puppe sequence or the mapping 
cone sequence of the quaternionic Hopf fibration
$$
\xymatrix{
S^{7} \ar[r]^{h_\HH} & S^4 \ar[r]^p & \HH P^2 \ar[r]^q & S^{8}
 \ar[r]^{\Sigma h_{\HH}} & S^5 \ar[r] & \hdots 
}
$$
Now apply the 2-fold suspension $\Sigma^{2}$. This gives
$$
\xymatrix{
S^{9} \ar[rr]^{\Sigma^{2}h_\HH} && S^{6} \ar[r]^-{\Sigma^{2} p} 
& \Sigma^{2} \HH P^2 \ar[r]^-{\Sigma^{2} q} &
S^{10} \ar[rr]^{\Sigma^{2} h_{\HH}} && S^{7} \ar[r] & \hdots 
}
$$
Taking the cohomotopy groups gives the long exact sequence
$$
\xymatrix@C=1.5em{
\pi^{6}(S^{9}) \ar[rr]^-{(\Sigma^{2}h_\HH)^*} && \pi^{6}(S^{6}) \ar[rr]^-{(\Sigma^{2} p)^*} && 
\pi^{6}(\Sigma^{2} \HH P^2) \ar[rr]^-{(\Sigma^{2} q)^*} &&
\pi^{6}(S^{10}) \ar[rr]^-{(\Sigma^{2} h_{\HH})^*} && \pi^{6}(S^{7}) \ar[r] & \hdots 
}
$$
Now 
$\pi^{6}(S^{9}) \cong \pi_{9}(S^{6}) \cong \Z_{24}$,
$\pi^{6}(S^{6})\cong \pi_{6}(S^{6})\cong \Z$, and 
$\pi^{6}(S^{10})=0$, so we have a sequence 
$
\Z_{24} \to \Z \to  A \to  0\
$.
As a group homomorphism 
from $\Z_{24}$ to $\Z$ necessarily vanishes (since the image consists of zero divisors), 
this  gives $\pi^{6}(\Sigma^{2} \HH P^2)\cong A \cong \Z$. 
The exact sequence 
above gives the isomorphism.
Therefore, 
\(
\pi^4(\HH P^2)\cong \Z\;.
\label{pi4HP2}
\)
As in the complex case, this agrees with cohomology, $H^4(\mathbb{H} P^2; \Z)\cong \Z$, and hence 
no new contribution, but of course there is a compatibility. The quaternionic projective plane is also 
a compatification space for M-theory and also appears in anomaly cancellation (see \cite{FSS-twist}).

\vspace{0mm} 
\item {\bf (ii)} 
%Similar sequences exist for $\C P^2$ and the 
%For the Cayley plane $\mathbb{O} P^2$. 
%Since we have already pointed out that $\pi^4(\C P^2)\cong \Z$ in example 2 and 3 above, we illustrate with 
\underline{For $\mathbb{O} P^2$}:  In the octonionic case we have a cofiber sequences
$
S^{15}\longrightarrow S^8\longrightarrow \mathbb{O} P^2\longrightarrow S^{16}\longrightarrow S^{9},
$
which (after suspending 4-times) yields
$$
\pi_{19}(S^8)\longrightarrow \pi_{12}(S^8)\longrightarrow \pi^8(\Sigma^4\mathbb{O} P^2)\longrightarrow \pi_{20}(S^8)\;.
$$
Identifying low-dimensional homotopy groups of spheres gives the exact sequence
$
\Z_{1008}\to 0 \to \pi^4(\mathbb{O} P^2)\to 0
$, so that 
$$
\pi^4(\mathbb{O} P^2)\cong 0\;.
$$
%{\color{blue} Below is no longer relevant, since cohomology agrees with cohomotopy. Can you fix this?}
This is similar to the complex and quaternionic cases, although the comparison to 
to cohomology is different, in that we also have $H^4(\mathbb{O}P^2; \Z) =0$.
Perhaps this is not surprising, as the dimension takes
us outside those of critical M-theory and string theory, but are 
very interesting for the bosonic case (see \cite{OP2}\cite{supermultiplet}). 

\medskip
The effects in these examples of projective spaces also deserve further 
investigation. 
\end{examples}

%%%%%%%%%%%%
\section{Differential refinements:  $\mathbf{B}^3U(1)_\nabla$ vs. $\widehat{S^{\,4}}$} 
\label{Sec-DiffCoh} 
%%%%%%%%%%%%%

%%%%%%%%%%%%%%%%%%%%%
\subsection{Differential cohomotopy} 
\label{Sec-diffcoh} 
%%%%%%%%%%%%%%%%%%%%

Here we expand on the discussion of differential cohomotopy in \cite{FSS2}.
As with any differential refinement, differential cohomotopy involves an interplay between 
topological information on smooth manifolds and the geometric information of differential 
forms via a general de Rham type theorem. The basic ingredients for this 
general machinery can be found in \cite{Cech}\cite{SSS3}\cite{FSS14c}
and our discussion here will assume familiarity with these 
ingredients. We encourage the reader to consult these references for more details as needed.

\medskip 
We recall the smooth category of cartesian spaces, which we denote $\cartsp$. The objects in 
this category are convex open subsets $U\subset \R^n$, and the morphisms are smooth 
maps $f:U\to V$. This category admits a Grothendieck topology generated by good open covers.
 By {\it smooth stack}, we mean an $\infty$-groupoid valued functor on the site of cartesian 
 spaces that satisfies descent. As a concrete model, we will work with simplicial presheaves on 
 $\cartsp$, equipped with the local projective model structure (where local means Bousfield 
 localized at {\v C}ech nerves of good open covers). We denote this model category as 
$$
\sPSh(\cartsp)_{\rm loc}:={\rm Fun}(\cartsp^{\rm op},\sset)_{\rm proj,loc}\;.
$$

\begin{remark}[Simplicial presheaves] 
\label{constsh}
Most of our categorical constructions will take place in simplicial presheaves. In particular, we can regard 
a simplicial set as a simplicial presheaf via the constant stack functor 
  $$
  \delta:\sset\to \sPSh(\cartsp)_{\rm loc}, \qquad \delta(X)(U):=X\;.
  $$
  We can also regard an ordinary $\set$-valued presheaf $F:\cartsp^{\rm op}\to \set$ as a simplicial 
  presheaf via the left Kan extension 
  $$
  i_!:\PSh(\cartsp)\to \sPSh(\cartsp)_{\rm loc}
  $$ 
  along the inclusion functor $i:\cartsp\into \cartsp\times \Delta$ that sends $U\mapsto (U,[0])$. 
  Concretely, $i_!(F)(U)$ is the nerve of the discrete groupoid with object $F(U)$ and only identities 
  as morphisms. Whenever an ordinary presheaf appears with simplicial presheaves, we implicitly 
  embed the presheaf using $i_!$ and we will not include $i_!$ in the notation.
\end{remark}

\medskip
Let $\mathfrak{s}^4$ be the Lie 7-algebra whose corresponding Chevellay-Eilenberg algebra is the exterior 
algebra on generators $g_4$ and $g_7$ with relations (see \cite[Sec. 3]{FSS2} for details) 
$$
dg_4=0 ~,  \qquad dg_7=g_4\wedge g_4\;.
$$
As a de Rham model for flat 1-forms with values in $\mathfrak{s}^4$, 
we take the sheaf on the site of cartesian spaces 
given by the assignment 
$$
\xymatrix{
\Omega_{\rm fl}^1(-;\mathfrak{s}^4): U
\ar@{|->}[r] &
\hom_{\rm dgcAlg}({\rm CE}(\mathfrak{s}^4),\Omega^*(U))
},
$$
for each cartesian space $U\cong \R^n$. Here the morphisms in the set on the right are taken in differentially 
graded commutative algebras. Following Remark \ref{constsh}, we can view this sheaf as a zero-truncated 
smooth stack (which we denote by the same symbol). 

\medskip
Every smooth stack has a homotopy type associated to it, given by applying the left adjoint 
to the constant functor $\delta$. The existence of this functor is part of the axioms of a cohesive 
$\infty$-topos (see \cite[Section 4.1]{Urs}). Concretely, if we are given a smooth stack $X$, 
the homotopy type is given by the formula \cite[Corollary 6.4.28]{Urs}
\begin{equation}\label{realize}
\vert X \vert\simeq \hocolim_{[n]\in \Delta^{\rm op}}X(\Delta^n).
\end{equation}
By the above formula, we see that the homotopy type of $\Omega_{\rm fl}^1(-;\mathfrak{s}^4)$ can be computed 
via the Sullivan construction\footnote{This construction is essentially the same as the familiar construction of 
a rational space in rational homotopy theory, but over the field $k=\R$. Note however, that we have taken smooth 
forms instead of polynomial forms. That this agrees with the usual Sullivan construction follows readily from 
the fact that $A^*_{\rm PL}(\Delta^n)\into \Omega^*(\Delta^n)$ is a quasi-isomorphism of complexes. 
See \cite{FSS2}.} 
as the $\R$-local 4-sphere, which we denote $S^4_{\R}$. 
Then, taking the homotopy pullback along the localization map 
$L_{\R}:S^4\to S^4_{\R}$ and the unit of the adjunction $\textstyle \int:\Omega^1_{\rm fl}(-;\mathfrak{s}^4)\to \delta(\vert \Omega_{\rm fl}^1(-;\mathfrak{s}^4)\vert)$, we get a smooth stack
$$
\xymatrix@R=1.5em@C=3.5em{
\widehat{S}^{\,4}\ar[r]\ar[d] & \Omega_{\rm fl}^1(-;\mathfrak{s}^4)\ar[d]^-{\textstyle{\int}}
\\
\delta(S^4)\ar[r]_-{\delta(L_{\R})} & \delta(S^4_{\R}) \;.
}
$$
We have the following natural notion. 
\begin{defn}[Differential unstable cohomotopy]\label{unstablecoh}
For a smooth manifold $X$, the differential cohomotopy of $X$ in degree 4 is defined as the pointed set 
$$
\widehat{\pi}_u^{\,4}(X):=\pi_0\text{Map}\big(X,\widehat{S}^{\,4}\big),
$$
where the mapping space on the right is the derived mapping space between smooth stacks and $X$ 
is viewed as a sheaf on $\cartsp$ by the assignment $U\mapsto C^{\infty}(U,X)$ (see Remark \ref{constsh}).
\end{defn}
This gives a geometric model for \textit{unstable cohomotopy}, but we will also need 
a geometric model for \textit{stable cohomotopy}. In \cite[Section 4.4]{BNV}, following the construction 
in \cite[Section 4]{HS}, it was shown that there is a sheaf of spectra modelling a differential cohomology 
theory (in the sense of \cite{SSu}) refining a given underlying cohomology theory $E^*$. In our case, we 
are concerned about the cohomology theory given by stabilizing cohomotopy in degree 4. Stably, $S^4$ 
has only torsion groups in 
higher degrees and hence the canonical map $S^4\to K(\R,4)$ is a stable $\R$-local equivalence. 
Geometrically, the realification is modeled by closed differential 4-forms $\Omega^4_{\rm cl}(-)$. Hence, according to \cite[Section 4.4]{BNV}, the differential refinement of cohomotopy in degree 4 can be taken to be the homotopy pullback
$$
\xymatrix@R=1.5em@C=3em{
\widehat{\Sigma^{\infty}S^{\, 4}}\ar[r]\ar[d] & H\Big(\tau^{\leq 0}\Omega^{4+*}(-)\Big)\ar[d]
\\
\delta(\Sigma^{\infty}S^4)\ar[r] & \delta(\Sigma^4H\R)\;,
}
$$
where $\Omega^{4+*}(-)$ denotes the de Rham complex, shifted so that $\Omega^4$ is in 
degree zero, and $\tau^{\leq 0}$ truncates the complex in degree zero so that the complex is 
concentrated in negative degrees. The functor $H$ denotes the Eilenberg-MacLane functor (see e.g. \cite{Shi}) which turns a chain complex into a spectrum and $\Sigma^{\infty}$ denotes the infinite suspension functor, which associates a spectrum to a space (or simplicial set). 

\begin{defn}[Differential stable cohomotopy]
Let $X$ be a smooth manifold. The \textit{stable} differential 
cohomotopy group of $X$ is defined as
$$
\widehat{\pi}_s^{\,4}(X):=\pi_0\text{Map}\big(X,(\widehat{\Sigma^{\infty}S^{\,4}})_0\big),
$$
where the subscript $0$ denotes the degree zero stack of the sheaf of spectra $\widehat{\Sigma^{\infty}S^4}$ (i.e., its infinite loop stack). The manifold $X$ is viewed as a smooth stack as in Definition \ref{unstablecoh}.
\end{defn}

Ultimately, we will be most interested in the above unstable version of differential cohomotopy. 
However, the stable version will be useful as an approximation and is topologically easier to analyze 
(as we have seen in \cref{Sec-EMvsS4}).

\medskip
\noindent
{\bf Geometric meaning of cocycles.} We now discuss a geometric interpretation for cocycles in 
differential cohomotopy. More precisely, we address what type of geometric data a differential 
cocycle $\hat{c}:M\to \widehat{S}^{\,4}$ classifies.

\newpage 

\begin{defn}[Geometric cohomotopy cocycles]
If $X$ is a smooth manifold, a morphism $\hat{c}:X\to \widehat{S}^{\, 4}$ can be identified with a 
triple $(c,h,\omega)$ where 
\begin{itemize}
\vspace{-2mm} 
\item
$c:X\to S^4$ is a cocycle in ordinary cohomotopy, 
\vspace{-2mm} 
\item
$\omega:{\rm CE}(\mathfrak{s}^4)\to \Omega^*(X)$ is a DGA morphism, determined by specifying forms 
$\omega_4$ and $\omega_7$ on $M$ satisfying $d\omega_7=\omega_4^2$ and $d\omega_4=0$,
\vspace{-2mm} 
\item
 and $h$ is a homotopy interpolating between the rational cocycle represented by the form data and the 
 rationalization of the classifying map $c:X\to S^4$. Thus, $h$ exhibits a sort of de Rham theorem for 
 cohomotopy. 
\end{itemize} 
\end{defn} 

\medskip
\begin{remark}[Relation to the Pontrjagin-Thom (PT) construction] 
\label{rem2-PT}
\vspace{-2mm} 
%\item {\bf (i)} 
Recall from Remark  \ref{rem1-PT} that by the PT construction, a mapping $c:X\to S^4$ classifies 
a bordism class of framed 
codimension 4 submanifolds of $X$. This correspondence realizes the codimension 4 submanifold 
$M$ as the preimage of a fixed regular value on $S^4$ and maps the closure of a tubular neighborhood 
of $M$ in $X$ onto $S^4$ via the given framing of the normal bundle  
$\mathcal{N}\cong \R^4\times M\to S^4\times M\overset{\rm pr}{\to} S^4$. 
%\vspace{-1mm} 
%\item {\bf (ii)} 
Hence, the cocycle $\hat{c}$
gives in particular the data of a codimension 4 submanifold $M\subset X$. It also gives a choice 
of fiberwise volume form $\omega_4=c^*g_4$ of the trivial sphere bundle, where 
$g_4\in {\rm CE}(\mathfrak{s}^4)$ is identified with a choice of volume form for the sphere $S^4$. 
%\vspace{-1mm} 
%\item {\bf (iii)} 
Much more could be said about the geometric model provided by the Pontrjagin-Thom 
equivalence, but this falls outside the scope of the present paper. We only include this brief discussion 
to provide some conceptual geometric intuition.

\noindent In view of geometric interpretation via volume forms, we can introduce dynamics by 
throwing in a  radius as a parameter, viewed as the breathing mode (see, e.g., \cite{LS}). 

\end{remark}

%\begin{remark}
%Note that $\widehat{Sq}^3(\hat{x})$ may be nonvanishing even when $Sq^3(\rho_2(x))$ vanishes. In this case, however, $\widehat{Sq}^3(\hat{x})$ can be identified with the image of a differential form $C$, under the map 
%$$a:\Omega^{p+2}(X)\to \widehat{H}^{p+3}(X;\Z)\;.$$
%We will come back to this point in a moment.
%\end{remark}
%
%To build the Postnokiv tower in the differential setting, recall that we have a fiber sequence
%$$\Omega^{\leq p}\to \mathbf{B}^{p}U(1)_{\nabla}\to K(\Z,p+1)\;,.$$
%Then if $k_p$ is the $k$-invariant at the $p$-th stage of the Postnikov tower, we have 
%$$
%\xymatrix{
%\widehat{X}_{p+1}\ar[r] & \widehat{X}_p\to X_p\ar[d]
%\\
% & \mathbb{B}^{p-1}U(1)_{\nabla} & K(\Z;p)
%}
%$$

%%%%%%%%%%%%%%%%%%%%%%%%%%%%%%%%%%
\subsection{Torsion obstructions in differential cohomology}
\label{Sec-TorDiff} 
%%%%%%%%%%%%%%%%%%%%%%%%%%%%%%%%%%%%%%%%

We saw in \cref{Sec-Z} that the Postnikov tower for the 4-sphere has many $k$-invariants which are 
torsion classes. For our physics applications, the tower must be refined to obtain an obstruction theory 
for lifting cohomotopy classes to the differential refinement of cohomotopy and it is not completely 
clear how to deal with such obstructions in the refinement. Indeed, the obstruction theory for differential 
refinements is obtained by Chern-Weil form representatives of the $k$-invariants, and one requires these 
forms to trivialize when the topological obstructions vanish (the choice lift through the next stage in the 
tower gives rise to the trivialization). Since Chern-Weil theory is not available for torsion classes, we need 
to find an alternative method for the differential refinement. 

\medskip
Recall that the moduli stack of circle $n$-bundles with connection fits into a homotopy pullback diagram 
\cite{Cech} \cite{SSS3}\cite{FSS14c}
$$
\xymatrix@C=3em@R=1.5em{
\mathbf{B}^nU(1)_{\nabla}\ar[r]^-{R}\ar[d]_-{I} 
& \Omega^{n+1\ar[d]}_{\rm cl}
\\
K(\Z,n+1)\ar[r] & \Omega^{\leq n+1}_{\rm cl}
}
$$
where $\Omega^{\leq n+1}_{\rm cl}$ is obtained by applying the Dold-Kan functor to the sheaf of 
positively graded chain complexes
$$
\Omega^{\leq n+1}_{\rm cl}:=\Gamma\big( \hdots \longrightarrow 0 \longrightarrow \Omega^0\longrightarrow 
\hdots \longrightarrow \Omega^{n+1}_{\rm cl}\big)
$$
and $K(\Z,n+1)\to \Omega^{\leq n+1}_{\rm cl}$ is induced by the inclusion $\Z\into \Omega^0$. 
\footnote{The smooth stack $\Omega^{\leq n+1}_{\rm cl}$ represents cohomology with $\R$ 
coefficients in degree $n+1$. In fact, the canonical inclusion 
$$
\xymatrix@R=1em{
\R\ar[d]\ar[d] \ar[r] & 0  \ar[r] \ar[d] & \cdots \ar[r] \ar[d] & 0 \ar[d] 
\\
\Omega^0\ar[r]^-{d} & \Omega^1\ar[r]^-{d} &
\cdots\ar[r] 
& \Omega^{n+1}_{\rm cl}
}
$$
is a quasi-isomorphism of sheaves of complexes, inducing an isomorphism 
$H^{n+1}(X;\R)\cong H^0\big(X;\R[n+1]\big)\cong H^0\big(X;\Omega^{\leq n+1}_{\rm cl}\big)$.}

\begin{remark}[Integral lifts of differential forms]
Consider any map $\hat{k}:\mathbf{B}^nU(1)_{\nabla}\to K(\Z_p,m)$. 
Since $K(\Z_p,m)$ is a geometrically discrete (i.e., a constant stack), the map 
$\hat{k}$ factors through the corresponding topological realization of the domain 
as 
$$
\xymatrix{ 
\hat{k}:\mathbf{B}^nU(1)_{\nabla}\ar[r]^{I} &
K(\Z,n+1)\ar[r]^{k} & K(\Z_p,m)
}.
$$
By the pasting law for pullbacks, we have iterative fiber products
$$
\xymatrix@C=3em@R=1.5em{
\widehat{F}\ar[d]\ar[r]\ar[d] & F\ar[d]\ar[r] & \ast \ar[d]
\\
\mathbf{B}^nU(1)_{\nabla}\ar[d]\ar[r]^{I}& K(\Z,n+1)\ar[r]^-{k} \ar[d]& K(\Z_p,m)
\\
\Omega^{n+1}_{\rm cl}\ar[r] & \Omega^{\leq n+1}_{\rm cl}
}
$$
For each fixed manifold $X$ mapping to the diagram, this naturally gives rise to a map
$$
\xymatrix{
[X,\widehat{F}]\ar[r] & \Omega^{n+1}_{\rm cl}(X)\times_{H^{n+1}(X;\R)} [X,F]}.
$$
The group on the right can be identified with differential forms whose de Rham class is in the image of 
the composite 
$$
\xymatrix{
[M,F]\ar[r] & H^{n+1}(M;\Z)\ar[r] & H^{n+1}(M;\R)}
.
$$
Such conditions can be realized as conditions on the possible integral lifts of differential forms. 
\end{remark} 

\begin{remark}[From differential forms to torsion constraints]  
From the above discussion, we see that if we take the usual fiber at Postnikov stages with torsion 
$k$-invariants, then differential forms still detect this information. More precisely, passing to the 
fiber leads to more constrained quantization conditions on the differential forms. This is precisely 
what is needed for our applications and we will treat torsion obstructions in this manner. 
\end{remark} 

\begin{example}[Constraints associated with reduction of coefficients] 
Let us take $k$ to be the mod $p$ reduction $k=\rho_p:K(\Z,n+1)\to K(\Z_p,n+1)$.
Then $F$ is easily seen to be $K(\Z,n+1)$ and the canonical map out of the fiber is 
$$
\xymatrix{ \times p:K(\Z,n+1)\ar[r] & K(\Z,n+1)}.
$$ 
Hence, classes in $[M,\widehat{F}]$ give rise to closed forms which, when paired with cycles gives an integer 
divisible by $p$. 
Such divisibility conditions, in the context of describing fields via K(O)-theory,
 are discussed extensively in \cite{GS7}\cite{GS-RR}. 
\end{example}

\begin{example}[Obstructions via refinement of cohomology operations] 
Consider the refinement of the Steenrod square 
${\rm Sq}^2$, given by the composition \cite{GS2}
$$
\xymatrix{
\widetilde{\rm Sq}^3:\mathbf{B}^mU(1)_{\nabla}\ar[r]^-{I} &
 K(\Z, m)\ar[r]^-{\rho_2} &
  K(\Z_2,m)\ar[r]^-{{\rm Sq}^2} &K(\Z_2,m+2)
  }.
$$
This is almost, but not quite, the differential refinement of ${\rm Sq}^3$ discussed in \cite{GS2}. 
The two become the same after including $\widetilde{{\rm Sq}}^3$ into differential 
cohomology via the map 
$$
\xymatrix{
K(\Z_2, m+2)\ar[r]&
 K(U(1), m+2)\simeq \mathbf{B}^{m+2}U(1)_{\nabla\text{- flat}}
 \; \ar@{^{(}->}[r] & \mathbf{B}^{m+2}U(1)_{\nabla}
 }
$$
induced by the inclusion $\Z_2\into U(1)$ via the 2-roots of unity. Let 
$K:=\ker\big({\rm Sq}^2\rho_2: H^m(- ;\Z)\to H^{m+2}(- ;\Z_2)\big)$. 
Then classes in $[Y,\widehat{F}]$ give rise to forms admitting integral lifts 
which are in the image of $K\into H^m(Y;\Z)$. 
For the field $G_4$ in spacetime $Y$, we take $m=4$, so that 
differential cohomotopy classes $[Y,\widehat{F}]$ are given by 4-forms 
$G_4^{\rm form}$ admitting integral images in the image of 
$\ker\big({\rm Sq}^2\rho_2: H^4(M;\Z)\to H^{6}(M;\Z_2)\big)
\into H^4(Y;\Z)$.
\end{example}

%The shift in degree from $Sq^2$ to $\widehat{Sq}^3$ comes from the commutative triangle 
%$$
%\xymatrix{
%&\mathbb{B}^{p}U(1)_{\nabla}\ar[dr]^-{\mathcal{I}}&
%\\
%\delta(K(U(1),p))\ar[rr]_{\delta(\beta)}\ar[ru]^-{j} && \delta(K(\Z,p+1))
%}
%$$
%and the Adem relation relation $Sq^1Sq^2=\rho_2\beta Sq^2=Sq^3$. That is, 
%$\widehat{Sq}^3$ is a differential %cohomology class refining $Sq^3$ in the 
%sense that $\rho_2\mathcal{I}(\widehat{Sq}^3)=Sq^3$. 

%
%The previous considerations show that torsion $k$-invariants do provide genuine geometric 
%obstructions -- namely, the vanishing of such obstructions forces divisibility conditions on 
%differential forms. Motivated by these observations, we will generally deal with the differential 
%refinement of the torsion Postnikov tower as follows. 
%\begin{itemize}
%\item For non-torsion $k$-invariants, we consider refinements via Chern-Weil theory. 
%In this case, the next stage in the Postnikov tower is given by pullback of the refinement by the map 
%$$a:\Omega^{\leq n}\to \mathbf{B}^nU(1)_{\nabla}\;.$$
%\item For torsion $k$-invariants, we simply take the fiber of the map as usual
%\end{itemize}

%%%%%%%%%%%%%%%%%%%%%
\subsection{Differential cohomotopy vs. differential cohomology} 
\label{Sec-DiffCohvsDiffCoh}
%%%%%%%%%%%%%%%%%%%%

In this section, we refine the Postnikov tower (see Lemma \ref{Lem-IntPost}) to the setting of 
differential cohomology. Our strategy for building the Postnikov tower for $\widehat{S}^{\, 4}$ 
stems from the basic observation that we can split this construction into the following three 
more elementary constructions.
\begin{enumerate}[{\bf (i)}]
\vspace{-2mm} 
\item The Postnikov tower in the opposite category of  DGCA's.
\vspace{-3mm} 
\item The ordinary Postnikov tower in spaces.
\vspace{-3mm} 
\item The Postnikov tower in spaces localized at $\R$. 
\end{enumerate}
It turns out that the process of differential refinement is compatible (in a certain sense) with the 
Postnikov construction. Before proving that this is the case, we begin with a preliminary observation.

\begin{lemma}[Postnikov system in the Sullivan construction] 
\label{postsul}
Let $\Lambda V$ be a Sullivan algebra on a graded vector space $V$, such that $H^1(\Lambda V)=0$. 
Let $\Lambda V_{\leq k}$ denote the subcomplex whose elements are spanned by wedge products of 
elements of $V$ in degree $\leq k$. Let $X_k:={\cal K}( \Lambda V_{\leq k} )$ denote the Sullivan 
construction of this subcomplex. Then the sequence of maps
$$
\left\{\hdots \to X_k\to X_{k-1}\to \hdots \to X_0  \right\},
$$
induced by the inclusions $\Lambda V_{\leq k-1}\into \Lambda V_{\leq k}$, is a Postnikov system for $X$. 
\end{lemma}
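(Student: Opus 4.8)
The plan is to verify the two defining properties of a Postnikov system for the tower $\{X_k = \mathcal{K}(\Lambda V_{\leq k})\}$: first, that each $X_k$ has the correct homotopy type in low degrees (namely $\pi_i(X_k) \cong \pi_i(X)$ for $i \leq k$ and $\pi_i(X_k) = 0$ for $i > k$), and second, that the maps $X_k \to X_{k-1}$ are compatible with the canonical maps $X \to X_k$ and realize $X$ as the inverse limit. Since the Sullivan construction $\mathcal{K}(-)$ takes quasi-isomorphisms of Sullivan algebras to weak equivalences and (on simply-connected, or at least nilpotent, finite-type inputs) realizes the minimal model's homotopy groups as the duals of the indecomposables $V$ in each degree, the first step amounts to the purely algebraic observation that the inclusion $\Lambda V_{\leq k} \hookrightarrow \Lambda V$ induces an isomorphism on indecomposables in degrees $\leq k$ and that $\Lambda V_{\leq k}$ has no indecomposables in degrees $> k$ by construction. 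The hypothesis $H^1(\Lambda V) = 0$ (equivalently $V^1 = 0$ for a minimal model, or more generally simple-connectivity after passing to the minimal model) guarantees we are in the range where the Sullivan construction behaves as a genuine rationalization/realification functor and these homotopy-group identifications hold.

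First I would recall the standard fact (e.g.\ from Félix–Halperin–Thomas, or the references \cite{Sull}\cite{SR} already cited) that for a Sullivan algebra with $V^1 = 0$, the subalgebra $\Lambda V_{\leq k}$ is again a Sullivan algebra: one must check that the differential preserves it, which follows because in a Sullivan algebra $d$ is decomposable (or built from a filtration) and hence $d$ applied to a generator in degree $\leq k$ lands in $\Lambda(V_{< k}) \subset \Lambda V_{\leq k}$ — here the nilpotence/minimality of the underlying model is what is really being used, so I would either assume $\Lambda V$ is minimal or first replace it by a minimal model, noting $\mathcal{K}$ is invariant under quasi-isomorphism. Then the inclusions $\Lambda V_{\leq k-1} \hookrightarrow \Lambda V_{\leq k}$ are relative Sullivan algebras whose quotient is $\Lambda(V^k)$ concentrated in a single degree, so applying $\mathcal{K}$ and using that $\mathcal{K}$ sends such a relative Sullivan extension to a fibration with fiber $\mathcal{K}(\Lambda V^k) \simeq K(\text{(}V^k\text{)}^\vee, k)$ an Eilenberg–MacLane space, we get exactly the principal-fibration structure of a Postnikov tower. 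The $k$-invariant is then read off as the classifying map of this fibration, i.e.\ the element of $H^{k+1}(X_{k-1}; (V^k)^\vee)$ represented by the differential $d\colon V^k \to \Lambda V_{\leq k-1}$.

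The compatibility with $X$ itself is the second half: the inclusions $\Lambda V_{\leq k} \hookrightarrow \Lambda V$ induce maps $X \to X_k$ (here $X = \mathcal{K}(\Lambda V)$), these commute with the tower maps by functoriality, and since $\Lambda V = \operatorname{colim}_k \Lambda V_{\leq k}$ (an increasing exhaustive filtration), and $\mathcal{K}$ — being built from $\hom$ out of $\Lambda V$, equivalently a homotopy limit over the simplex category as in \eqref{realize} — converts this colimit of algebras into a (homotopy) limit of spaces, we obtain $X \simeq \holim_k X_k$. Concretely it suffices to check this on homotopy groups: $\pi_i(\holim_k X_k) = \pi_i(X_i) = \pi_i(X)$ for each $i$, using the low-degree identifications from the first step together with a $\lim^1$ vanishing argument, which holds because the tower of homotopy groups stabilizes in each fixed degree.

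The main obstacle I expect is not any single computation but the bookkeeping around \emph{which} category of Sullivan algebras we are in and hence which version of ``$\mathcal{K}$ realizes homotopy groups'' applies: the clean statement (indecomposables $=$ dual homotopy groups, relative Sullivan extensions $=$ fibrations) requires either a minimality hypothesis on $\Lambda V$ or at least nilpotence and finite type, and the lemma as stated only assumes $H^1 = 0$. So the first move in the write-up will be to reduce to the minimal case (replacing $\Lambda V$ by its minimal model, which is harmless since $\mathcal{K}$ is homotopy-invariant and the subalgebra filtrations correspond), after which every step above is the standard dictionary between minimal Sullivan models and Postnikov towers, specialized to the $\R$-coefficient (smooth forms) version of the Sullivan functor already set up in \cref{Sec-diffcoh}.
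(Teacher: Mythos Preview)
Your proposal is correct and follows essentially the same route as the paper's proof: both arguments identify the inclusions $\Lambda V_{\leq k-1}\hookrightarrow \Lambda V_{\leq k}$ as relative Sullivan extensions, apply the spatial realization $\mathcal{K}$ to obtain fibrations with Eilenberg--MacLane fibers $K(\hom(V^k,\R),k)$, and then use the identification $\pi_*(X_k)\cong \hom(V_{\leq k},\R)$ to conclude $k$-truncation. The paper's version is terser, simply citing \cite[Propositions 17.9 and 17.10]{FHT} for these facts and not separately arguing convergence; your write-up is more explicit on two points the paper leaves implicit --- the need for minimality (or $V^1=0$) to ensure that $d$ preserves the degree filtration, and the $\holim$ argument for convergence of the tower to $X$ --- but neither of these represents a genuinely different strategy.
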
 
\begin{proof}
Let $V_k\subset V$ denote the subspace of elements in degree $k$. First, by \cite[Proposition 17.9]{FHT}, 
the inclusions $\Lambda V_{\leq k-1}\into \Lambda V_{k}$ are sent to fibrations $X_k\to X_{k-1}$, with 
fiber $F_k:={\cal K}( \Lambda V_k)$, by the Sullivan construction. By \cite[Proposition 17.10]{FHT}, 
we have that $\pi_k(X)\cong \hom(V_k,\R)$, where $V_k$ denotes the subspaces of elements in degree $k$.  
Again, by \cite[Proposition 17.10]{FHT}, we have 
$$\pi_n(F_k)\simeq \left\{\begin{array}{cc}
0 & \text{if } n\neq k\;,
\\
\hom(V_k,\R) & \text{if } n=k\,.
\end{array}\right.
$$
Hence, $F_k\simeq K(\pi_k(X),k)$. Finally, each $X_k$ is $k$-truncated since $\pi_*(X_k)\cong \hom(V_{\leq k},\R)$. 
It follows that $\{X_k\}$ define a Postnikov system for $X$.
\hfill \end{proof}
 
Recall from Section \ref{Sec-diffcoh} that the homotopy type associated to flat $\Lambda V^{\leq k}$-valued 1-forms $\Omega^1_{\rm f\,l}(-;\Lambda V^{\leq k})$ 
  is a presentation for the Sullivan construction. Hence, from Lemma \ref{postsul}, we have a canonical map
  \vspace{-2mm} 
 $$
 \xymatrix{
 \smallint :\Omega_{\rm f\,l}^1(-;\Lambda V^{\leq k})\ar[r] & \delta((X_{\R})_k)
 }
 $$
 
 \vspace{-2mm} 
\noindent which is induced by the unit of the adjunction $\delta\dashv \vert \cdot \vert$ 
(see the discussion around \eqref{realize}).

\medskip
Now we observe that $\R$-localization is also compatible with the Postnikov process. 
We first prove the following lemma.

\begin{lemma}[Compatibility with $\R$-localization]
\label{ratfibers}
Let $X\to Y$ be a Kan fibration, with fiber $F$. Suppose, moreover, that $Y$ is simply connected and
 that $Y$ or $F$ are of rational finite type. Let $L_{\R}$ denote the (derived) functor that localizes at $\R$. 
 Then $L_{\R}X\to L_{\R}Y$ is a Kan fibration with fiber $L_{\R}F$. 
\end{lemma}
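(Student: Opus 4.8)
The plan is to identify $L_{\R}F$ with the homotopy fibre of $L_{\R}X\to L_{\R}Y$ by a Serre spectral sequence comparison, the finite-type hypothesis being exactly what makes this comparison valid; for a quicker route one may instead invoke the fibrewise localization theorem (in the form of \cite{FHT} for the Sullivan model, or Bousfield--Kan in general), whose hypotheses --- simply connected base, rational finite type of the base or the fibre --- are precisely those assumed here. I will keep the spectral sequence argument in order to stay self-contained.

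First I would produce the relevant square: functoriality of $L_{\R}$ applied to $X\to Y$ gives a commuting square $X\to L_{\R}X$ over $Y\to L_{\R}Y$, and by factoring $L_{\R}X\to L_{\R}Y$ as a trivial cofibration followed by a Kan fibration (which changes neither homotopy type) I may assume $L_{\R}X\to L_{\R}Y$ is itself a Kan fibration; let $F'$ be its fibre, so the square induces a canonical comparison map $F\to F'$. Since the $\R$-local spaces are the fibrant objects of a left Bousfield localization, they are closed under homotopy limits, and $F'$ is the homotopy pullback of $\ast\to L_{\R}Y\leftarrow L_{\R}X$; hence $F'$ is $\R$-local. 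The universal property of localization then factors $F\to F'$ as $F\to L_{\R}F\xrightarrow{\ \varphi\ }F'$, and the whole statement reduces to showing that $\varphi$ is a weak equivalence.

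To see this I would compare homology with $\R$-coefficients. As $Y$ is simply connected so is $L_{\R}Y$, hence both Serre spectral sequences --- of $F\to X\to Y$ and of $F'\to L_{\R}X\to L_{\R}Y$ --- have untwisted $E^2$-pages $H_p(\mathrm{base};H_q(\mathrm{fibre};\R))$, and the localization maps give a morphism between them. This morphism is an isomorphism on the abutment, since $H_\ast(X;\R)\to H_\ast(L_{\R}X;\R)$ is an isomorphism by the defining property of $\R$-localization, and on the base edge, since $H_\ast(Y;\R)\to H_\ast(L_{\R}Y;\R)$ is an isomorphism for the same reason. Under the finite-type hypothesis (on $Y$, or on $F$ after the obvious reindexing of the comparison) Zeeman's comparison theorem then forces the map on fibre edges $H_\ast(F;\R)\to H_\ast(F';\R)$ to be an isomorphism; so $\varphi$ is an $\R$-homology equivalence of $\R$-local spaces, hence a weak equivalence. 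Feeding this back, $L_{\R}X\to L_{\R}Y$ is a Kan fibration whose fibre is weakly equivalent to $L_{\R}F$.

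The main obstacle is the Zeeman comparison step together with the subtlety of upgrading the $\R$-homology equivalence $\varphi$ to a genuine weak equivalence: a fibration over a simply connected base need not have a nilpotent, or even simple, fibre, so one must either restrict to the nilpotent setting in which the surrounding constructions of this paper already take place, or supply a $\pi_1$-action argument to justify the Whitehead-type step. I expect to do the former and to remark that this is all that is needed for the refinement of the Postnikov tower in the next subsection.
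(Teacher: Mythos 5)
Your route is genuinely different from the paper's: the paper works entirely on the algebraic side, citing \cite[Prop.\ 15.5]{FHT} to present the fibration by a relative Sullivan algebra and \cite[Prop.\ 17.10]{FHT} to say that the spatial realization of such an inclusion is a fibration with the expected fiber, whereas you compare the two homotopy fibers directly in spaces via the Serre spectral sequence and Zeeman's comparison theorem. That strategy is standard and would be fine for $\Q$-localization, but as written it has one genuine gap at its load-bearing step: the claim that $H_\ast(X;\R)\to H_\ast(L_{\R}X;\R)$ is an isomorphism ``by the defining property of $\R$-localization.'' The functor $L_{\R}$ in this paper is realification, $\pi_\ast(L_{\R}X)\cong\pi_\ast(X)\otimes\R$, realized as $\mathcal{K}(A_{PL}(-)\otimes\R)$; it is \emph{not} Bousfield localization at $H\R$-equivalences, and the unit is \emph{not} an $H_\ast(-;\R)$-isomorphism. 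Concretely, $H_n(K(\R,n);\Z)\cong\R$ by Hurewicz, so $H_n(K(\R,n);\R)\cong\R\otimes_{\Z}\R$, which is an enormous $\R$-vector space; hence already $S^n\to S^n_{\R}$ fails to be an iso on $H_\ast(-;\R)$ (unlike $S^n\to S^n_{\Q}$ on $H_\ast(-;\Q)$, where $\Q\otimes_{\Z}\Q\cong\Q$ saves the day). With that input false, the Zeeman comparison does not get off the ground.

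The gap is repairable, but each repair changes the character of the argument. Either (a) run your comparison for $\Q$-localization, where the homology statement is true, and then treat $X_{\Q}\to X_{\R}$ as an extension of scalars to be analyzed separately (on Sullivan models it is $-\otimes_{\Q}\R$, which does commute with the relative-Sullivan presentation of the fibration); or (b) replace singular $\R$-homology throughout by the cohomology of $A_{PL}(-)\otimes\R$, for which the unit \emph{is} an isomorphism by construction --- but then the comparison of ``spectral sequences'' is exactly the statement that the relative Sullivan model of the fibration realizes to a fibration with fiber the realization of the quotient, i.e.\ you have reproduced the paper's proof. Your remaining caveats are reasonable: closure of the local objects under homotopy pullback needs an argument for realification (it is not literally a left Bousfield localization of $\sset$ in the homological sense), and the nilpotence issue you flag for upgrading a homology equivalence to a weak equivalence is real, though it is equally implicit in the paper's appeal to \cite{FHT}, whose hypotheses (simply connected base, connected fiber, finite type) are the ones assumed in the lemma.
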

\proof
First, $\R$ localization can be computed by applying the functor $A_{PL}:\sset\to {\rm DGCA}$, 
which takes piecewise linear forms on the singular simplicial set associated to $X$, and composes with the (derived)
 Sullivan construction ${\cal K}:{\rm CDGA}^{\rm op}\to \sset$. By \cite[Proposition 15.5]{FHT} 
 (see also \cite[Theorem 2.2]{Hess}), the Sullivan replacement for sequence $A_{PL}(Y)\to A_{PL}(X)\to A_{PL}(F)$ 
 is the cofiber of a relative Sullivan inclusions. By \cite[Proposition 17.10]{FHT}, the Sullivan construction sends 
 this to a fibration between corresponding $\R$-localizations. By \cite[Proposition 15.5]{FHT},
 we see also that the fiber of this fibration is precisely the $\R$-localization of $F$.
\hfill \endproof

\begin{cor}[Postnikov system over $\R$]
\label{ratpost}
Let $X$ be a Kan complex of rational finite type and let 
 \vspace{-2mm} 
$$
\hdots \longrightarrow X_k\longrightarrow X_{k-1}\longrightarrow \hdots \longrightarrow X_0
$$

 \vspace{-2mm} 
\noindent be a Postnikov system for $X$. Let $L_{\R}$ denote the $\R$-localization functor. Then 
 \vspace{-2mm} 
$$
\hdots \longrightarrow L_{\R}X_k\longrightarrow L_{\R}X_{k-1}\longrightarrow \hdots \longrightarrow L_{\R}X_0
$$

 \vspace{-2mm} 
\noindent
is a Postnikov system for $L_{\R}X$. 
\end{cor}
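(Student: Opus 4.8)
The plan is to induct on the Postnikov stage $k$, with Lemma \ref{ratfibers} supplying the inductive step. A Postnikov system for $X$ is assembled from the fibrations $p_k\colon X_k\to X_{k-1}$ whose fibers $F_k$ are Eilenberg--MacLane spaces $K(\pi_k(X),k)$; once I show that $L_\R$ carries each $p_k$ to a Kan fibration with fiber $L_\R F_k$, the localized tower automatically inherits this layered structure, and the only remaining work is to identify the fibers and the homotopy groups with those of $L_\R X$. I may assume $X$ is connected, and --- as holds in the case of interest, $X=S^4$ --- simply connected; then $X_0\simeq X_1\simeq\ast$, so $L_\R X_0\simeq L_\R X_1\simeq\ast$, which starts the induction and makes every base $X_{k-1}$ (with $k\geq 2$) simply connected.

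For the inductive step, assume $L_\R X_{k-1}\to\cdots\to L_\R X_0$ has been shown to be the truncation through stage $k-1$ of a Postnikov system for $L_\R X$. Apply Lemma \ref{ratfibers} to the Kan fibration $p_k\colon X_k\to X_{k-1}$: the base is simply connected, and the fiber $F_k=K(\pi_k(X),k)$ is of rational finite type because $\pi_k(X)$ is finitely generated --- this is exactly where the ``rational finite type'' hypothesis on $X$ enters. Hence $L_\R p_k\colon L_\R X_k\to L_\R X_{k-1}$ is a Kan fibration with fiber $L_\R F_k$. Now $L_\R K(\pi_k(X),k)\simeq K(\pi_k(X)\otimes\R,k)$: the minimal Sullivan model of $K(A,k)$ for $A$ finitely generated is free graded-commutative on $\dim_\Q(A\otimes\Q)$ generators in degree $k$, so its realization (computed over $\R$ via $A_{PL}$ and the Sullivan construction, as in the proof of Lemma \ref{ratfibers}) is $K(A\otimes\R,k)$; alternatively, $K(A,k)$ is an abelian group object and $L_\R$ preserves this structure, yielding the Eilenberg--MacLane object on $A\otimes\R$. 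Thus the fiber of $L_\R p_k$ is $K(\pi_k(X)\otimes\R,k)$, as required of the $k$-th Postnikov layer of $L_\R X$.

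It remains to check that $\{L_\R X_k\}$ lies over $L_\R X$ correctly. From the long exact homotopy sequence of $L_\R F_k\to L_\R X_k\to L_\R X_{k-1}$, using that $L_\R X_{k-1}$ is $(k-1)$-truncated by induction and that $L_\R F_k$ is concentrated in degree $k$, one reads off $\pi_i(L_\R X_k)=0$ for $i>k$, $\pi_i(L_\R X_k)\cong\pi_i(L_\R X_{k-1})$ for $i<k$, and $\pi_k(L_\R X_k)\cong\pi_k(X)\otimes\R$; in particular $L_\R X_k$ is $k$-truncated. Applying $L_\R$ to $X\to X_k$ and using that $\pi_i(L_\R(-))\cong\pi_i(-)\otimes\R$ for simply connected spaces of finite type (the standard computation of the homotopy of a rationalization, with $\R$ in place of $\Q$), the map $L_\R X\to L_\R X_k$ is an isomorphism on $\pi_i$ for $i\leq k$. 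Finally $L_\R X\simeq\holim_k L_\R X_k$, since in each fixed degree $i$ the tower $\{\pi_i(L_\R X_k)\}_k$ stabilizes to $\pi_i(X)\otimes\R$ for $k\geq i$, killing the relevant $\varprojlim^{1}$ term. Together these are precisely the defining properties of a Postnikov system for $L_\R X$.

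The step I expect to be the main obstacle is controlling the hypotheses of Lemma \ref{ratfibers} at the bottom of the tower: for a merely nilpotent $X$ the bases $X_{k-1}$ need not be simply connected, so one must either restrict to the simply connected case (which suffices here and throughout the applications to $S^4$) or strengthen Lemma \ref{ratfibers} to nilpotent bases, which requires the full nilpotent rational homotopy machinery. A secondary subtlety is that $L_\R$ does not commute with the infinite inverse limit defining the Postnikov tower; the identification $L_\R X\simeq\holim_k L_\R X_k$ must be extracted from the degreewise stabilization of homotopy groups rather than from any formal exactness of localization.
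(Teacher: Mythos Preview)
Your proof is correct and follows essentially the same approach as the paper: apply Lemma \ref{ratfibers} to each fibration $X_k\to X_{k-1}$ to see that $L_\R$ preserves the fibration structure with fiber $L_\R F_k\simeq K(\pi_k(X)\otimes\R,k)$, then verify convergence of the localized tower to $L_\R X$ by passing to homotopy groups. Your version is more explicit about the inductive framing, the simply-connected hypothesis needed to invoke Lemma \ref{ratfibers} on the bases $X_{k-1}$, and the $\varprojlim^1$ issue in the convergence step, but these are elaborations of the same argument rather than a different route.
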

\proof
Clearly $L_{\R}$ preserves truncation degree, since it induces the rationalization map on homotopy groups.  
By Lemma \ref{ratfibers}, it also sends the fibrations $X_k\to X_{k-1}$ to fibrations 
$L_{\R}X_{k}\to L_{\R}X_{k-1}$, with fiber $L_{\R}F_k$. We also have that 
$L_{\R}F_k\simeq K(\pi_k(L_{\R}X),k)$, which follows immediately from the fact that 
$\pi_*(L_{\R}F)\cong \pi_*(F)\otimes \R$ and $F\simeq K(\pi_k(X),k)$. It remains to show that 
$L_{\R}X_k$ approximate the homotopy type of $L_{\R}X$. Since $X_k$ approximate the homotopy 
type of $X$ and $\pi_*(L_{\R}X_k)\cong \pi_*(X_k)\otimes \R$, we have 
 \vspace{-2mm} 
$$
\pi_*\Big(\lim_{k\to \infty}L_{\R}X_k\Big)\cong
 \lim_{k\to \infty}\pi_*(X_k)\otimes \R\cong \pi_*(X)\otimes \R\;.
$$
Hence, the induced map $L_{\R}X\to \lim_{k\to \infty}L_{\R}X_k$ is a weak equivalence. 
\hfill \endproof

We now turn to the differential refinement of Postnikov systems. Although there is a well-defined 
notion of Postnikov tower which is intrinsic to smooth stacks
(see \cite[Sec. 5.5]{Lurie}), this tower does 
not give quite the right information when passing to the differential refinement. We would really like a tower 
which converges to the refinement $\widehat{X}$ and which is compatible with the pullback property of 
$\widehat{X}$. Motivated by this, we introduce the notion of the \textit{differential} Postnikov tower. 
\begin{defn}[Differential Postnikov systems]
Let $X$ be a simply connected space of rational finite type and let $(\Lambda V,d)$ be a Sullivan model 
for $X_{\R}$. Consider the homotopy pullback diagram of smooth stacks
 \vspace{-2mm} 
$$
\xymatrix@R=1.5em{
\widehat{X}\ar[rr]\ar[d]^-{\smallint}&& \Omega^1_{\rm f\,l}(-;\Lambda V)\ar[d]^-{\smallint}
\\
\delta(X)\ar[rr]^-{\delta(L_{\R})} && \delta(X_{\R})
}
$$

 \vspace{-2mm} 
\noindent
where the $\smallint$'s appearing are the respective components of the unit of the adjunction 
$\vert \cdot \vert\dashv \delta$ 
and $L_{\R}$ is the localization at $\R$. A \textit{differential} Postnikov system for $\widehat{X}$ is
 sequence of smooth stacks
\(\label{postmps}
\hdots \to (\widehat{X})_k\longrightarrow (\widehat{X})_{k-1}\longrightarrow 
\hdots \longrightarrow (\widehat{X})_0\;,
\)

 \vspace{-2mm} 
\noindent
such that for each $k$, $(\widehat{X})_k$ fits into a homotopy cartesian square
 \vspace{-2mm} 
\begin{equation}\label{postlevel}
\xymatrix@R=1.5em@C=4em{
(\widehat{X})_k\ar[r]\ar[d]
 & \Omega^1_{\rm f\,l}(-;\Lambda V^{\leq k})\ar[d]
\\
\delta((X)_k)\ar[r] & \delta((X_{\R})_k)
}\;,
\end{equation}

 \vspace{-2mm} 
\noindent
with each vertex representing the corresponding $k$th Postnikov section
 (in spaces, rational spaces, and DGCA's), and the maps \eqref{postmps} 
are universal maps of smooth stacks induced by pullback.
\end{defn}

\begin{prop}[Compatibility of differential refinement with Postnikov construction] 
\label{prop-compat}
Let $X$ be a simply connected space of rational finite type and let $X_{\R}$ denote its localization 
at $\R$. Fix a Sullivan model $(\Lambda V,d)$ of $X_{\R}$ and let 
$$
\xymatrix{
\hdots \ar[r] & (\widehat{X})_k \ar[r]&  (\widehat{X})_{k-1}\ar[r] & \hdots \ar[r] & (\widehat{X})_0
}
$$ 

\vspace{-1mm} 
\noindent be a differential 
Postnikov system for $\widehat{X}$. Then the system satisfies the following properties:
\begin{enumerate}[{\bf (i)}]
\vspace{-1mm} 
\item The tower converges to $\widehat{X}$, i.e.,
$\lim_{k\to \infty}\widehat{X}_k\simeq\widehat{X}$.
\vspace{-2mm}
\item For $\pi_{n+1}(X)$ a torsion group, the map $\widehat{X}_{n+1}\to \widehat{X}_{n}$ 
has fiber $\delta(K(\pi_{n+1}(X),n+1))$ and is classified by the $k$-invariant 
$$
\xymatrix{
\widehat{X}_n\ar[r]^-{\smallint}&  \delta(X_n)\ar[r] & \delta(K(\pi_{n+1}(X),n+2))
}.
$$
\vspace{-6mm}
\item For $\pi_{n+1}(X)$ free of rank $m$, the map $\widehat{X}_{n+1}\to \widehat{X}_{n}$ 
has fiber $K(\pi_{n+1}(X),n+1)$ and fits into a pullback diagram of the form
\(\label{dfcokin}
\xymatrix{
\widehat{X}_{n+1}\ar[rr]\ar[d] && \prod_{i=1}^m\Omega^{n}\ar[d]^-{\prod_{i=1}^m a}
\\
\widehat{X}_n\ar[rr] && \prod_{i=1}^m\mathbf{B}^{n+1}U(1)_{\nabla}
\;.}
\)
Here $a:\Omega^n\to \mathbf{B}^{n+1}U(1)_{\nabla}$ is part of the data of the differential 
refinement, whose curvature gives the exterior derivative. The bottom map in \eqref{dfcokin} 
refines the topological $k$-invariant and the $k$-invariant of DGCAs. 
\end{enumerate}
\end{prop}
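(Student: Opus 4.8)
The plan is to prove the three statements by propagating the homotopy pullback defining $\widehat X$ through each of the three Postnikov constructions (in spaces, in $\R$-local spaces, in DGCAs) simultaneously, using that homotopy limits commute with homotopy limits. First I would establish (i): by definition each $(\widehat X)_k$ is the homotopy pullback of the diagram $\delta((X)_k)\to\delta((X_{\R})_k)\leftarrow \Omega^1_{\rm f\,l}(-;\Lambda V^{\leq k})$. Since the vertical legs are compatible with the tower maps, $\lim_k(\widehat X)_k$ is the homotopy pullback of $\lim_k\delta((X)_k)\to \lim_k\delta((X_{\R})_k)\leftarrow \lim_k\Omega^1_{\rm f\,l}(-;\Lambda V^{\leq k})$. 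Now $\lim_k\delta((X)_k)\simeq\delta(X)$ because $\delta$ is a right adjoint (hence preserves the homotopy limit) and $\{(X)_k\}$ is a Postnikov tower for $X$; similarly $\lim_k\delta((X_{\R})_k)\simeq\delta(X_{\R})$ by Corollary \ref{ratpost}; and $\lim_k\Omega^1_{\rm f\,l}(-;\Lambda V^{\leq k})\simeq\Omega^1_{\rm f\,l}(-;\Lambda V)$ because the inclusions $\Lambda V^{\leq k}\into\Lambda V$ exhaust $\Lambda V$ and the flat-forms functor sends this colimit of Sullivan algebras to the corresponding limit of sheaves (objectwise a limit of mapping sets out of CE-algebras, which commutes with the filtered colimit in the source because each $\Delta^n$ is finite-dimensional, so only finitely many generators contribute). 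Hence $\lim_k(\widehat X)_k$ is the homotopy pullback computing $\widehat X$.

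Next I would prove (ii), the torsion case. Here the key input is Lemma \ref{postsul} together with Corollary \ref{ratpost}: if $\pi_{n+1}(X)$ is torsion then $\pi_{n+1}(X_{\R})=\pi_{n+1}(X)\otimes\R=0$, so the map $(X_{\R})_{n+1}\to(X_{\R})_n$ is an equivalence, and likewise $\Lambda V^{\leq n+1}=\Lambda V^{\leq n}$ since $V_{n+1}\cong\hom(\pi_{n+1}(X),\R)=0$, so $\Omega^1_{\rm f\,l}(-;\Lambda V^{\leq n+1})\to\Omega^1_{\rm f\,l}(-;\Lambda V^{\leq n})$ is an equivalence. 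Comparing the two homotopy pullback squares \eqref{postlevel} for $k=n+1$ and $k=n$, the right-hand and lower-right vertices agree, so the homotopy fiber of $(\widehat X)_{n+1}\to(\widehat X)_n$ is the homotopy fiber of $\delta((X)_{n+1})\to\delta((X)_n)$, which is $\delta(K(\pi_{n+1}(X),n+1))$ since $\delta$ preserves homotopy fibers (it is a right adjoint, and the relevant homotopy fiber is again computed as a homotopy pullback). For the $k$-invariant, I would note that the classifying map of this principal fibration is obtained by pasting: the topological $k$-invariant $(X)_n\to K(\pi_{n+1}(X),n+2)$ pulls back along $\smallint\colon\widehat X_n\to\delta((X)_n)$, and because the contribution from the form side is null (the fiber there is trivial), this pullback is exactly the classifying map of $(\widehat X)_{n+1}\to(\widehat X)_n$; this is the composite displayed in the statement.

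Finally for (iii), the free case of rank $m$, one has $\pi_{n+1}(X)\cong\Z^m$, $\pi_{n+1}(X_{\R})\cong\R^m$, and $V_{n+1}\cong\R^m$, so now all three fibers are nontrivial: $K(\Z^m,n+1)$ in spaces, $K(\R^m,n+1)$ in $\R$-local spaces, and $\mathcal K(\Lambda V_{n+1})\simeq K(\R^m,n+1)$ in DGCAs, presented by $\prod_{i=1}^m\Omega^n$ via flat forms on the single-generator Sullivan algebra. Passing to homotopy fibers in the square \eqref{postlevel} for $k=n+1$ over the corresponding data for $k=n$, the homotopy fiber of $(\widehat X)_{n+1}\to(\widehat X)_n$ is the homotopy pullback of $\delta(K(\Z^m,n+1))\to\delta(K(\R^m,n+1))\leftarrow\prod_{i=1}^m\Omega^n_{\rm cl}$, which is precisely $\prod_{i=1}^m \mathbf B^{n+1}U(1)_{\nabla}$'s fiber description —more directly, one recognizes this very pullback square as the defining square of $\prod_{i=1}^m\mathbf B^{n+1}U(1)_{\nabla}$ recalled at the start of \S\ref{Sec-TorDiff}, so the homotopy fiber is $K(\Z^m,n+1)$ as an abstract stack but carries the connection data. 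The pullback square \eqref{dfcokin} then follows by the pasting law: stack the defining square of $\prod\mathbf B^{n+1}U(1)_{\nabla}$ on top of the square \eqref{postlevel} at level $n$, using that the bottom map $\widehat X_n\to\prod\mathbf B^{n+1}U(1)_{\nabla}$ is the unique map refining simultaneously the topological $k$-invariant $(X)_n\to K(\Z^m,n+2)$ and the DGCA $k$-invariant (the relative Sullivan differential on $V_{n+1}$), these two being compatible over the $\R$-local $k$-invariant by construction of the Sullivan model. I expect the main obstacle to be the bookkeeping in (iii): verifying that the three $k$-invariants assemble into a single well-defined refined $k$-invariant $\widehat X_n\to\prod_{i=1}^m\mathbf B^{n+1}U(1)_{\nabla}$ requires care with the homotopy coherence of the compatibility data (topological reduction-of-coefficients versus the de Rham comparison $h$), and one must check that the curvature of $a\colon\Omega^n\to\mathbf B^{n+1}U(1)_{\nabla}$ matches the exterior derivative appearing in the relative Sullivan differential — essentially the statement that the de Rham theorem built into $\widehat X$ is natural in the Postnikov filtration.
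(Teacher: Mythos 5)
Your proposal is correct and takes essentially the same route as the paper's proof: commute the defining homotopy pullbacks with the sequential limit for (i), use contractibility of the rational and DGCA fibers to collapse the torsion $k$-invariants to the purely topological ones for (ii), and assemble the three $k$-invariants into the defining pullback square of $\prod_{i}\mathbf{B}^{n+1}U(1)_{\nabla}$ for (iii) — where the paper makes your ``bookkeeping'' step precise by writing the DGCA extension as an explicit Sullivan pushout and invoking that the Sullivan realization is right Quillen (Bousfield--Gugenheim). The only slip is a degree shift in your fiber identification in (iii) (the homotopy pullback of the three fibers over $\delta(K(\R^m,n+1))$ is the square defining $\prod_i\mathbf{B}^{n}U(1)_{\nabla}$, one degree below the $k$-invariant target), but this does not affect the structure of the argument.
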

\begin{proof}
(i) By Corollary \ref{ratpost}, the localization $L_{\R}$ sends a Postnikov system for $X$ to a Postnikov system for $X_{\R}$. 
Since homotopy limits commute, we can commute the homotopy pullbacks in \eqref{postlevel} with the sequential limit in \eqref{postmps}. Then we have 
$$
\lim_{k\to \infty}(\widehat{X})_k\simeq \widehat{X}\;.
$$
(ii) By the Whitehead theorem, $L_{\R}K(\pi_{n+1}(X),n+1)$ is weakly contractible whenever $\pi_{n+1}(X)$ is torsion. Commuting homotopy fibers and homotopy pullbacks, we see that the map classifying the extension is the homotopy 
pullback of the corresponding classifying maps in the topological, 
rational, and DGCA case (see \eqref{postlevel}). But since $L_{\R}K(\pi_{n+1}(X),n+1)$ is contractible, and 
$\delta$ is homotopy continuous, the homotopy pullback of homotopy fibers is given by
 $$
 \xymatrix@R=1.5em@C=3em{
 \delta(K(\pi_{n+1}(X),n+1))\ar[d]\ar[r] & \ast \ar[d]
 \\
  \delta(K(\pi_{n+1}(X),n+1))\ar[r] & \ast.
  }
 $$ 
Hence, the refinement of the $k$-invariant at this stage collapses to the purely topological 
case, as claimed. 
(iii) Let $\{v_i\}_{i=1}^m$ be a basis for $\pi_{n+1}(X)\otimes \R$. 
Then, in DGCA's, the extension is classified by the pushout diagram 
\begin{equation}\label{pushoutdga}
\xymatrix{
\Lambda V^{\leq n+1} &&\ar[ll] \Lambda \big(w_1,w_2,\hdots, w_m,dw_1, dw_2,\hdots dw_m \big)
\\
\Lambda V^{\leq n}\ar[u] && \ar[ll]_\varphi \R[v_1,v_2,\hdots v_m]\ar[u]_\phi
}
\end{equation}
where the bottom map $\varphi$ dualizes to the classifying map and the right vertical arrow $\phi$
 is defined by sending $v_i\mapsto dw_i$. Taking DGCA homomorphisms to forms, 
 $\hom_{\rm dgca}(-;\Omega^*)$, gives 
 a corresponding pullback diagram 
\(\label{pulldgca}
\xymatrix{
\Omega^*(-;\Lambda V^{\leq n+1})\ar[rr]\ar[d] && \prod_{i=1}^m\Omega^n\ar[d]^-{\prod_{i=1}^md}
\\
\Omega^*(-;\Lambda V^{\leq n})\ar[rr] && \prod_{i=1}^m\Omega^{n+1}_{\rm cl}.
}
\)
The identification of the top right corner follows by observing that a DGA homomorphism 
$$
\Lambda(w_1,\hdots,w_m,dw_1,\hdots,dw_m)\longrightarrow \Omega^*
$$
 is completely determined by where it sends the $w_i$'s. 
We now apply the realization functor $\vert \cdot \vert$ (c.f. \eqref{realize}) to the above diagram and use the 
fact that $\vert \Omega^*_{\rm fl}(-;\Lambda V) \vert\simeq {\cal K}(\Lambda V)$, where ${\cal K}$ denotes 
the Sullivan construction (see section \ref{Sec-DiffCoh}). Observe that the pushout diagram \eqref{pushoutdga} 
is already a derived pullback square in ${\rm DGCA}^{\rm op}$, with respect to the model structure defined in 
\cite[Section 4]{BG}, since three of the objects are Sullivan algebras and the left vertical map is an inclusion of 
relative Sullivan algebras. In \cite[Section 8]{BG} (see \cite[page 9]{Hess} for a review), it was shown that the 
Sullivan construction functor 
$$
\mathcal{K}:{\rm DGCA}^{\rm op}\longrightarrow \sset
$$
 is a right Quillen functor (with respect to the usual Quillen-Kan model structure on $\sset$). Therefore, 
 ${\cal K}(-)\simeq \vert \Omega^*_{\rm fl}(-;(-))\vert$ sends this homotopy pullback to a homotopy
 pullback diagram
\(
\label{diag-hpbs}
\xymatrix@R=1.5em{
(X_{\R})_{n+1}\ar[rr]\ar[d] && \ast \ar[d]
\\
(X_{\R})_n \ar[rr] && \prod_{i=1}^mK(\R,n+1)\;.
}
\)
Since $\pi_{n+1}(X)$ is given to be free of rank $m$, we also have a homotopy fiber sequence 
\(\label{int-hpbs}
\xymatrix@R=1.5em{
(X)_{n+1}\ar[rr]\ar[d] && \ast \ar[d]
\\
(X)_n \ar[rr]^-{k} && \prod_{i=1}^mK(\Z,n+1)
}
\)
which $\R$-localizes to the fiber sequence \eqref{diag-hpbs} above. Finally, the $k$-invariant 
in the differential Postnikov tower is just the homotopy pullback of the corresponding $k$-invariants in 
spaces, $\R$-local spaces and CDGA's. Now comparing the $k$-invariants 
in \eqref{int-hpbs}, \eqref{diag-hpbs} and \eqref{pulldgca}, we find the homotopy 
pullback is given by 
$$
\xymatrix@R=1.5em{ 
\prod_{i=1}^m\mathbf{B}^{n+1}U(1)_{\nabla}\ar[rr]\ar[d] && 
\prod_{i=1}^m\Omega^{n+1}_{{\rm cl}}\ar[d]
\\
\prod_{i=1}^m \delta(K(\Z,n+1))\ar[rr] && \prod_{i=1}^m\delta(K(\R,n+1))\;.
}
$$
Then from the diagram \eqref{pulldgca} and the commutative triangle 
$$
\xymatrix@R=.5em{
\Omega^n\ar[rr]^-{d}\ar[rd]_-{a} && \Omega^{n+1}_{\rm cl}
\\
&\mathbf{B}^{n+1}U(1)_{\nabla}\ar[ru]&
}
$$

\vspace{-3mm}
\noindent we see that $\widehat{X}_{n+1}$ fits into the homotopy pullback diagram \eqref{dfcokin} as claimed.
\hfill \end{proof}

%where $(-)_k$ denotes the $k$-the level in the respective Postnikov towers. 
%\begin{proof}
%\end{proof}
\begin{remark}[Extension of the 4-sphere algebra and quaternionic Hopf fibration]\label{4sphralg}
The only nontrivial extension in the Postnikov approximation to the Sullivan algebra 
${\rm CE}(\mathfrak{s}^4)$ occurs in degree $n=5$, where we get a pushout diagram 
$$
\xymatrix@R=1.5em{
{\rm CE}(\mathfrak{s}^4) &&\ar[ll] \Lambda(g_7,dg_7)
\\
\ar[u] \R[g_4] && \R[g_8]\ar[ll]\ar[u]
}
$$
in which the bottom map sends $g_8\mapsto g_4^2$ and the right map sends $g_8\mapsto dg_7$. 
Rationally, this level corresponds to the quaternionic Hopf fibration generating $\pi_7(S^4)\otimes \R$
(see \cite{FSS-twist}\cite{FSS-level}). See also Remark \ref{Rem-QuatHopfFib}.
\end{remark}

We immediately have the following corollary

\begin{cor}[Refinement vs. Postnikov for the 4-sphere] 
\label{dfposs4}
The $n$th section of the differential Postnikov tower takes the form
$$
\xymatrix@R=1.5em@C=4em{
(\widehat{S}^{\,4})_n\ar[r]\ar[d]
 & \Omega^1_{\rm f\,l}\big(-;(\mathfrak{s}^4)^{\leq n}\big)\ar[d]
\\
\delta((S^4)_n)\ar[r] & \delta((S^4_{\R})_n)
}
$$
\begin{enumerate}[{\bf (i)}]
\vspace{-2mm}
\item 
As $n\to \infty$, 
we have 
$
\lim_{k\to \infty}(\widehat{S}^{\, 4})_k=\widehat{S}^{\, 4}\;.
$
\vspace{-2mm}
\item 
Moreover, for $\pi_{n+1}(X)$ torsion, the $k$-invariant at the $n$th stage of the Postnikov system 
for $S^4$ refines to a $k$-invariants for $\widehat{S}^{\, 4}$ via the canonical map 
\vspace{-2mm}
$$
\xymatrix{
(\widehat{S}^{\, 4})_n\ar[r]^-{\smallint}& \delta((S^4)_n)\ar[r]^-{k}& 
\delta(K(\pi_{n+1}(X),n+2))
},
$$
\vspace{-9mm}
\item 
while for $\pi_{7}(S^4)\cong \Z\times \Z_{12}$ the $k$-invariant takes the form 
$$
\xymatrix@R=1.5em{
(\widehat{S}^{\, 4})_3\ar[rr]\ar[d] & & \Omega^7\ar[d]^-{a}
\\
(\widehat{S}^{\, 4})_2\ar[rr] && K(\Z_{12},8)\times \mathbf{B}^7U(1)_{\nabla}\;,
}
$$

\vspace{-4mm} 
\noindent where the projection of the $k$-invariant to the second factor is the Deligne-Beilinson 
cup product $\widehat{G}_4\cup_{\rm DB}\widehat{G}_4$.
\end{enumerate} 
\end{cor}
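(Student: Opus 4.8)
The plan is to obtain all three statements by specializing the Compatibility Proposition~\ref{prop-compat} to $X=S^4$, together with Remark~\ref{4sphralg}. First I would observe that $S^4$ is simply connected and of finite type, hence of rational finite type, so Proposition~\ref{prop-compat} applies with Sullivan model $(\Lambda V,d)=({\rm CE}(\mathfrak{s}^4),d)$ on $V=\langle g_4,g_7\rangle$; the displayed square for $(\widehat{S}^{\,4})_n$ is then the instance of the homotopy cartesian square \eqref{postlevel} for $X=S^4$, the base of the tower being $(S^4)_0=K(\Z,4)$ with differential refinement $\mathbf{B}^3U(1)_\nabla$. Part~\textbf{(i)} is then immediate from Proposition~\ref{prop-compat}(i).

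For part~\textbf{(ii)} I would recall, from the analysis of the Postnikov tower of $S^4$ in \cref{Sec-EMvsS4} (in particular Lemma~\ref{unstpost}), that $\pi_k(S^4)$ is a \emph{finite} group for every $k\neq 4,7$, and that $\pi_7(S^4)\cong\Z\times\Z_{12}$ is the only homotopy group of $S^4$ with a nontrivial free summand. Hence at every stage adjoining a torsion group the fibre $K(\pi_{n+1}(S^4),n+1)$ is $\R$-acyclic, and Proposition~\ref{prop-compat}(ii) identifies the refined $k$-invariant with the composite $(\widehat{S}^{\,4})_n\xrightarrow{\ \smallint\ }\delta\big((S^4)_n\big)\xrightarrow{\ k\ }\delta\big(K(\pi_{n+1}(S^4),n+2)\big)$, exactly as asserted.

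For part~\textbf{(iii)} I would use the product decomposition $K(\pi_7(S^4),\ell)\simeq K(\Z,\ell)\times K(\Z_{12},\ell)$, which makes all the squares entering the construction of Proposition~\ref{prop-compat} at the stage adjoining $\pi_7(S^4)$ split as products, so that the free and torsion summands can be treated separately. The $\Z_{12}$-summand is governed by Proposition~\ref{prop-compat}(ii) and contributes the geometrically discrete factor $\delta(K(\Z_{12},8))$ to the target, written simply $K(\Z_{12},8)$ in the statement. For the free $\Z$-summand I would invoke Proposition~\ref{prop-compat}(iii) with $m=1$: by Remark~\ref{4sphralg} this is the unique stage adjoining a Sullivan generator, namely $g_7$ with $dg_7=g_4\wedge g_4$, so that the pushout/pullback diagrams \eqref{pushoutdga}--\eqref{pulldgca} involve only $g_7$ (landing in $\Omega^7$, since $\deg g_7=7$) with $dg_7=g_4^2$ (landing in $\Omega^8_{\rm cl}$), while the topological $k$-invariant $(S^4)_2\to K(\Z,8)$ is the class detecting the integral cup square $c\mapsto c\cup c$. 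Specializing \eqref{dfcokin} then yields the claimed square, with $\Omega^7$ in the upper right and $\mathbf{B}^7U(1)_\nabla$ (which refines $K(\Z,8)$) in the lower right, the vertical map being the canonical $a\colon\Omega^7\to\mathbf{B}^7U(1)_\nabla$ whose curvature is the de Rham differential.

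It then remains to identify the projection of the refined $k$-invariant to $\mathbf{B}^7U(1)_\nabla$; by construction it is a differential refinement whose underlying integral class is the cup square of that of $\widehat{G}_4$ and whose curvature $8$-form is the wedge square of the curvature of $\widehat{G}_4$. Since the Deligne--Beilinson cup product $\mathbf{B}^3U(1)_\nabla\times\mathbf{B}^3U(1)_\nabla\to\mathbf{B}^7U(1)_\nabla$ of \cite{FSS14c}\cite{FSS-cup} is, by definition, the differential refinement of the integral cup product whose curvature is the wedge product of curvatures, $\widehat{G}_4\cup_{\rm DB}\widehat{G}_4$ is such a refinement. The hard part is to conclude that the two coincide: two refinements of a $k$-invariant with the same underlying class and the same curvature differ a priori by a flat class in $H^7\big((\widehat{S}^{\,4})_2;\R/\Z\big)$. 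I would close this gap by checking that the refinement produced in the proof of Proposition~\ref{prop-compat}(iii) is natural in the differential cocycle data $(c,h,\omega)$ of \cref{Sec-diffcoh} and is therefore the universal natural transformation refining the cup square and the wedge square simultaneously; as $\cup_{\rm DB}$ is characterized by the same universal property, the two must agree on the universal example $(\widehat{S}^{\,4})_2$. The remaining points --- the bookkeeping of Postnikov indices, the splitting of $\pi_7(S^4)$, and the degree shift between $\Omega^7$ and $\mathbf{B}^7U(1)_\nabla$ --- are routine given Proposition~\ref{prop-compat} and Remark~\ref{4sphralg}.
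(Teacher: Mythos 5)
Your proposal is correct and follows essentially the same route as the paper: specialize Proposition~\ref{prop-compat} to $X=S^4$ with $\Lambda V={\rm CE}(\mathfrak{s}^4)$, use Remark~\ref{4sphralg} to isolate the single free stage at $\pi_7$, and split that stage into its torsion and free summands handled by parts (ii) and (iii) of the Proposition. The only difference is cosmetic: where you flag the residual flat-class ambiguity in identifying the refined $k$-invariant with $\widehat{G}_4\cup_{\rm DB}\widehat{G}_4$ and propose a universality argument, the paper simply invokes the cited fact that the Deligne--Beilinson product is the unique (up to homotopy) refinement of the integral cup product and the wedge product of forms, which is the same resolution.
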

 \proof
 This follows immediately from Proposition \ref{prop-compat}, setting $S^4=X$ and $\Lambda V=\mathfrak{s}^4$
  (see Remark \ref{4sphralg}). For part (iii), we use the low degree identifications in the (topological) Postnikov 
  tower in Lemma \ref{prop-compat}. In particular, the square cup appears at the $k$-invariant at the second 
  level of the tower, along with an unidentified torsion $k$-invariant. Since the Deligne-Beilinson cup product 
  gives a cup product structure in differential cohomology and uniquely refines 
(up to homotopy) the wedge product of forms and the cup product in integral cohomology
(see \cite{FSS14c}\cite{FSS-cup}\cite{Urs}), part (iii) follows from parts (ii) and (iii) of Proposition \ref{prop-compat}.
 \hfill \endproof

Corollary \ref{dfposs4} gives a complete characterization of the obstruction theory for 
$S^4$ in the differential setting. The $k$-invariants are either purely topological, in the 
torsion case, or are differential refinements of the topological $k$-invariants in the free case. 
As usual, to consider structures on spacetime $Y$, we pull back these universal 
classes and obstruction and evaluate on $Y$.

\begin{prop}
[Differential refinement of Postnikov tower of the sphere] 
 \label{reftower}
To avoid cumbersome notation, we omit the notation $\delta(-)$ for locally constant stacks. 
We simply denote these stacks by the corresponding space. The full differential refinement 
of the Postnikov tower for $S^4$ takes the following form
$$
\xymatrix@C=3em@R=1.5em{
K(\Z_{15},11)\ar[r] & (\widehat{S}^{\, 4})_7\ar[d]
\\
K(\Z_{24} \!\times\! \Z_3,10)\ar[r] & (\widehat{S}^{\, 4})_6\ar[d]\ar[r] & K(\Z_{15},12)
\\
K(\Z_2 \!\times\! \Z_2,9)\ar[r] & (\widehat{S}^{\, 4})_5\ar[d]\ar[r] &K(\Z_{24}\!\times\! \Z_3,11)
\\
K(\Z_2 \!\times\! \Z_2,8)\ar[r] & (\widehat{S}^{\, 4})_4\ar[d]\ar[r] & K(\Z_2 \!\times\! \Z_2,10)
\\
  K(\Z_{12},7)\!\times\! K(\Z,7)\ar[r] &(\widehat{S}^{\, 4})_3 \ar[d] \ar[r] & K(\Z_2\!\times\! \Z_2,9)
\\
   K(\Z_2,6)\ar[r] & (\widehat{S}^{\, 4})_2 \ar[d]\ar[r]^-{(-, \; \widehat{\iota}^{\,2}_4)}  &   K(\Z_{12},8)\times \mathbf{B}^7U(1)_{\nabla}
\\
  K(\Z_2,5)\ar[r] & (\widehat{S}^{\, 4})_1 \ar[d]\ar[r]^-{\alpha_7I}  &  K(\Z_2,7)
   \\
  & (\widehat{S}^{\, 4})_0=\mathbf{B}^3U(1)_{\nabla}  \ar[r]^-{{\rm Sq}^2\rho_2I} & K(\Z_2,6)
}
$$
where we have identified the first few obstructions. 
\end{prop}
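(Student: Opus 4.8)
The plan is to obtain the statement by assembling Corollary~\ref{dfposs4} --- itself a specialization of Proposition~\ref{prop-compat} --- stage by stage, feeding in the homotopy groups of $S^4$ in the range $4\le n\le 11$. Note first that $S^4$ is simply connected and of rational finite type, so Proposition~\ref{prop-compat} applies, and that its minimal Sullivan model is $\mathfrak{s}^4=\Lambda(g_4,g_7)$ with $dg_7=g_4^2$; hence $\pi_n(S^4)\otimes\R=0$ for every $n\neq 4,7$. Consequently, among the fibers $K(\pi_n(S^4),n)$ of the successive maps $(\widehat{S}^{\,4})_k\to(\widehat{S}^{\,4})_{k-1}$, only the one at $\pi_7(S^4)\cong\Z\oplus\Z_{12}$ carries a free summand; the remaining homotopy groups in range --- $\pi_5=\Z_2$, $\pi_6=\Z_2$, $\pi_8=\pi_9=\Z_2\oplus\Z_2$, $\pi_{10}=\Z_{24}\oplus\Z_3$, $\pi_{11}=\Z_{15}$ (see \cite{To}) --- are all torsion.

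First I would dispose of the bottom of the tower. By construction $(S^4)_0=K(\Z,4)$, its $\R$-localization is $K(\R,4)$, and the corresponding flat-forms stack is the sheaf of closed $4$-forms $\Omega^4_{\rm cl}(-)$; the homotopy pullback defining $(\widehat{S}^{\,4})_0$ is then exactly the pullback square presenting $\mathbf{B}^3U(1)_\nabla$ recalled in \cref{Sec-TorDiff}, and under this identification the map $\smallint$ is the underlying-class map $I:\mathbf{B}^3U(1)_\nabla\to K(\Z,4)$. Since $\pi_5(S^4)$ is torsion, Proposition~\ref{prop-compat}(ii) gives that the first $k$-invariant is the topological one precomposed with $I$, i.e. ${\rm Sq}^2\rho_2\,I:\mathbf{B}^3U(1)_\nabla\to K(\Z_2,6)$; the same argument applied to $\pi_6(S^4)=\Z_2$ produces the second obstruction $\alpha_7 I$ out of $(\widehat{S}^{\,4})_1$. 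For every stage above $\pi_7$ --- i.e. corresponding to $\pi_8,\dots,\pi_{11}$ --- Proposition~\ref{prop-compat}(ii) again applies verbatim (torsion homotopy, so nothing rational contributes and the differential $k$-invariant collapses to the topological one), yielding the constant-stack targets $K(\Z_2\times\Z_2,9)$, $K(\Z_2\times\Z_2,10)$, $K(\Z_{24}\times\Z_3,11)$ and $K(\Z_{15},12)$ as displayed; convergence $\lim_k(\widehat{S}^{\,4})_k\simeq\widehat{S}^{\,4}$ is Corollary~\ref{dfposs4}(i).

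It remains to treat the one genuinely differential stage, $(\widehat{S}^{\,4})_3\to(\widehat{S}^{\,4})_2$, where $\pi_7(S^4)=\Z\oplus\Z_{12}$. For the torsion summand $\Z_{12}$, Proposition~\ref{prop-compat}(ii) contributes the topological target $K(\Z_{12},8)$. For the free summand $\Z$, Proposition~\ref{prop-compat}(iii) together with Remark~\ref{4sphralg} --- according to which the relevant CDGA extension adjoins $g_7$ with $dg_7=g_4^2$, rationally the quaternionic Hopf fibration --- places this stage in a pullback square whose right-hand map is $\Omega^7\xrightarrow{a}\mathbf{B}^7U(1)_\nabla$ and whose bottom edge refines simultaneously the topological $k$-invariant $\iota_4^2$ and the wedge square $g_4^2$. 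Since the Deligne-Beilinson cup product is the unique (up to homotopy) simultaneous refinement of the cup square in integral cohomology and the wedge square of forms (\cite{FSS14c}\cite{FSS-cup}\cite{Urs}), this bottom map is $\widehat{\iota}_4^{\,2}=\widehat{G}_4\cup_{\rm DB}\widehat{G}_4$; combining the two summands of $\pi_7$ yields the displayed $k$-invariant $(-,\widehat{\iota}_4^{\,2}):(\widehat{S}^{\,4})_2\to K(\Z_{12},8)\times\mathbf{B}^7U(1)_\nabla$, the fibers $K(\Z_{12},7)\times K(\Z,7)$ being read off from $\pi_7(S^4)$ exactly as in Corollary~\ref{dfposs4}. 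The main obstacle is this last identification --- pinning the refined free $k$-invariant to the Deligne-Beilinson cup square rather than to \emph{some} differential refinement --- which rests on the uniqueness of the DB product and on checking that the Sullivan generator $g_4$ is carried by $\smallint$ to the fundamental class $\widehat{\iota}_4$ of $\mathbf{B}^3U(1)_\nabla$; a secondary, purely bookkeeping point is reconciling the per-degree topological Postnikov filtration with the two-step filtration of $\mathfrak{s}^4$, which is precisely what Proposition~\ref{prop-compat}(ii) takes care of.
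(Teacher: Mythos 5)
Your proposal is correct and follows essentially the same route as the paper: the paper's proof simply cites Corollary~\ref{dfposs4} for the $\pi_7$-stage (where the Deligne--Beilinson square appears) and Proposition~\ref{prop-compat}(ii) for all the remaining, purely torsion $k$-invariants, which is exactly the decomposition you carry out in more detail. Your extra steps --- identifying $(\widehat{S}^{\,4})_0$ with $\mathbf{B}^3U(1)_\nabla$ and checking the homotopy groups of $S^4$ in range against \cite{To} --- are consistent with what the paper leaves implicit.
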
 
\proof
We have already identified the $k$-invariant at the second stage in Corollary \ref{dfposs4}. The other $k$-invariants are all torsion, hence these obstructions follows from Proposition \ref{prop-compat}, part (ii).
\hfill \endproof

\begin{remark}[The obstruction in M-theory via higher bundles with connections]
Note that locally the Deligne-Beilinson cup product in M-theory 
$\widehat{G}_4\cup_{\rm DB}\widehat{G}_4$ gives a 7-bundle with connection 
form locally given by  $C_3^{\rm form}\wedge G_4^{\rm form}$ \cite{E8}\cite{FSS1}\cite{FSS14c}\cite{FSS-cup}. 
From the identification of the $k$-invariant at the second stage in Proposition \ref{reftower} (the Deligne-Beilinson square), it follows that to lift past the 2nd stage in the Postnikov tower for $\widehat{S}^{\, 4}$, this connection must be globally defined. Explicitly, in terms of differential cohomology, we have 
$$
a(C_3^{\rm form}\wedge G_4^{\rm form})=\widehat{G}_4\cup_{\rm DB}\widehat{G}_4\;,
$$
where $a:\Omega^7(Y^{11})\to \widehat{H}^8(Y^{11})$ is the canonical map.
\end{remark}

\begin{remark}[The stable case]\label{stableprop-compat}
The above has been  the treatment in the unstable case, and the discussion goes through essentially verbatim in 
the stable setting, with minor modifications. The only nontrivial modification is to replace DGCAs 
with the correct algebraic model for $\R$-local spectra. The $\R$-localization of a spectrum is simply 
given by smashing with the real Eilenberg-MacLane spectrum $H\R$. \footnote{Note that the unit 
map $S\to H\Z$ induces an equivalence 
$S\R\simeq H\R$.} By the work of Shipley \cite{Shi}, these are equivalent to just differentially 
graded vector spaces. Since there are no non-trivial rational obstructions for $S^4$ (stably), 
these effects are not seen and we will not spell out these details. We simply note that the first 
two properties of Proposition \ref{dfposs4} hold equally well in the stable setting. 
This gives rise to the following proposition.
\end{remark} 

\medskip
\begin{prop}[Differential cohomotopy vs. differential cohomology] \label{cohvint-diff}
Let $Y^{11}$ be an 11-dimensional smooth manifold. Let $I:\widehat{H}^*(Y^{11})\to H^*(Y^{11};\Z)$ 
be the canonical map relating differential cohomology and integral cohomology. 
Then a class $\hat{a}\in \widehat{H}^4(Y^{11})$ lifts to a class $\hat{b}\in \widehat{\pi}_s^{\, 4}(Y^{11})$ 
if and only if the following conditions are satisfied.
\begin{enumerate}[{\bf (i)}]
\vspace{-2mm} 
\item $Sq^2I(\hat{a})\equiv 0 \mod 2$, ~~$\mathcal{P}^1_3I(\hat{a})\equiv 0 \mod 3$.
\vspace{-2mm} 
\item There is a lift $\hat{a}^{\prime}:Y^{11}\to (\widehat{S}^{\, 4})_1$ of $\hat{a}$ such 
that $\alpha_7I(\hat{a}^{\prime}) \equiv 0 \mod 2$.
\vspace{-2mm} 
\item There is a further lift $\hat{a}^{\prime\prime}:Y^{11}\to (\widehat{S}^{\, 4})_2$ such 
that $\beta_8I(\hat{a}^{\prime\prime}) \equiv 0 \mod 8$. In particular, upon mod 2 reduction, 
we have $Sq^4I(\hat{a})=I(\hat{a}\cup_{\rm DB}\hat{a})=a^2\equiv 0 \mod 2$.
\vspace{-2mm} 
\item There is a further lift $\hat{a}^{\prime\prime\prime}$ of $\hat{a}^{\prime\prime}$
 such that $P_{11}I(\hat{a}^{\prime\prime\prime})\equiv 0 \mod 2$. In particular, upon mod 2 reduction, 
 we have the tautological relation $Sq^8I(\hat{a})\equiv 0 \mod 2$.
\end{enumerate}
\end{prop}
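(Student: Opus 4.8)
The plan is to run the same obstruction-theoretic argument as in Proposition~\ref{cohvint}, but now along the \emph{stable differential} Postnikov tower $\{(\widehat{\Sigma^{\infty}S^{\,4}})_n\}$ in place of the topological one of Lemma~\ref{Lem-IntPost}. The essential structural input is Remark~\ref{stableprop-compat}: stably, every homotopy group of $S^4$ above degree $4$ is finite torsion and the $\R$-localization is $\Sigma^4H\R$, so there are no $\R$-local obstructions and every $k$-invariant of the tower above the bottom stage is a torsion class. Hence part~(ii) of Proposition~\ref{prop-compat} applies at \emph{every} stage: the $k$-invariant $(\widehat{\Sigma^\infty S^4})_{n}\to(\widehat{\Sigma^\infty S^4})_{n-1}$ factors as $(\widehat{\Sigma^\infty S^4})_{n}\xrightarrow{\ I\ }(\Sigma^\infty S^4)_n\xrightarrow{\ k\ }K(\pi^s_{n+1}(S^4),n+2)$, so the obstruction to lifting a differential cocycle through the next stage is the $I$-pullback of the purely topological obstruction. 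This is precisely what collapses the differential problem onto the integral one solved in Proposition~\ref{cohvint}.

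First I would record the truncation: since $Y^{11}$ is an $11$-manifold, $[Y^{11},K(A,m)]=H^m(Y^{11};A)=0$ for $m>11$, so a class $\hat a\in\widehat H^4(Y^{11})$ lifts all the way up the differential tower exactly when it lifts successively through the finitely many stages whose $k$-invariant lands in degree $\leq 11$; convergence of the tower (Proposition~\ref{prop-compat}(i), Corollary~\ref{dfposs4}(i)) then promotes a compatible system of such lifts to an honest lift of $\hat a$ to $\widehat{\pi}^{\,4}_s(Y^{11})$. By the integral tower of Lemma~\ref{Lem-IntPost} these stages are exactly the four obstructions in degrees $6,7,8,11$, together with the degree-$12$ obstruction which vanishes on $Y^{11}$ for dimension reasons.

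Then I would proceed stage by stage, reading off each obstruction as $k\circ I$ evaluated on the running lift. Stage~$0$: the $k$-invariant is $\big({\rm Sq}^2\rho_2 I,\ \mathcal P^1_3\rho_3 I\big)$, giving condition~(i). Stage~$1$: one must choose a lift $\hat a'\colon Y^{11}\to(\widehat{S}^{\,4})_1$, and the obstruction is $\alpha_7 I(\hat a')$, condition~(ii). Stage~$2$: the obstruction is $\beta_8 I(\hat a'')\in H^8(Y^{11};\Z_8)$, the mod-$8$ lift ``${\rm Sq}^4\iota_4$'' (paired with $\mathcal P^1_3$ at the prime~$3$), condition~(iii); here I would invoke Corollary~\ref{dfposs4}(iii) together with the fact that the Deligne--Beilinson cup product refines the integral cup product (\cite{FSS14c}\cite{FSS-cup}\cite{Urs}) to identify the mod-$2$ reduction as ${\rm Sq}^4I(\hat a)=I(\hat a\cup_{\rm DB}\hat a)=a^2$. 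Stage~$3$: the obstruction is $P_{11}I(\hat a''')\in H^{11}(Y^{11};\Z_2)$, condition~(iv), whose mod-$2$ reduction is the tautological ${\rm Sq}^8 I(\hat a)=0$. The converse is the same sequence run in reverse: assuming (i)--(iv), the existence of each successive lift is exactly Proposition~\ref{cohvint} transported through $I$, since every $k$-invariant of the stable differential tower is the $I$-pullback of its topological counterpart.

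The main obstacle I expect is bookkeeping of indeterminacies at the secondary and higher stages: the stage-$n$ obstruction is only well defined modulo the image of the previous fiber (the classes $\alpha_7$, $\beta_8$, $P_{11}$ are genuine secondary/tertiary operations, as emphasized around Lemma~\ref{Lem-IntPost}), so one must verify that the phrasing ``there is a lift such that \dots'' in (ii)--(iv) is precisely the correct statement and that this ambiguity is unchanged under differential refinement---which it is, again by Proposition~\ref{prop-compat}(ii), since at a torsion stage the entire fiber-and-$k$-invariant datum is pulled back from spaces via $I$. A secondary point to check is that no free/rational contribution intrudes: in the \emph{unstable} tower $\pi_7(S^4)$ has a free summand responsible for the Deligne--Beilinson square (Corollary~\ref{dfposs4}(iii)), but stably $\pi^s_7(S^4)\cong\pi^s_3\cong\Z_{24}$ is torsion, so in the present stable setting condition~(iii) is literally a mod-$8$ (hence mod-$2$) statement, with $\cup_{\rm DB}$ entering only through its underlying integral class under $I$.
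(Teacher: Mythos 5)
Your proposal is correct and follows essentially the same route as the paper: invoke the stable version of Proposition \ref{prop-compat} (via Remark \ref{stableprop-compat}), observe that all stable $k$-invariants of $S^4$ above the bottom stage are torsion so that part (ii) of that proposition makes every differential obstruction the $I$-pullback of its topological counterpart, and then quote Proposition \ref{cohvint}. Your added care about truncation in degree $\leq 11$, the indeterminacy of the secondary/tertiary operations, and the reconciliation of the Deligne--Beilinson square with the purely torsion stable stage are all consistent elaborations of what the paper leaves implicit.
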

\proof
We use the stable version of Proposition \ref{prop-compat} (see Remark \ref{stableprop-compat}). By the identification of the $k$-invariants in Lemma \ref{Lem-IntPost}, we see that all the invariants are torsion. Hence, part (ii) of Proposition \ref{prop-compat} implies that all the $k$-invariants are purely topological. Unwinding the statements in the present case, we see that the first obstruction is given by ${\rm Sq}^2I:\mathbf{B}^3U(1)_{\nabla}\to K(\Z_2,6)$ (the map $I$ is precisely the unit of the adjunction $\smallint$ appearing in Proposition \ref{prop-compat}). At higher levels, this follows immediately from the topological counterpart Proposition \ref{cohvint} and part (ii) Proposition \ref{prop-compat}, using the notation $I$ in place of $\smallint$.
\hfill \endproof

The following is essentially the differential refinement of Prop. \ref{cor-C}. 
Since the obstructions in the differential Postnikov tower are completely torsion 
in our case, the proof is verbatim the same as Prop. \ref{cor-C}, with 
$I(\widehat{C}_3)$ replacing $\widetilde{C}_3$. 

\begin{prop}
[Differential cohomotopy vs. differential cohomology for the C-field]  
\label{refcor-C}
Consider the differentially refined M-theory (shifted) C-field $\widehat{G}_4$ as an 
integral cohomology class in degree four. Then if $\widehat{G}_4$ lifts to 
a cohomotopy class ${\cal G}_4\in \widehat{\pi}^4(Y^{11})$ the following obstructions 
necessarily vanish 
\begin{enumerate}[{\bf (i)}] 
\vspace{-2mm}
\item ${\rm Sq}^2I(\widehat{G}_4)=0 \in H^6(Y^{11}; \Z_2)$. 
\vspace{-2mm}
\item $\mathcal{P}^1_3I(\widehat{G}_4)=0 \in H^8(Y^{11}; \Z_3)$.
\vspace{-2mm}
\item ${\rm Sq}^4I(\widehat{G}_4)=I(\widehat{G}_4\cup_{\rm DB}\widehat{G}_4)=0  
\in H^{8}(Y^{11}; \Z_{2})$.
\vspace{-2mm}
\item If $\widehat{G}_4=0$ and $C^{\rm form}_3$ is quantized, with differential 
refinement $\widehat{C}_3$, then we also have 
${\rm Sq}^3{\rm Sq}^1I(\widehat{C}_3)=0\in H^7(Y^{11};\Z_2)$.
\vspace{-2mm}
\item
If $dG^{\rm form}_7=G^{\rm form}_4\wedge G^{\rm form}_4=0$ and $G^{\rm form}_7$ is quantized, 
with differential refinement $\widehat{G}_7$, then we also have the condition 
 ${\rm Sq}^4I(\widehat{G}_7)=0\in H^{11}(Y^{11};\Z_2)$.
\end{enumerate}
\end{prop}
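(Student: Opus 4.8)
The plan is to reduce the statement entirely to the topological case treated in Proposition~\ref{cor-C}, using the structural results on the differential Postnikov tower already established. The key observation is that, by Corollary~\ref{dfposs4} together with part~(ii) of Proposition~\ref{prop-compat}, every $k$-invariant of the differential Postnikov tower of $\widehat{S}^{\,4}$ \emph{other than} the Deligne--Beilinson square at the second stage is purely topological: it equals the corresponding topological $k$-invariant precomposed with the canonical map $I$. Consequently, a lift of $\widehat{G}_4$ to $\widehat{\pi}^4(Y^{11})$ forces, stage by stage, a lift of $I(\widehat{G}_4)$ through the integral Postnikov tower of Lemma~\ref{Lem-IntPost}, and the argument of Proposition~\ref{cor-C} applies after the substitutions $\widetilde{G}_4 \mapsto I(\widehat{G}_4)$, $\widetilde{C}_3 \mapsto I(\widehat{C}_3)$, $\widetilde{G}_7 \mapsto I(\widehat{G}_7)$.

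Concretely, for items~(i)--(iii) I would simply invoke Proposition~\ref{cohvint-diff}; the only point deserving a word is that, by Corollary~\ref{dfposs4}(iii), the stage-two $k$-invariant projects onto $\widehat{G}_4 \cup_{\rm DB} \widehat{G}_4$, whose image under $I$ is the integral cup square of $I(\widehat{G}_4)$, which reduces mod $2$ to ${\rm Sq}^4 I(\widehat{G}_4)$. For item~(iv) I would set $\widehat{G}_4 = 0$: the differential trivialization $\widehat{C}_3$ yields, via $I$, an integral trivialization $I(\widehat{C}_3)$, hence --- using that the based loop of ${\rm Sq}^2\iota_4 : K(\Z,4)\to K(\Z_2,6)$ is ${\rm Sq}^2\iota_3 : K(\Z,3)\to K(\Z_2,5)$ by stability of ${\rm Sq}^2$ --- a lift of the zero class to the fiber $K(\Z_2,5)$ of $(\widehat{S}^{\,4})_1 \to \mathbf{B}^3U(1)_\nabla$, namely ${\rm Sq}^2 I(\widehat{C}_3)$; since $\alpha_7 I$ restricts fiberwise to ${\rm Sq}^2\iota_5$, the obstruction to lifting past the next stage is ${\rm Sq}^2{\rm Sq}^2 I(\widehat{C}_3) = {\rm Sq}^3{\rm Sq}^1 I(\widehat{C}_3)$ by the Adem relation ${\rm Sq}^2{\rm Sq}^2 = {\rm Sq}^3{\rm Sq}^1$. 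For item~(v), under the hypothesis $dG^{\rm form}_7 = G^{\rm form}_4 \wedge G^{\rm form}_4 = 0$ with refinement $\widehat{G}_7$, I would reduce $I(\widehat{G}_7)$ mod $24$ so as to identify it with a lift to the fiber $K(\Z_{24},7)$ in $K(\Z_{24},7) \to (\widehat{S}^{\,4})_3 \to (\widehat{S}^{\,4})_2$ of Proposition~\ref{reftower}; since $P_{11}$ restricts fiberwise to ${\rm Sq}^4\iota_7$ and this $k$-invariant is again purely topological, the obstruction is $i^* P_{11}\big(I(\widehat{G}_7)\big) = {\rm Sq}^4 I(\widehat{G}_7)$.

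The step I expect to require the most care --- and essentially the only genuine content --- is verifying that the canonical maps $I$ intertwine the fiber inclusions, transgressions, and Steenrod operations along the differential tower with their topological counterparts, so that the trivialization data $\widehat{C}_3$ and $\widehat{G}_7$ indeed produce lifts into the indicated fibers rather than merely abstract cohomology classes. This is precisely what part~(ii) of Proposition~\ref{prop-compat} delivers at each torsion stage of the tower of Proposition~\ref{reftower}: the differential $k$-invariant collapses to the topological one composed with $I$. Once this compatibility is recorded, the remaining manipulations --- the Adem relation and the mod-$24$ bookkeeping --- are identical to those in the proof of Proposition~\ref{cor-C}, and I would keep them correspondingly brief, so that the final write-up is short and consists mostly of pointers to Propositions~\ref{cohvint-diff}, \ref{reftower}, \ref{prop-compat} and to the proof of Proposition~\ref{cor-C}.
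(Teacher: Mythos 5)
Your proposal is correct and follows essentially the same route as the paper: the authors explicitly state that, since all the relevant obstructions in the differential Postnikov tower are torsion and hence purely topological by Proposition~\ref{prop-compat}(ii), the proof is verbatim that of Proposition~\ref{cor-C} with $I(\widehat{C}_3)$ replacing $\widetilde{C}_3$ (and likewise for $\widehat{G}_4$, $\widehat{G}_7$). Your added care about the stage-two Deligne--Beilinson square and the intertwining of $I$ with the fiber inclusions is a reasonable elaboration of the same argument, not a different one.
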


\begin{remark} [Interpretation and congruences]
Among the new conditions, Prop. \ref{refcor-C} reproduces the correct mod 2 congruence condition 
for $G_4^{\rm form}\wedge G_4^{\rm form}$, previously obtained using $E_8$-gauge theory 
in \cite{flux}. As indicated in Remark \ref{congM-pat}, we can also obtain the mod 3 congruence 
by considering the top obstruction on a closed 12-manifold $Z^{12}$. 
\end{remark}

\begin{remark}[Differential cohomotopy first contribution to the C-field] 
The interpretation of the degree 5 class in Remark \ref{Rem-pure5} holds equally in the differential case 
and is closely related to the condition ${\rm Sq}^3{\rm Sq}^1I(\widehat{C}_3)$ in Prop. \ref{refcor-C}. 
Indeed, using the Adem relation ${\rm Sq}^3{\rm Sq}^1={\rm Sq}^2{\rm Sq}^2$, we note that this 
condition comes from restricting the secondary obstruction $\alpha_7$ to the fiber, where it acts on the 
degree 5 class ${\rm Sq}^2I(\widehat{C}_3)$ by ${\rm Sq}^2$. 
\end{remark}

\begin{example}[Differential cohomotopy of flux compactification spaces]
We consider the differential cohomotopy of the spacetime backgrounds computed in 
Examples \ref{Flux-ex}. First observe that, by the general machinery of differential 
refinements of generalized cohomology (see \cite{GS3}\cite{GS7}), 
we have a long exact sequence in stable cohomotopy
\vspace{-2mm} 
$$
\xymatrix{
\hdots \ar[r]& \pi_s^3(X)\ar[r]^-{{\rm deg}}& 
\Omega^3(X)\ar[r] & \widehat{\pi}^{\, 4}_s(X)\ar[r] & \pi_s^4(X)\ar[r] & \hdots 
}.
$$
We will use this exact sequence to compute some examples. 
\begin{enumerate}[{\bf (i)}]
\vspace{-2mm} 
\item \underline{$\widetilde{\rm AdS}_7\times \R P^4$}: 
Here we observe that the cofiber sequence 
$
\R P^3\to \R P^4\overset{q_4}{\to} S^4
$.
This gives rise to an exact sequence in cohomology
$$
\xymatrix{
\Z\ar[r]^-{\times 2} & \Z\cong H^4(S^4;\Z) \ar[r] & H^4(\R P^4;\Z)\ar[r]& 
0}
$$
from which we learn that the pullback of the fundamental class of $S^4$ by $q_4$ is the generator 
of $H^4(\R P^4;\Z)\cong \Z_2$. This gives an isomorphism 
$$
\pi^4(\R P^4)\cong H^4(\R P^4;\Z), ~~~~ q_4\mapsto q_4^*\iota_4\;.
$$
Now from the cofiber sequence above, we also compute $\pi^3(\R P^4)\cong 0$.
We therefore have a short exact sequence
$$
\xymatrix{
0\ar[r]& \Omega^3(\R P^4)\ar[r] & \widehat{\pi}^4(\R P^4)\ar[r] & \pi^4(\R P^4)\cong \Z_2\ar[r] & 0
}.
$$
The Five Lemma produces an isomorphism $\widehat{\pi}^{\, 4}(\R P^4)\cong \widehat{H}^4(\R P^4)$. 
Using the fact that ${\rm AdS}_7$ is topologically trivial, this also implies an isomorphism 
$\widehat{\pi}_s^{\, 4}\big(\widetilde{{\rm AdS_7}}\times \R P^4\big)\cong
 \widehat{H}^4\big(\widetilde{{\rm AdS_7}}\times \R P^4\big)$.

\vspace{-2mm} 
\item \underline{$\widetilde{\rm AdS}_4\times \C P^2$}: Here we recall that, 
in the topological case, $\pi^4(\C P^2)\cong \Z$ with generator $[q_2]$. The `realification' map 
$\Z\cong \pi^4(\C P^2)\to \pi^4(\C P^2)\otimes \R\cong \R$ is the canonical inclusion. 
It is easy to check that pullback by $q_2:\C P^2\to S^4$ induces an isomorphism on $H^4$. 
Hence, $\Omega^4(\C P^2)\to \pi^4(\C P^2)\otimes \R$ maps a closed form $\omega_4$, 
generating $H^4(\C P^2;\R)\cong \R$ to the generator $[q_2]$. Using the Hopf fibration, 
one can show that $\pi^3(\C P^2)\cong 0$. In this case, these considerations lead to a short 
exact sequence 
$$
\xymatrix{
0\ar[r] & \Omega^3(\C P^2)\ar[r] & \widehat{\pi}_s^{\, 4}(\C P^2)\ar[r] & \pi^4(\C P^2)\ar[r] & 0
},
$$
and the Five Lemma produces an isomorphism 
$\widehat{\pi}_s^{\, 4}(\C P^2)\cong \widehat{H}^4(\C P^2)$.
Using the fact that ${\rm AdS}_4$ is topologically trivial, this also implies an isomorphism 
$\widehat{\pi}_s^{\, 4}\big(\widetilde{{\rm AdS_4}}\times \C P^2\big)\cong
 \widehat{H}^4\big(\widetilde{{\rm AdS_4}}\times \C P^2\big)$.

\vspace{-2mm} 
\item \underline{$\widetilde{\rm AdS}_4\times \C P^2\times T^2$}: In this case, $T^2$ does not 
contribute to $\pi^4$ or $\pi^3$ topologically (as in Examples \ref{Flux-ex}). 
Then the same argument as in part {\bf (ii)} above gives 
$$
\widehat{\pi}_s^{\, 4}\big(\widetilde{\rm AdS}_4\times \C P^2\times T^2\big)\cong 
\widehat{H}^4\big(\widetilde{\rm AdS}_4\times \C P^2\times T^2\big)\;.
$$

\item \underline{$\widetilde{\rm AdS}_4 \times \R P^5 \times T^2$}: 
As noted in part {\bf (iv)} of Examples \ref{Flux-ex}, 
$\pi^4(\R P^5)$ is order 4, either $\Z_4$ or $\Z_2 \times \Z_2$, while 
$H^4(\R P^5; \Z)\cong \Z_2$. From \cite{We}, $\pi^3(\R P^5)$ 
is finite. We therefore have a short exact sequence
$$ 
\xymatrix{
0\ar[r] & \Omega^3(\R P^5)\ar[r] & \widehat{\pi}^{\,4}(\R P^5)\ar[r] & \pi^4(\R P^5)\ar[r] & 0
}.
$$
Since $\pi^4(\R P^5)$ is generated by $q_5\eta_4$, with $\eta_4:S^5\to S^4$ the two-fold suspension 
of the Hopf map, the induced map on $H^4$ necessarily vanishes. Hence, in this case, differential cohomotopy 
yields considerably different information than ordinary differential cohomology. 
\end{enumerate}
\end{example}

%%%%%%%%%%%%%%%

  \vspace{.5cm}
\noindent Daniel Grady, {\it Department of Mathematics, Texas Tech University, Lubbock, TX 79409, USA.}

 \medskip
\noindent Hisham Sati, {\it Mathematics, Division of Science, New York University Abu Dhabi, UAE.}

\end{document}